\newtheorem{theorem}{Theorem}
\newtheorem{corollary}[theorem]{Corollary}
\newtheorem{definition}{Definition}
\newtheorem{lemma}[theorem]{Lemma}
\newtheorem{property}[theorem]{Property}
\def\Car{\ensuremath{\mathit{Car}}}
\def\HSS{\ensuremath{\mathit{csize}}}
\def\setcon{\ensuremath{\mathit{setcon}}}
\def\facet{\ensuremath{\mathit{facet}}}
\def\facets{\ensuremath{\mathit{facets}}}
\def\faces{\ensuremath{\mathit{faces}}}
\def\Sub{\ensuremath{\mathit{Sub}}}
\def\Cl{\ensuremath{\mathit{Cl}}}
\def\St{\ensuremath{\mathit{St}}}
\def\Pc{\ensuremath{\mathit{Pc}}}
\newcommand{\myparagraph}[1]{\vspace{6pt}\noindent \textbf{#1}}
\newcommand{\remove}[1]{}
\newcommand{\ignore}[1]{}
\def\A{\ensuremath{\mathcal{A}}}
\def\R{\ensuremath{\mathcal{R}}}
\def\I{\ensuremath{\mathcal{I}}}
\def\O{\ensuremath{\mathcal{O}}}
\def\C{\ensuremath{\mathcal{C}}}
\def\fair{\textit{fair}}
\def\HSS{\mathit{csize}}
\def\s {\mathbf{s}}
\def\t {\mathbf{t}}
\def\Chr{\operatorname{Chr}}
\def\O {\mathcal{O}}
\def\I {\mathcal{I}}
\def\Skel {\operatorname{Skel}}
\def\x{\mathbf{x}}
\def\Car{\mathit{carrier}}
\def\IStwo{\ensuremath{\mathit{IS2}}}
\def\ISone{\ensuremath{\mathit{IS1}}}
\def\CS{\ensuremath{\mathcal{CS}_\alpha}}
\def\CSM{\ensuremath{\mathcal{CSM}_\alpha}}
\def\CSV{\ensuremath{\mathcal{CSV}_\alpha}}
\def\IS{\textit{IS}}
\renewcommand\th{^{\text{th}}}
\begin{document}

\title{An Asynchronous Computability Theorem for Fair Adversaries\footnote{This work has been supported by the Franco-German DFG-ANR Project DISCMAT (14-CE35-0010-02) devoted to connections between mathematics and distributed computing.}}

\author[1]{Petr Kuznetsov}
\author[1]{Thibault Rieutord}
\author[2]{Yuan He}
\affil{LTCI, T\'el\'ecom ParisTech, Universit\'e Paris-Saclay, Paris, France\\
  \texttt{\{petr.kuznetsov,thibault.rieutord\}@telecom-paristech.fr}}
\affil[2]{UCLA, Los Angeles, USA\\
 \texttt{yuan.he@cs.ucla.edu}}

%%%%%%%%%%%%%%%%%%%%%%%%%%%%%%%%%%%%%%%%%%%%%%%%%%%%%%%%%%%%%%%%%%%%%%%%%%%%%%%%
\date{}
\maketitle
\begin{abstract}
This paper proposes a simple topological characterization of
a large class of \emph{fair} adversarial models via \emph{affine tasks}: 
sub-complexes of the second iteration of the standard chromatic subdivision.
We show that the task computability of a model in the class 
is precisely captured by iterations of the corresponding affine task.
Fair adversaries include, but are not restricted to, the models of
wait-freedom, $t$-resilience, and $k$-concurrency.  
Our results generalize and improve all previously derived
topological characterizations of the ability of a model to solve
distributed tasks.  

\end{abstract}

%\thispagestyle{empty}
%\newpage
%
%\setcounter{page}{1}

\section{Introduction}

Distributed computing deals with a jungle of models, parameterized by 
types of failures, synchrony assumptions and communication primitives.
Determining relative computability power of these models (``is model
$A$ more powerful than model $B$?'') is an intriguing and important
problem.

This paper deals with \emph{shared-memory} models in which a set of
\emph{crash-prone} \emph{asynchronous} processes communicate
via invoking operations on a collection of shared objects, 
which, by default, include atomic read-write registers.

\paragraph*{Topology of wait-freedom.} 
The \emph{wait-free} model~\cite{Her91} 
makes no assumptions on when and where failures might occur.
Herlihy and Shavit proposed an elegant 
characterization of wait-free (read-write) task computability via the
existence of a specific \emph{simplicial} map from geometrical
structures describing inputs and outputs~\cite{HS99}.

A task $T$ has a wait-free solution using read-write registers  
if and only if there exists a simplicial, chromatic map from some
\emph{subdivision}  of the \emph{input simplicial complex}, describing
the inputs of $T$, to the \emph{output simplicial complex}, describing
the outputs of $T$, respecting the task specification $\Delta$.
In particular, we can choose this subdivision to be a number of
iterations of the \emph{standard chromatic} subdivision (denoted~$\Chr$, Figure~\ref{Fig:scs}).

Therefore, the celebrated \emph{Asynchronous Computability Theorem (ACT)}~\cite{HS99}
can be formulated~as:
\begin{quoting}
A task $T=(\I,\O,\Delta)$, where $\I$ is the input complex, $\O$ is an
output complex, and $\Delta$ is a carrier map from $\I$ to sub-complexes of
$\O$, is wait-free solvable 
if and only if there exists a natural number $\ell$ and a simplicial map~
$\phi: \Chr^{\ell}(\I) \rightarrow \O$ carried by $\Delta$
(informally, respecting the task specification $\Delta$).
\end{quoting}

\begin{figure}
\captionsetup[subfigure]{justification=centering}
  \begin{minipage}[b]{.49\linewidth}
    \centering
\includegraphics[scale=0.54]{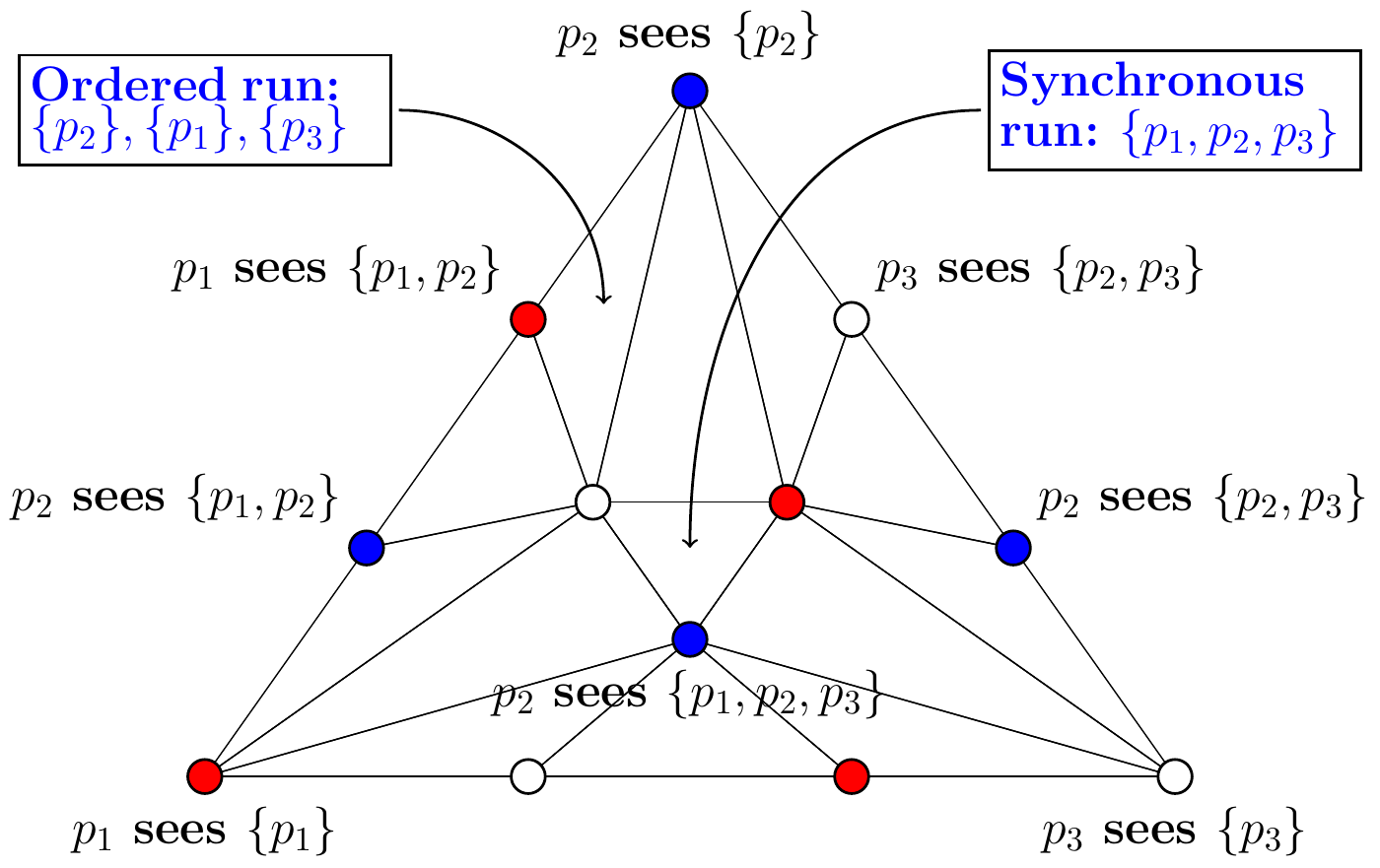}
   \subcaption{\label{Fig:scs}$\Chr(\s)$, the standard chromatic
     subdivision of a $2$-simplex, the output complex of the $3$-process $\IS$
     task.}
  \end{minipage}
  \hfill
  \begin{minipage}[b]{.49\linewidth}
    \begin{center}
\includegraphics[scale=0.59]{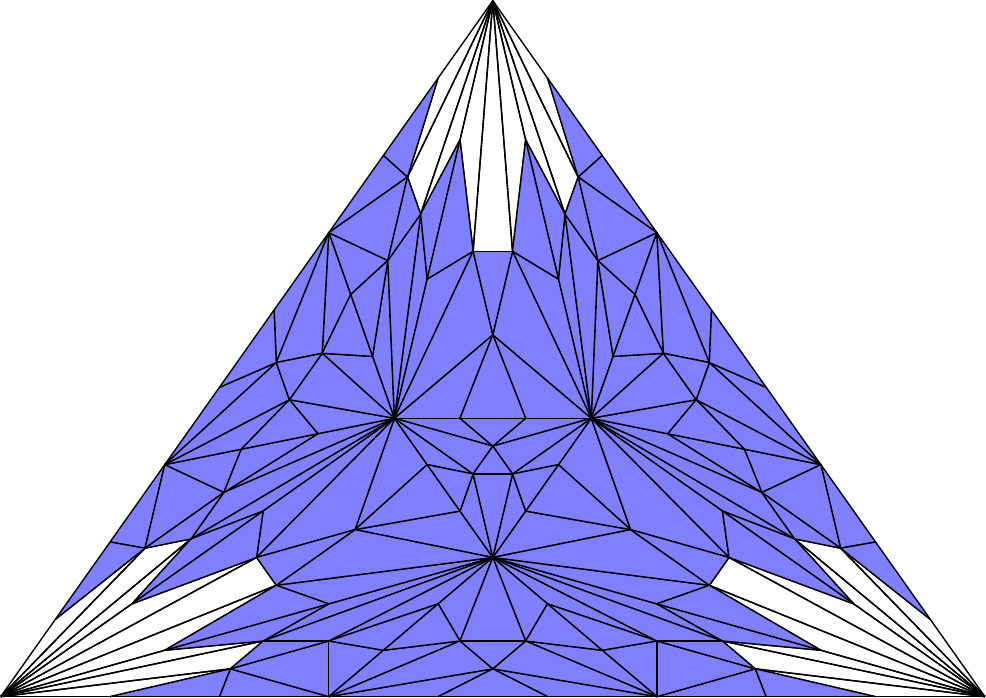}
     \subcaption{\label{Fig:R-1}In blue, $\R_{1-\mathit{res}}$, the affine task of $1$-resilience for a $3$-process system.
     }
    \end{center}
  \end{minipage}
  \caption{\small{Standard chromatic subdivision and affine task example.}}
\end{figure}

The output complex of the \emph{immediate snapshot} (IS) task is 
precisely captured by $\Chr$~\cite{BG97}. By solving the $\IS$ task iteratively, 
where the current iteration output is used as the input value for the next one, 
we obtain the \emph{iterated} immediate snapshot (IIS) model, 
captured by iterations of $\Chr$.
The ACT theorem can thus be interpreted as: the set of wait-free (read-write) 
solvable task is precisely the set of tasks solvable in the IIS model. 
The ability of (iteratively) solving the $\IS$ task thus allows us to solve
any task in the wait-free model.
Hence, from the task computability perspective, the $\IS$ task is a finite 
representation of the wait-free model.  

\paragraph*{Adversaries.}
Given that many fundamental tasks are not solvable in the wait-free
way~\cite{BG93b,HS99,SZ00}, more general models were considered.  
Delporte et al.~\cite{DFGT11}  introduced the notion of an \emph{adversary},  
a collection~$\A$ of process subsets, called \emph{live sets}.
A run is in the corresponding \emph{adversarial $\A$-model} if the set of
processes taking infinitely many steps in it is a live set of $\A$.
For example, the \emph{$t$-resilient} $n$-process model is defined via
an adversary $\A_{t-\mathit{res}}$ that consists
of all process subsets of size $n-t$ or more.
$\A_{t-\mathit{res}}$ is \emph{superset-closed}~\cite{Kuz12},
as it contains all supersets of its elements.

Saraph et al.~\cite{SHG16} recently proposed a direct characterization
of $t$-resilient task computability via a specific task~$\R_{t-\mathit{res}}$.
The task is defined as a restriction of the \emph{double} immediate snapshot
task: the output complex of the task is a sub-complex consisting of
\emph{all} simplices of the second degree of the standard chromatic
subdivision of the task's input complex, except the simplices 
adjacent to the $(n-t-1)$-skeleton of the input complex. 
Intuitively the output complex of $\mathcal{R}_{t-\mathit{res}}$ contains all of 
$2$-round $\IS$ runs in which every process ``sees'' at least $n-t-1$ other processes.
We call such tasks \emph{affine}~\cite{GKM14,GHKR16}, as
the geometrical representation of their output complexes are
unions of affine spaces.  
An affine task consists in solving a (generalized) simplex
agreement~\cite{BG97,HS99} on the corresponding sub-complex of~$\Chr^2\s$.

Figure~\ref{Fig:R-1} depicts the output complex of $\R_{1-\mathit{res}}$,
the affine task for the $3$-process $1$-resilient model. 

Solving a task $T$ in the $t$-resilient model is then equivalent 
to finding a map from iterations of $\R_{t-\mathit{res}}$ 
(applied to the input complex of $T$) to the output complex of $T$.   

Similarly, the affine task of the \emph{$k$-obstruction-free} adversary, consisting of all
process subsets of  size at most $k$, was recently determined by Gafni
et al.~\cite{GHKR16}. Note that such an adversary is
\emph{symmetric}~\cite{Tau10}, %(as the $t$-resilient),
as it only depends on the \emph{sizes}
of live sets, and not on process identifiers.
Unlike $\A_{t-\mathit{res}}$ (which is also symmetric), the
$k$-obstruction-free one is not superset-closed.

\paragraph*{Topology of fair adversaries.}
In this paper, we present a compact topological characterization
of the large class of \emph{\fair} adversarial models~\cite{KR17}.
Informally, an adversary is fair if a subset of the participating processes $P$
cannot achieve better set consensus than $P$.
Fair adversaries subsume, but are not restricted to, symmetric and superset-closed ones.

We define an affine task $\R_{\A}$ capturing the task computability 
of any fair (adversarial) $\A$-model.  
Our characterization can be put as the following generalization of the
ACT~\cite{HS99}:
\begin{quoting}
A task $T=(\I,\O,\Delta)$ is solvable in a fair adversarial $\A$-model
if and only if there exists a natural number $\ell$ and a simplicial map
$\phi: \R_{\A}^{\ell}(\I) \rightarrow \O$ carried by $\Delta$.
\end{quoting}

This result generalizes all existing topological characterizations of
distributed computing models~\cite{HS99,GKM14,GHKR16,SHG16},
as it applies to \emph{all} fair adversaries (and not only
$t$-resilient and $k$-obstruction-free) and \emph{all} tasks (and not
only \emph{colorless}).

Figure~\ref{Fig:related} shows adversary classes
and summarizes the results of this paper along with earlier affine characterizations.

We believe that the results can be extended to
all ``practical'' restrictions of the wait-free model, beyond fair
adversaries, which may result in a complete computability
theory for distributed computing shared-memory models.

\paragraph*{Compact models.}
Our affine-task formalism  enables a \emph{compact} representation of
a distributed computing model.     
Intuitively, assuming the conventional ``longest-prefix'' metric~\cite{AS85}, 
a model $M$, as a set of infinite runs, is compact if it contains its limit points: 
if every prefix of an infinite run \emph{complies} with~$M$ 
(i.e., can be extended to a run in $M$), then the run is in $M$. 
$M$ can then be viewed as a
\emph{safety property}~\cite[Chap.~8]{Lyn96}: 
to check if a run is in $M$, it is sufficient to check whether each
of its finite prefixes complies with $M$.

Most adversarial models are non-compact.
For example, the $1$-obstruction-free $2$-process model is compliant with all finite runs,
but among the infinite ones---only those in which exactly one process runs solo from
some point on are in the model.
Similarly, consider an infinite solo run in which exactly one process takes steps.
All finite prefixes of this run complies with the $1$-resilient $3$-process model,
but the run itself is not in the model. 
 
In contrast, the affine model $L^*$, defined as the subset of infinite IIS
runs resulting by iterating an affine task $L$ is, by construction, compact.
By a simple application of K\"onig's lemma, we can easily show that every task
solvable in an affine model is solvable in a \emph{bounded} number of
IIS rounds, i.e., in finitely many finite runs.
In a non-compact model, such as the model of $1$-resilience, many tasks can only be solved
in arbitrarily long runs, hence, to check if a solution is correct, we
might have to consider infinitely many infinite runs.
Thus, working in an equivalent affine model may be attractive from the
verification viewpoint.  

\begin{figure}
\center
\includegraphics[scale=1.2]{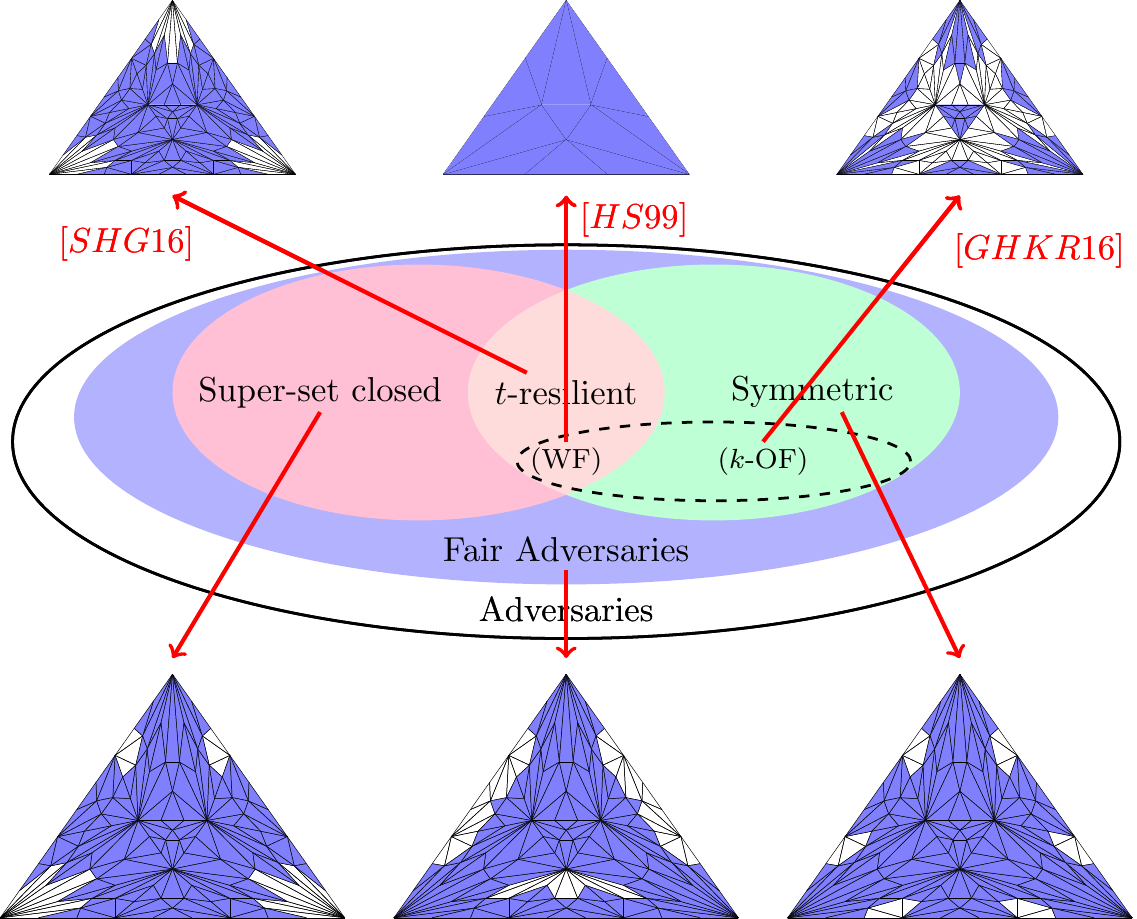}
\caption{\label{Fig:related} Earlier topological
  characterizations of the wait-free~\cite{HS99},
$t$-resilient~\cite{SHG16} and $k$-obstruction-free~\cite{GHKR16} models,
and our contribution: affine tasks for all fair adversaries.} 
\end{figure}

\paragraph*{Roadmap.} 
Section~\ref{sec:model} describes our model.  
Section~\ref{sec:adv} recalls the definitions of adversarial models and 
presents agreement functions.    
Section~\ref{sec:def} defines the affine task $\R_{\A}$ for 
a fair adversary $\A$.
In Section~\ref{sec:algo}, we show that $\R_{\A}^*$ %, the iterated $\R_\A$ model, 
can be simulated in the adversarial $\A$-model. 
In Section~\ref{sec:alpha}, we show that any task solvable in 
the $\A$-model can be solved in~$\R_{\A}^*$.
Section~\ref{sec:related} reviews related work and
Section~\ref{sec:disc} concludes the paper.

%======================
\section{Preliminaries}
\label{sec:model}
%======================

We assume a system of $n$ asynchronous processes, 
$\Pi=\{p_1,\ldots,p_n\}$. Two models of communication are considered: 
(1)~\emph{atomic snapshots}~\cite{AADGMS93} and 
(2)~\emph{iterated immediate snapshots}~\cite{BG97,HS99}.

\paragraph*{Atomic snapshot models.}
The atomic-snapshot (AS) memory is represented as a vector of $n$ 
shared variables, where each process $p_i$ is associated with the 
position $i$. The memory can be accessed with two operations: \emph{update} and 
\emph{snapshot}. An \emph{update} operation performed by $p_i$ 
modifies the value at position $i$ 
and a \emph{snapshot} returns the vector current state.

A \emph{protocol} is a deterministic distributed automaton that, for each 
process and each its local state, stipulates which operation and 
state transition the process may perform. 
A \emph{run} of a protocol is a possibly infinite 
sequence of alternating states and operations.
An AS \emph{model} is a set of infinite runs.

In an infinite run of the AS model,
a process that takes only  finitely many steps
is called \emph{faulty}, otherwise it is called 
\emph{correct}. We assume that in its first step, a process 
shares its initial state using the \emph{update} operation. 
If a process completed this first step in a given run, it is said to be
\emph{participating}, and the set of participating processes is called
the \emph{participating set}.
Note that every correct process is participating.

\paragraph*{Iterated Immediate Snapshot model.}
In the iterated immediate snapshot (IIS) model,
processes proceed through an infinite sequence of independent 
memories~$M_1, M_2,\ldots$. Each memory~$M_r$ is accessed 
by a process with a single \emph{WriteSnapshot} 
operation~\cite{BG93a}: the operation performed by~$p_i$ takes a
value~$v_{ir}$ and returns a set~$V_{ir}$ of submitted values 
(w.l.o.g, values of different processes are distinct), 
satisfying the following properties (See Figure~\ref{fig:ISexamples} for IS examples): 
\begin{itemize}
\item self-inclusion: $v_{ir} \in V_{ir}$; 
\item containment: $(V_{ir}\subseteq V_{jr}) \vee (V_{jr}\subseteq V_{ir})$;
\item immediacy: $v_{ir} \in V_{jr}$ $\Rightarrow$ $V_{ir}\subseteq V_{jr}$.
\end{itemize}

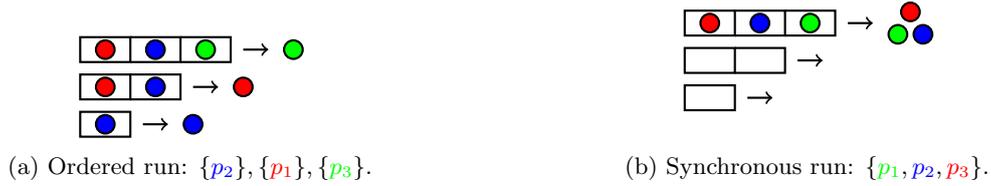
\begin{figure}
\captionsetup[subfigure]{justification=centering}
\begin{minipage}[b]{0.49\linewidth}
\begin{center}
    \begin{tikzpicture}[scale=0.33]
	\draw[thick] (0,5) -- (2,5) -- (2,4) -- (0,4) -- cycle;
	\draw (1,4.5) node[circle,inner sep=0pt,minimum size=7, fill=red,draw,thick] {};
	\draw[thick] (2,5) -- (4,5) -- (4,4) -- (2,4);
	\draw (3,4.5) node[circle,inner sep=0pt,minimum size=7, fill=blue,draw,thick] {};
	\draw[thick] (4,5) -- (6,5) -- (6,4) -- (4,4);
	\draw (5,4.5) node[circle,inner sep=0pt,minimum size=7, fill=green,draw,thick] {};
	\draw[->, thick] (6.5,4.5) -- (7.5,4.5);
	\draw (8.5,4.5) node[circle,inner sep=0pt,minimum size=7, fill=green,draw,thick] {};
	
	\draw[thick] (0,3.5) -- (2,3.5) -- (2,2.5) -- (0,2.5) -- cycle;
	\draw (1,3) node[circle,inner sep=0pt,minimum size=7, fill=red,draw,thick] {};
	\draw[thick] (2,3.5) -- (4,3.5) -- (4,2.5) -- (2,2.5);
	\draw (3,3) node[circle,inner sep=0pt,minimum size=7, fill=blue,draw,thick] {};
	\draw[->, thick] (4.5,3) -- (5.5,3);
	\draw (6.5,3) node[circle,inner sep=0pt,minimum size=7, fill=red,draw,thick] {};
	
	\draw[thick] (0,2) -- (2,2) -- (2,1) -- (0,1) -- cycle;
	\draw (1,1.5) node[circle,inner sep=0pt,minimum size=7, fill=blue,draw,thick] {};
	\draw[->, thick] (2.5,1.5) -- (3.5,1.5);
	\draw (4.5,1.5) node[circle,inner sep=0pt,minimum size=7, fill=blue,draw,thick] {} ;
	
	\end{tikzpicture}
	\subcaption{\small Ordered run: $\{\color{blue}p_2\color{black}\},\{\color{red}p_1\color{black}\},\{\color{green}p_3\color{black}\}$.}
    \end{center}
\end{minipage}
    \hfill
    \begin{minipage}[b]{0.49\linewidth}
    \begin{center}
\begin{tikzpicture}[scale=0.33]
	\draw[thick] (0,5) -- (2,5) -- (2,4) -- (0,4) -- cycle;
	\draw (1,4.5) node[circle,inner sep=0pt,minimum size=7, fill=red,draw,thick] {};
	\draw[thick] (2,5) -- (4,5) -- (4,4) -- (2,4);
	\draw (3,4.5) node[circle,inner sep=0pt,minimum size=7, fill=blue,draw,thick] {};
	\draw[thick] (4,5) -- (6,5) -- (6,4) -- (4,4);
	\draw (5,4.5) node[circle,inner sep=0pt,minimum size=7, fill=green,draw,thick] {};
	\draw[->, thick] (6.5,4.5) -- (7.5,4.5);
	\draw (8.5,4.05) node[circle,inner sep=0pt,minimum size=7, fill=green,draw,thick] {};
	\draw (9.5,4.05) node[circle,inner sep=0pt,minimum size=7, fill=blue,draw,thick] {} ;
	\draw (9,4.95) node[circle,inner sep=0pt,minimum size=7, fill=red,draw,thick] {};
	
	\draw[thick] (0,3.5) -- (2,3.5) -- (2,2.5) -- (0,2.5) -- cycle;
	\draw[thick] (2,3.5) -- (4,3.5) -- (4,2.5) -- (2,2.5);
	\draw[->, thick] (4.5,3) -- (5.5,3);
	
	\draw[thick] (0,2) -- (2,2) -- (2,1) -- (0,1) -- cycle;
	\draw[->, thick] (2.5,1.5) -- (3.5,1.5);
	\end{tikzpicture}
    \end{center}
	\subcaption{\small Synchronous run: $\{\color{green}p_1\color{black},\color{blue}p_2\color{black},\color{red}p_3\color{black}\}$.}
\end{minipage}
\caption{Examples of valid sets of IS outputs.\label{fig:ISexamples}}
\end{figure}

In the IIS communication model, we assume that processes run the 
\emph{full-information} protocol, in which, the first value each process 
writes is its \emph{initial state}. For each~$r>1$, the outcome of 
the WriteSnapshot operation on memory $M_{r-1}$ is submitted as 
the input value for the WriteSnapshot operation on~$M_r$. 
There are no failures in the IIS model,
all processes go through infinitely many~$\IS$ instances. 

Note that the wait-free AS model and the IIS model are equivalent as regards 
task solvability~\cite{BG93a,HKR14}.

\paragraph*{Tasks.}
In this paper, we focus on distributed \emph{tasks}~\cite{HS99}. 
A process invokes a task with an \emph{input} value and the task 
returns an \emph{output} value, so that the inputs and the outputs 
across the processes respect the task specification.
Formally, a \emph{task} is defined through a set~$\I$ of input 
vectors (one input value for each process), a set~$\O$ of output 
vectors (one output value for each process), and a total 
relation~$\Delta:\I\mapsto 2^{\O}$ that associates each input 
vector with a set of possible output vectors.
We require that $\Delta$ is a \emph{carrier} map: $\forall \rho,\sigma
\in \I$, $\rho\subseteq\sigma$: $\Delta(\rho)\subseteq \Delta(\sigma)$.  
An input~$\bot$ denotes a \emph{non-participating} process and an output 
value~$\bot$ denotes an \emph{undecided} process. 
Check~\cite{HKR14} for more details on the definition.

In the \emph{$k$-set consensus} task~\cite{Cha90}, 
input values are in a set of values $V$ ($|V|\geq k+1$), output values
are in~$V$, and for each input vector $I$ and output vector $O$, 
$(I,O) \in\Delta$ if the set of non-$\bot$ values in~$O$ is 
a subset of values in $I$ of size at most $k$. 
The case of $1$-set consensus is called \emph{consensus}~\cite{FLP85}. 

A protocol solves a task $T=(\I,\O,\Delta)$ in a model $M$, 
if it ensures that in every run of~$M$ in which processes 
start with an input vector $I\in\I$, there is a finite
prefix $R$ of the run in which: (1)~decided values form a 
vector $O\in\O$ such that~$(I,O)\in\Delta$, and (2)~all 
correct processes decide. 
Hence, in the IIS model, all processes must decide.

\paragraph*{Simplicial complexes.}
We use the standard language of \emph{simplicial
  complexes}~\cite{Spanier,HKR14} to give a combinatorial
representation of the IIS model.  
A \emph{simplicial complex} is defined as a set of 
\emph{vertices} and an inclusion-closed set of vertex subsets, 
called \emph{simplices}. 
The \emph{dimension} of a simplex $\sigma$ is 
$|\sigma|-1$, and any subset of $\sigma$ is one of its \emph{faces}. 
We denote by~$\s$ the {\em standard $(n-1)$-simplex}: a fixed set of $n$ vertices
and all its subsets. 

Given a complex $K$ and a simplex~$\sigma\in K$, 
$\sigma$ is a \emph{facet} of $K$, 
denoted $\facet(\sigma,K)$, if $\sigma$ is not a face of any stricly larger simplex in $K$.
Let $\facets(K)= \{\sigma\in K,\facet(\sigma,K)\}$.
A simplicial complex is \emph{pure} (of dimension~$n$) if all its facets have 
dimension $n$.   
A simplicial complex is \emph{chromatic} if it is equipped with 
a \emph{coloring function} --- a non-collapsing simplicial map~$\chi$ from its vertices to $\s$, in 
one-to-one correspondence with $n$ {\em colors}. 
In our setting, colors correspond to processes identifiers.

The \emph{closure} of a set of simplices  $S$, $\Cl(S)$, 
is the complex formed by all faces of simplices in $S$, i.e., 
$\bigcup_{\sigma\in S} \faces(\sigma)$.
Given a complex~$K$, the \emph{star} of $S\subseteq K$ in $K$, $\St(S,K)$, 
is the set of all simplices in $K$ having a simplex from $S$ as a face, i.e., 
$\{\sigma\in K|\faces(\sigma)\cap S\neq\emptyset\}$.
Given a pure complex~$K$, we also define a new construct 
that we call the \emph{pure complement} of $S\subseteq K$ in $K$, 
$\Pc(S,K)$. It is the maximal pure sub-complex of~$K$ 
of the same dimension as $K$ which does not intersect with~$S$, i.e., 
$\Cl(\{\sigma\in\facets(K)|\faces(\sigma)\cap S=\emptyset\})$.

\paragraph*{Standard chromatic subdivision and IIS}
The \emph{standard chromatic subdivision}~\cite{HS99} of a complex $K$, $\Chr K$ 
($\Chr\s$ is depicted in Figure~\ref{Fig:scs}), 
is a complex where vertices of $\Chr K$ are couples $(c, \sigma)$,
where $c$ is a color and~$\sigma$ is a face of~$K$ containing a vertex of color~$c$. 
Simplices of $\Chr K$ are the sets of vertices $(c_1,\sigma_1)$,~$\ldots$,~$(c_m,\sigma_m)$  
associated with distinct colors (i.e., $\forall i,j$, $c_i\neq c_j$)
such that the $\sigma_i$ satisfies the containment and 
immediacy properties of $\IS$. 
It has been shown that $\Chr$ is a \emph{subdivision}~\cite{Koz12}. In particular, 
the geometric realization of $\Chr\s$, $|\Chr\s|$, is homeomorphic to $|\s|$, 
the geometric realization of~$\s$ (i.e., the convex hull of its vertices).
If we \emph{iterate} this subdivision~$m$ times, each time 
applying $\Chr$ to all simplices, we 
obtain the~$m^{th}$ chromatic subdivision,~$\Chr^m$.
$\Chr^m \s$ captures the $m$-round IIS model, $\IS^m$~\cite{BG97,HS99}.

Given a complex $K$ and a subdivision of it $\Sub(K)$, 
the carrier of a simplex $\sigma\in \Sub(K)$ in $K$, 
$\Car(\sigma, K)$, is the smallest simplex $\rho\in K$ 
such that the geometric realization of~$\sigma$,~$|\sigma|$, is contained in $|\rho|$: 
$|\sigma|\subseteq|\rho|$. 
The carrier of a vertex~$(p,\sigma)\in\Chr \s$ is $\sigma$. 
In the matching $\IS$ task, the carrier corresponds to the snapshot returned by~$p$, 
i.e., the set of processes \emph{seen} by $p$. The carrier of a simplex 
$\rho \in \Chr K$ is simply the union (or, due to inclusion, the maximum) 
of the carriers of vertices in $\rho$. 
Given a simplex $\sigma\in\Chr^2\s$, $\Car(\sigma,\s)$ is equal to 
$\Car(\Car(\sigma,\Chr\s),\s)$. $\Car(\sigma,\Chr\s)$
corresponds to the set of all snapshots seen by processes in $\chi(\sigma)$.
Hence, $\Car(\sigma,\s)$ corresponds to the union of all these snapshots. 
Intuitively, it results in the set of all processes \emph{seen} by processes 
in $\chi(\sigma)$ through the two successive immediate snapshots instances.

Additionnal details can be found in Appendix~\ref{app:topprimer}.

\paragraph*{Simplex agreement and affine tasks.}
In the \emph{simplex agreement task}, processes start on vertices of some complex~$K$, 
forming a simplex $\sigma\in K$, and must output vertices of some subdivision of~$K$, 
$\Sub(K)$, so that outputs form a simplex $\rho$ of $\Sub(K)$ respecting carrier 
inclusion, i.e., $\Car(\rho,K)\subseteq\sigma$.
In the simplex agreement tasks considered in the characterization 
of wait-free task computability~\cite{BG97,HS99}, $K$ is the 
standard simplex $\s$ and the subdivision is usually iterations of $\Chr$.

An \emph{affine task} is a generalization of the simplex agreement
task, where $\s$ is fixed as the input complex and where the output complex 
is a pure non-empty sub-complex of some iteration of the standard 
chromatic subdivision, $\Chr^{\ell}\s$. 
Formally, let $L$ be a pure non-empty sub-complex of $\Chr^{\ell}\s$ 
for some~$\ell\in \mathbb{N}$. The affine task associated with $L$ 
is then defined as $(\s,L,\Delta)$, where, for every face $\sigma \subseteq \s$,
$\Delta(\sigma) = L \cap \Chr^{\ell}(\sigma)$.
Note that $L \cap \Chr^{\ell}(\t)$ can be empty, 
in which case the set of participating processes must increase before 
processes may produce outputs.
Note that, since an affine task is characterized by its output complex, 
with a slight abuse of notation, we use~$L$ for both the affine task 
$(\s,L,\Delta)$ and its ouput complex. 

By running~$m$ iterations of this task, we obtain $L^m$, a 
sub-complex of $\Chr^{\ell m}\s$, corresponding to a subset of 
$\IS^{~\ell m}$ runs (each of the $m$ iterations includes $\ell$ $\IS$ rounds). 
The affine model associated with $L$, denoted $L^*$, 
corresponds to the set of infinite runs of the IIS model 
where every prefix restricted to a multiple of $\ell$ $\IS$ rounds 
belongs to the subset of $\IS^{~\ell m}$ runs associated with $L^m$.
Note that the definition of the affine model~$L^*$ is done by satisfying 
a property on all its prefixes. Hence, affine models are, by construction, compact.

%=====================================
\section{Adversaries and agreement functions}
\label{sec:adv}
%=====================================

In this section, we introduce many results from~\cite{KR17} wich will be instrumental 
for our topological characterization. 

\paragraph*{Adversaries.}

It is convenient to model patterns in which process failures can occur 
using the formalism of \emph{adversaries}~\cite{DFGT11}. 
An adversary $\A$ is defined as a set of possible correct process subsets, 
called \emph{live sets}. An infinite run is \emph{$\A$-compliant} if the 
set of processes that are correct in that run belongs to~$\A$. 
An \emph{(adversarial) $\A$-model} $\A$-model is thus defined as the set of
$\A$-compliant runs. 

An adversary is \emph{superset-closed}~\cite{Kuz12} if each 
superset of a live set of~$\A$ is also an element of $\A$, i.e., 
if $\forall S\in \A$, $\forall S'\subseteq \Pi$, $S\subseteq S' \implies S'\in\A$. 
Superset-closed adversaries provide a non-uniform
generalization of the classical \emph{$t$-resilient} adversaries.
An adversary $\A$ is \emph{symmetric} if it only depends on the \emph{sizes}
of live sets, and not on process identifiers: 
$\forall S \in \A$, $\forall S' \subseteq \Pi$, $|S'|=|S|\implies S'\in\A$.
Introduced as symmetric progress conditions in~\cite{Tau10}, 
symmetric adversaries provide a generalization of $t$-resilience 
and $k$-obstruction-freedom.

The \emph{agreement power} of a model, i.e., 
the smallest $k$ such that $k$-set consensus is solvable, 
was determined for adversaries in~\cite{GK10}
in order to characterize the power of adversaries in solving 
\emph{colorless} tasks~\cite{BG93a}. It is formalized as follows:

\begin{definition}{[Agreement power]}
\label{def:scn}
\begin{equation*}
\setcon(\A)= \begin{cases}
0 &\text{if $\A=\emptyset$}\\
\max\limits_{S\in \A}( \min\limits_{a\in S}(
  \setcon(\A|_{S\setminus\{a\}})  + 1 ))&\text{otherwise}
\end{cases}
\end{equation*}

\end{definition}

\noindent With $\A|_P$ the adversary composed of the live sets of $\A$ included~in~$P$.
As previously shown in~\cite{GK11}, for a superset-closed adversary~$\A$, the agreement power of $\A$ 
is equal to $\HSS(\A)$, where $\HSS(\A)$ is 
the size of the minimal hitting set of $\A$, i.e., a set intersecting with each $L\in\A$. 
For a symmetric adversary $\A$, the agreement power formula reduces to  
$\setcon(\A)= |\{k\in\{1,\dots,n\}:\exists S\in\A,|S|=k\}|$. 

\paragraph*{Agreement functions.}
Consider an AS model $M$ and a function $\alpha$ 
mapping subsets of $\Pi$ to integers in~$\{0,\ldots,n\}$. 
We say that $\alpha$ is the \emph{agreement function} of $M$, if
for each $P\in 2^{\Pi}$, $\alpha(P)$ is the agreement powe of 
the model $M|_P$ consisting of runs of $M$ in which no process 
in $\Pi\setminus P$ participates~\cite{KR17}.
Intuitively,~$\alpha(P)$ is the best level of set consensus that
can be solved adaptively in $M$.
By convention, if~$M|_P$ does not contain any run, then $\alpha(P)$ is equal to $0$.

Let $P\subseteq P'\subseteq \Pi$. We can observe that, by construction, 
the agreement function of a model is \emph{monotonic},
i.e., $\alpha(P)\leq \alpha(P')$ and of \emph{bounded growth},
i.e., $\alpha(P') \leq \alpha(P)+|P'\setminus P|$.
%Note that $\alpha$ is always used to denote the agreement of some model, i.e., 
%a function $\alpha: 2^{\Pi} \to \{0,\ldots,n\}$ such that 
%$\alpha(P)\leq \alpha(P')\leq \alpha(P)+|P'\setminus P|$.
%
It was shown in~\cite{KR17} that the agreement function of $\A$ can be defined 
using the $\setcon$ function: $\alpha(P)=\setcon(\A|_P)$. 	

\paragraph*{Fair adversaries.}
Informally, an adversary is \emph{{\fair}}~\cite{KR17} if a subset~$Q$ of
participating processes $P$ cannot achieve a better set consensus than $P$.
For an adversary $\A$, and~$Q\subseteq P\subseteq \Pi$, we define 
$\A|_{P,Q} = \{S\in\A:(S\subseteq P)\wedge(S\cap Q\neq\emptyset)\}$.
When solving a task, only correct processes must  output. Thus, 
for a task restricted to processes in $Q$, no process has 
to output in a run corresponding to a live set~$L\in\A$ with $L\cap Q=\emptyset$. 
%The remaining runs in which processes must output is thus the $\A|_{P,Q}$-model.

\begin{definition}{[Fairness]}\label{def:fair}
An adversary $\A$ is {\fair} if and only if:
\[\forall P \subseteq \Pi, \forall Q\subseteq P, \setcon(\A|_{P,Q})= min(|Q|,\setcon(\A|_P)){}.\]
\end{definition}

\noindent Superset-closed and symmetric adversaries are fair~\cite{KR17}, 
as some others. Unfortunately, not all adversaries are fair.

\paragraph*{The $\alpha$-model.}
Generalizing the $k$-active-resilient model, 
the $\alpha$-model was introduced to capture agreement functions 
ability to characterize the task computability of (some) models.

\begin{definition}{[$\alpha$-model]} 
  The \emph{$\alpha$-model} is the AS model in which: if $P$ is the 
  participating set, then $\alpha(P)\geq 1$ and at most~$\alpha(P)-1$
  processes in~$P$ are faulty.
\end{definition}

\noindent Intuitively, the $\alpha$-model is the weakest model 
with $\alpha$ as its agreement function. This allows us to 
use agreement functions to characterize models which 
are as weak as the corresponding $\alpha$-model.
It turns out that the task computability of a fair adversary is
captured precisely by the corresponding $\alpha$-model, i.e.,
they solve \emph{the same} set of tasks.

\begin{theorem}{\cite{KR17}}
\label{th:adv:task}
For any {\fair} adversary $\A$, a task is solvable in the
$\A$-model if and only if it is solvable in the $\alpha$-model.
\end{theorem}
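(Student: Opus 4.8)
The plan is to prove the two implications separately; only one of them actually uses the fairness hypothesis.

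\emph{From the $\alpha$-model to the $\A$-model (no fairness needed).} It suffices to show that the $\A$-model can simulate the $\alpha$-model. This rests on the general fact, recalled above, that the agreement function of the $\A$-model is exactly $\alpha$, i.e.\ $\alpha(P)=\setcon(\A|_P)$ for every $P\subseteq\Pi$: when the participating set is $P$, the processes of the $\A$-model can implement $\alpha(P)$-set consensus among $P$, adaptively as $P$ grows. Feeding such adaptive set-consensus objects into the standard ``$(k{+}1)$-set consensus plus read/write simulates the $k$-resilient model'' construction, in its non-uniform adaptive form, the processes jointly agree, round by round, on a set of at most $\alpha(P)-1$ processes of $P$ to treat as faulty, and run the given $\alpha$-model protocol on the resulting execution, which is a legal $\alpha$-model run. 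Hence every task solvable in the $\alpha$-model is solvable in the $\A$-model. (Equivalently: the $\alpha$-model is the weakest AS model with agreement function $\alpha$, and the $\A$-model has agreement function $\alpha$.)

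\emph{From the $\A$-model to the $\alpha$-model (fairness essential).} Fix a protocol $\mathcal B$ solving the task $T$ in the $\A$-model. I would build a protocol for $T$ in the $\alpha$-model by a (full-information, extended-BG-style) simulation in which the $\alpha$-model processes --- the \emph{simulators} --- cooperatively run the $n$ codes of $\mathcal B$, resolving the nondeterminism of each simulated step with agreement (commit--adopt / set-consensus) objects and relaying back the decided values. In an $\alpha$-model run with participating set $P$, at most $\alpha(P)-1=\setcon(\A|_P)-1$ simulators fail, so at most that many codes are ``blocked'' at any time. The whole argument hinges on guaranteeing that the set of codes simulated infinitely often always contains a live set of $\A$: then the produced execution is $\A$-compliant, the correctness of $\mathcal B$ transfers, and the simulators can output what $\mathcal B$ decides. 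This is exactly where the fairness identity $\setcon(\A|_{P,Q})=\min(|Q|,\setcon(\A|_P))$ is used: as the set $Q$ of codes owned/reachable by live simulators grows, it certifies that the agreement power the simulation can harness on $\A|_{P,Q}$ equals $\min(|Q|,\setcon(\A|_P))$, which is precisely what keeps the blocked set small enough that its complement still contains a live set --- an invariant that genuinely fails for non-fair adversaries, where a subset of participants may have strictly larger set-consensus power.

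\emph{Main obstacle.} The hard part is the second implication, and within it the precise bookkeeping of the simulation: coping with crashing simulators, with the owned set of codes that grows over time, and with the interaction between the ``$\min(|Q|,\cdot)$'' clause of fairness and the bound on the number of simultaneously blocked codes, so as to maintain at all times that the infinitely-often-simulated set contains a live set of $\A$. The first implication, the translation of inputs and outputs between the two models, and the invocations of the standard set-consensus and BG-simulation primitives are comparatively routine.
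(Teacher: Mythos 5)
This theorem is cited from~\cite{KR17} and used as a black box in the present paper; there is no proof of it here, so there is nothing paper-internal to measure you against, and the review can only address the soundness of your sketch on its own terms.

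Your split into two implications is natural, and the first direction (the $\A$-model subsumes the $\alpha$-model because both have agreement function $\alpha$, hence the $\A$-model can supply $\alpha$-adaptive set consensus and drive a BG-style simulation) is a reasonable plan. The second direction, however, contains a genuine error. You write that the simulation in the $\alpha$-model should ensure that ``the set of codes simulated infinitely often always contains a live set of $\A$: then the produced execution is $\A$-compliant.'' But $\A$-compliance requires the set of infinitely-advanced codes to \emph{equal} a live set of $\A$, not merely to \emph{contain} one. For superset-closed adversaries the two coincide, and your argument is essentially the $t$-resilient BG simulation. Fair adversaries, however, are strictly more general than superset-closed ones: $k$-obstruction-freedom is fair and its live sets all have size at most $k$. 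There, a plain BG simulation in an $\alpha$-model run with participation $|P|=n>2k-1$ has at most $\alpha(P)-1=k-1$ faulty simulators, hence at most $k-1$ blocked codes, hence at least $n-k+1>k$ codes advancing forever; the resulting execution is not $\A$-compliant, and $\mathcal B$ carries no termination guarantee on it. What is missing from your sketch is that, for live sets that are small, the simulation must also \emph{restrict} concurrency (electing a bounded set of leader codes and making the rest wait, yet still letting the non-leaders extract outputs), rather than only bound the number of blocked codes. The fairness identity $\setcon(\A|_{P,Q})=\min(|Q|,\setcon(\A|_P))$ is indeed the right lever, but the $\min(|Q|,\cdot)$ branch has to be used to throttle the advancing set as well; your sketch captures only the high-resilience side of the simulation and would not go through for $k$-obstruction-freedom, which is explicitly in the class the theorem covers.
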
 

\paragraph*{$\alpha$-adaptive set consensus.}
The abstraction of \emph{$\alpha$-adaptive set consensus}~\cite{KR17}
can be accessed with a single $\mathit{propose}(v)$ operation. 
It ensures that (termination) every operation invoked by a correct process
eventually returns, (validity) every returned value is the argument of a
preceding \textit{propose} invocation, and ($\alpha$-agreement) at any
point of the execution, the number of distinct returned values does not exceed~$\alpha(P)$, 
with $P$ the current participating set. 
This abstraction allows us to define yet another family of models, 
equivalent with $\alpha$-model, and hence, with adversarial $\A$-models.

\begin{definition}{\small{[$\alpha$-set consensus model]}}
  The \emph{$\alpha$-set consensus} model is the AS model in which, 
  if $P$ is the participating set then $\alpha(P)\geq 1$, and
  processes have access to $\alpha$-adaptive set-consensus objects.
\end{definition}

\begin{theorem}{\cite{KR17}}
\label{read/writeAndConsensus}
A task is solvable in the $\alpha$-model if and only if
it is solvable in the $\alpha$-set-consensus model.
\end{theorem}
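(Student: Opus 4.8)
**Proof plan for Theorem~\ref{read/writeAndConsensus} (equivalence of the $\alpha$-model and the $\alpha$-set-consensus model).**

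The plan is to prove the two directions separately, establishing mutual simulation. For the ``only if'' direction, I would show that the $\alpha$-model can simulate $\alpha$-adaptive set-consensus objects. Given the participating set $P$ with $\alpha(P)\geq 1$, the $\alpha$-model guarantees that at most $\alpha(P)-1$ processes in $P$ are faulty, i.e., at least one process is correct and, more precisely, the run is $(\alpha(P)-1)$-resilient relative to $P$. The key observation is that $k$-set consensus is solvable $t$-resiliently whenever $t < k$ (equivalently, $(k-1)$-resilient $k$-set consensus is the classical solvable case). Since $\alpha$ is monotonic and of bounded growth (stated in the excerpt), the participating set $P$ only grows over time, and $\alpha(P)$ only grows; the abstraction must ensure at most $\alpha(P)$ distinct outputs \emph{adaptively}. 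I would implement this by having processes run, for each ``level'' $j$, a $j$-set-consensus protocol that is safe always and terminates once the resilience level justified by the current participating set is at least as large as $j$; a process adopts the output of the highest level that has terminated. Establishing $\alpha$-agreement requires arguing that if the final participating set is $P$, then no process returns before it is ``safe'' to do so at a level $\leq \alpha(P)$, and that the total number of distinct returned values stays bounded by $\alpha(P)$ throughout — this uses monotonicity crucially, since $\alpha$ can only have been smaller for earlier (smaller) participating sets. Care is needed because a process may decide based on a participating set that later grows; bounded growth ensures $\alpha$ cannot jump, so the number of new values introduced when $P$ enlarges is controlled.

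For the ``if'' direction, I would show the $\alpha$-set-consensus model can be simulated in the $\alpha$-model, i.e., that $\alpha$-adaptive set-consensus objects can be implemented in the $\alpha$-model (equivalently, that the $\alpha$-model is at least as strong, so any task solution using such objects transfers). This is the direction that actually needs the resilience guarantee of the $\alpha$-model: with participating set $P$ and at most $\alpha(P)-1$ faults, a waiting-based protocol can safely wait for $|P| - (\alpha(P)-1)$ processes; combined with a $k$-set-consensus protocol for $k = \alpha(P)$, and using the adaptive participating-set detection that is standard in AS memory, one obtains the $\alpha$-adaptive set-consensus abstraction. Again the adaptivity (the object must work even as $P$ is still being discovered) is handled by running a safe-always family of set-consensus instances indexed by level and by participating-set estimate, and committing only when the estimate is certified.

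The main obstacle I expect is the \emph{adaptivity} bookkeeping in both directions: a protocol must be correct no matter how the participating set unfolds, so it cannot simply ``know'' $\alpha(P)$ in advance; it must run a lattice of set-consensus instances (one per candidate level, and effectively per participating-set estimate), keep all of them safe at all times, and use the monotonicity and bounded-growth properties of $\alpha$ to guarantee that (i) enough instances eventually terminate for liveness, and (ii) the union of values ever output never exceeds $\alpha$ of the \emph{current} participating set. Formalizing (ii) — that the adaptive agreement bound is respected at every point, not just in the limit — is the delicate part, since a late-arriving process both enlarges $P$ (raising the permitted bound) and may contribute a fresh value; the argument must show these two effects are in lockstep, which is exactly where bounded growth ($\alpha(P') \leq \alpha(P) + |P'\setminus P|$) is indispensable. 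I would also expect to lean on Theorem~\ref{th:adv:task} and the constructions of~\cite{KR17} to avoid redoing the low-level simulations from scratch, citing the $\alpha$-model machinery wherever the mechanics coincide with what is already established there.
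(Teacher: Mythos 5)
You have effectively written the same direction twice. Re-read your two paragraphs: the ``only if'' paragraph argues that the $\alpha$-model can implement $\alpha$-adaptive set-consensus objects, and the ``if'' paragraph also argues that ``$\alpha$-adaptive set-consensus objects can be implemented in the $\alpha$-model.'' Both establish that any task solvable in the $\alpha$-set-consensus model is solvable in the $\alpha$-model, i.e.\ the \emph{if} direction of the theorem. The \emph{only-if} direction --- that a task solvable in the $\alpha$-model is solvable in the $\alpha$-set-consensus model --- is never addressed, and it is the harder one: the $\alpha$-set-consensus model carries no resilience guarantee (apart from $\alpha(P)\geq 1$, the run can be wait-free as regards crashes), so you must \emph{manufacture} the bound of at most $\alpha(P)-1$ faulty processes out of thin air, using only the set-consensus objects. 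The standard tool for this is a BG-style simulation: the simulators repeatedly run (adaptive) set-consensus to agree on the next step of each simulated process, and each disagreement in a set-consensus instance can block at most one simulated process, so with $\alpha(P)$-agreement the simulated run has at most $\alpha(P)-1$ blocked (faulty) processes, which is exactly the $\alpha$-model constraint. None of this appears in your proposal, so the equivalence is not established. (For the direction you do cover, the sketch is plausible and essentially the right idea: the $\alpha$-model's resilience bound lets a waiting-based $k$-set-consensus protocol terminate with $k=\alpha(P)$, and bounded growth controls the adaptive bookkeeping; this matches what~\cite{KR17} does. Note also that the paper itself does not reprove this theorem --- it simply cites~\cite{KR17} --- so your missing direction is the one you would really have to supply.)
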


\noindent Hence, for a fair adversary $\A$ and its agreement function~$\alpha$
the $\A$-model, the $\alpha$-set-consensus model and the $\alpha$-model 
can all be used interchangeably for task solvability issues.

%===================================
\section{Defining the affine task for a fair adversary}
\label{sec:def}
%===================================

Given a fair adversary $\A$ and its agreement function $\alpha$, 
we define the affine task $\R_\A$, a sub-complex of~$\Chr^2\s$, 
which will be shown to capture the task computability of the $\A$-model.

\paragraph*{Agreement and contention simplices.}

For a vertex~$v\in\Chr^2\s$, let $\mathit{View}^1(v)$ 
and $\mathit{View}^2(v)$  be 
the sets of processes seen by the process $\chi(v)$ in, respectively, the first and
the second $\IS$ (we call these~$\mathit{View}^1$ and $\mathit{View}^2$). Formally, 
$\mathit{View}^2(v)= \Car(v,\Chr\s)$ and $\mathit{View}^1(v)=\Car(v',\s)$ 
with~$v'\in\Car(v,\Chr\s)$ such that $\chi(v)=\chi(v')$.

The idea behind the definition of these prohibited simplices is simple.
In an execution, processes can only decide on the proposal they observed. 
Therefore, in an execution, if a process~$p$ sees only itself, 
other processes should return~$p$'s proposal to hope reaching an agreement with $p$.
In~$\Chr^2\s$, if~$p$ is executed alone, then it has the smallest $\mathit{View}^1$
and $\mathit{View}^2$. Thus all processes would observe~$p$'s $\mathit{View}^1$.
Therefore, a natural way to try to reach an agreement among processes 
is to adopt the proposal from the process observed with the smallest $\mathit{View}^1$.
Moreover, as processes may share the same view, it is even better 
to deterministically select a value from the smallest $\mathit{View}^1$ itself. 

We formalize the intuitive description of contention simplices as follows:
In a simplex $\delta\in\Chr^2\s$, we say that vertices $v$ and $v'$ are \emph{contending}
if their $\mathit{View}^1$ and $\mathit{View}^2$ are ordered in the
opposite way:  $\mathit{View}^1(v)$ is a proper subset of
$\mathit{View}^1(v')$ and  $\mathit{View}^2(v')$ is a proper subset 
of~$\mathit{View}^2(v)$, or vice versa.
If every two vertices of $\delta$ are contending, then we say that
$\delta$ is a \emph{$2$-contention} simplex. Let $\mathit{Cont}_2$ 
be the set of $2$-contention simplices, formally:

\begin{definition}{[$\mathit{Cont_2}$]} 
$\sigma\in \Chr^2 \s: \forall v,v' \in \sigma, v\neq v':$
\[((\mathit{View}^1(v) \subsetneq \mathit{View}^1(v'))\wedge(\mathit{View}^2(v') \subsetneq \mathit{View}^2(v)))
\vee
\]\[
((\mathit{View}^1(v') \subsetneq \mathit{View}^1(v))\wedge(\mathit{View}^2(v) \subsetneq \mathit{View}^2(v'))){}.\]
\end{definition}

\noindent $\mathit{Cont}_2$ is inclusion-closed: any face 
of a $2$-contention simplex is also in $\mathit{Cont}_2$. 
Thus, $\mathit{Cont}_2$  is a complex:  the
\emph{$2$-contention complex} (depicted for a $3$-processes system in Figure~\ref{fig:Contention}).
Particular executions of two~IS rounds are also represented in Figures~\ref{fig:ContEx1} and~\ref{fig:ContEx2}. 
In these executions, one can see that a couple of processes is contending if the 
execution ``order'' is strictly reversed in the two~IS runs. 

\begin{figure}
\begin{minipage}{0.49\linewidth}
\begin{center}
    \begin{tikzpicture}[scale=0.3]

	\draw[thick] (0,5) -- (2,5) -- (2,4) -- (0,4) -- cycle;
	\draw (1,4.5) node[circle,inner sep=0pt,minimum size=7, fill=red,draw,thick] {};
	\draw[thick] (2,5) -- (4,5) -- (4,4) -- (2,4);
	\draw (3,4.5) node[circle,inner sep=0pt,minimum size=7, fill=blue,draw,thick] {};
	\draw[thick] (4,5) -- (6,5) -- (6,4) -- (4,4);
	\draw (5,4.5) node[circle,inner sep=0pt,minimum size=7, fill=green,draw,thick] {};
	\draw[->, thick] (6.5,4.5) -- (7.5,4.5);
	\draw (8.5,4.5) node[circle,inner sep=0pt,minimum size=7, fill=green,draw,thick] {};
	
	\draw[thick] (0,3.5) -- (2,3.5) -- (2,2.5) -- (0,2.5) -- cycle;
	\draw (1,3) node[circle,inner sep=0pt,minimum size=7, fill=red,draw,thick] {};
	\draw[thick] (2,3.5) -- (4,3.5) -- (4,2.5) -- (2,2.5);
	\draw (3,3) node[circle,inner sep=0pt,minimum size=7, fill=blue,draw,thick] {};
	\draw[->, thick] (4.5,3) -- (5.5,3);
	\draw (6.5,3) node[circle,inner sep=0pt,minimum size=7, fill=red,draw,thick] {};
	
	\draw[thick] (0,2) -- (2,2) -- (2,1) -- (0,1) -- cycle;
	\draw (1,1.5) node[circle,inner sep=0pt,minimum size=7, fill=blue,draw,thick] {};
	\draw[->, thick] (2.5,1.5) -- (3.5,1.5);
	\draw (4.5,1.5) node[circle,inner sep=0pt,minimum size=7, fill=blue,draw,thick] {} ;
    
   \draw[thick] (12,5) -- (14,5) -- (14,4) -- (12,4) -- cycle;
	\draw (13,4.5) node[circle,inner sep=0pt,minimum size=7, fill=red,draw,thick] {};
	\draw[thick] (14,5) -- (16,5) -- (16,4) -- (14,4);
	\draw (15,4.5) node[circle,inner sep=0pt,minimum size=7, fill=blue,draw,thick] {};
	\draw[thick] (16,5) -- (18,5) -- (18,4) -- (16,4);
	\draw (17,4.5) node[circle,inner sep=0pt,minimum size=7, fill=green,draw,thick] {};
	\draw[->, thick] (18.5,4.5) -- (19.5,4.5);
	\draw (20.5,4.5) node[circle,inner sep=0pt,minimum size=7, fill=blue,draw,thick] {} ;

	\draw[thick] (12,3.5) -- (14,3.5) -- (14,2.5) -- (12,2.5) -- cycle;
	\draw (13,3) node[circle,inner sep=0pt,minimum size=7, fill=red,draw,thick] {};
	\draw[thick] (14,3.5) -- (16,3.5) -- (16,2.5) -- (14,2.5);
	\draw (15,3) node[circle,inner sep=0pt,minimum size=7, fill=green,draw,thick] {};
	\draw[->, thick] (16.5,3) -- (17.5,3);
	\draw (18.5,3) node[circle,inner sep=0pt,minimum size=7, fill=red,draw,thick] {};
	
	\draw[thick] (12,2) -- (14,2) -- (14,1) -- (12,1) -- cycle;
	\draw (13,1.5) node[circle,inner sep=0pt,minimum size=7, fill=green,draw,thick] {};
	\draw[->, thick] (14.5,1.5) -- (15.5,1.5);
	\draw (16.5,1.5) node[circle,inner sep=0pt,minimum size=7, fill=green,draw,thick] {};
	
	\draw[green,->, thick] (9,4.5) to[out=0,in=180] (12,1.5);
	\draw[red,->, thick] (7,3) to[out=0,in=180] (12,3);
	\draw[blue,->, thick] (5,1.5) to[out=0,in=180] (12,4.5);
	
	%\draw[->, thick] (5,0.5) to[out=-90,in=180] (10,-5);
	\end{tikzpicture}
	\subcaption{\begin{small}Two reversed IS ordered runs: 
     $\{\color{blue}p_2\color{black}\},\{\color{red}p_1\color{black}\},\{\color{green}p_3\color{black}\}$ and 
     $\{\color{green}p_3\color{black}\},\{\color{red}p_1\color{black}\},\{\color{blue}p_2\color{black}\}$. 
     Any set of processes is contending due to inverted execution orders.\end{small}\label{fig:ContEx1}}

\vspace{1em}
	\begin{tikzpicture}[scale=0.3]
	
	\draw[thick] (0,5) -- (2,5) -- (2,4) -- (0,4) -- cycle;
	\draw (1,4.5) node[circle,inner sep=0pt,minimum size=7, fill=red,draw,thick] {};
	\draw[thick] (2,5) -- (4,5) -- (4,4) -- (2,4);
	\draw (3,4.5) node[circle,inner sep=0pt,minimum size=7, fill=blue,draw,thick] {};
	\draw[thick] (4,5) -- (6,5) -- (6,4) -- (4,4);
	\draw (5,4.5) node[circle,inner sep=0pt,minimum size=7, fill=green,draw,thick] {};
	\draw[->, thick] (6.5,4.5) -- (7.5,4.5);
	\draw (8.5,4.5) node[circle,inner sep=0pt,minimum size=7, fill=green,draw,thick] {};
	
	\draw[thick] (0,3.5) -- (2,3.5) -- (2,2.5) -- (0,2.5) -- cycle;
	\draw (1,3) node[circle,inner sep=0pt,minimum size=7, fill=red,draw,thick] {};
	\draw[thick] (2,3.5) -- (4,3.5) -- (4,2.5) -- (2,2.5);
	\draw (3,3) node[circle,inner sep=0pt,minimum size=7, fill=blue,draw,thick] {};
	\draw[->, thick] (4.5,3) -- (5.5,3);
	\draw (6.5,3) node[circle,inner sep=0pt,minimum size=7, fill=blue,draw,thick] {};
	
	\draw[thick] (0,2) -- (2,2) -- (2,1) -- (0,1) -- cycle;
	\draw (1,1.5) node[circle,inner sep=0pt,minimum size=7, fill=red,draw,thick] {};
	\draw[->, thick] (2.5,1.5) -- (3.5,1.5);
	\draw (4.5,1.5) node[circle,inner sep=0pt,minimum size=7, fill=red,draw,thick] {} ;
    
   \draw[thick] (12,5) -- (14,5) -- (14,4) -- (12,4) -- cycle;
	\draw (13,4.5) node[circle,inner sep=0pt,minimum size=7, fill=red,draw,thick] {};
	\draw[thick] (14,5) -- (16,5) -- (16,4) -- (14,4);
	\draw (15,4.5) node[circle,inner sep=0pt,minimum size=7, fill=blue,draw,thick] {};
	\draw[thick] (16,5) -- (18,5) -- (18,4) -- (16,4);
	\draw (17,4.5) node[circle,inner sep=0pt,minimum size=7, fill=green,draw,thick] {};
	\draw[->, thick] (18.5,4.5) -- (19.5,4.5);
	\draw (20.5,4.5) node[circle,inner sep=0pt,minimum size=7, fill=red,draw,thick] {} ;
	\draw (21.5,4.5) node[circle,inner sep=0pt,minimum size=7, fill=green,draw,thick] {} ;

	\draw[thick] (12,3.5) -- (14,3.5) -- (14,2.5) -- (12,2.5) -- cycle;
	\draw (13,3) node[circle,inner sep=0pt,minimum size=7, fill=blue,draw,thick] {};
	\draw[thick] (14,3.5) -- (16,3.5) -- (16,2.5) -- (14,2.5);
	\draw[->, thick] (16.5,3) -- (17.5,3);
	
	\draw[thick] (12,2) -- (14,2) -- (14,1) -- (12,1) -- cycle;
	\draw (13,1.5) node[circle,inner sep=0pt,minimum size=7, fill=blue,draw,thick] {};
	\draw[->, thick] (14.5,1.5) -- (15.5,1.5);
	\draw (16.5,1.5) node[circle,inner sep=0pt,minimum size=7, fill=blue,draw,thick] {};
	
	\draw[green,->, thick] (9,4.5) to[out=0,in=180] (12,4.5);
	\draw[blue,->, thick] (7,3) to[out=0,in=180] (12,1.5);
	\draw[red,->, thick] (5,1.5) to[out=0,in=180] (12,4.5);
	\end{tikzpicture}
	\subcaption{\begin{small}Two ordered runs mixed orders: 
     $\{\color{red}p_1\color{black},\color{blue}p_2\color{black},\color{green}p_3\color{black}\}$ and 
     $\{\color{blue}p_2\color{black}\},\{\color{green}p_3\color{black},\color{red}p_1\color{black}\}$.
     The only couple of contending processes is $\{\color{red}p_1\color{black},\color{blue}p_2\color{black}\}$.\end{small}\label{fig:ContEx2}}
\end{center}
\end{minipage}
\hfill
\begin{minipage}{0.49\linewidth}
\begin{center}
\includegraphics[scale=1.3]{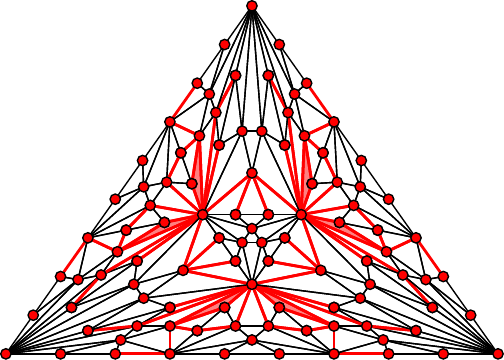}
\subcaption{\label{fig:Contention} The $2$-contention complex shown in red.}
\end{center}
\end{minipage}
\caption{Representation, in a $3$-processes system, of all $2$-contention simplices in $\Chr^2\s$ and some detailed IS runs.\label{fig:ContentionAndExamples}}
\end{figure}

We first show how to restrict $\Chr^2\s$ to obtain an affine task $\R_{k-\mathit{OF}}$,
solvable in the $k$-obstruction-free model, and which allows, in $\R_{k-\mathit{OF}}^*$,
any set of processes to solve $k$-set consensus among themselves. 
As in~\cite{GHKR16}, the idea consists in specifying prohibited 
simplices and take their pure complement as the affine task. 

Intuitively, a contention simplex of size $k$ is one in which, 
in the corresponding run, all of the $k$ processes have distinct 
$\mathit{View}^1$ and each one believes it had the smallest one among them.
Thus, an execution for which all processes would return distinct proposals.
Hence, $\R_{k-\mathit{OF}}$ is defined by prohibiting 
too large contending simplices:

\begin{definition}{[Affine task $\R_{k-\mathit{OF}}$]}
\label{def:Rk}
\[
\R_{k-\mathit{OF}}= \Pc(\{\sigma\in{\mathit{Cont}_2} | {\mathit{dim}}(\sigma)\geq k\},\Chr^2\s){}.
\]
\end{definition}

\noindent See Figure~\ref{fig:1-OF} for $\R_{1-\mathit{OF}}$ in a $3$-process system. 
To see that $\R_{k-\mathit{OF}}$ indeed captures the $k$-obstruction-free adversary, 
one can check, which is not obvious, that the latter definition of $\R_\A$ reduces 
to~$\R_{k-\mathit{OF}}$ when~$\A$ is the $k$-obstruction-free adversary, or, 
alternatively, rely on the proofs from~\cite{GHKR16}.

\paragraph*{Agreement vs. participation.}
Solving a desired level of agreement is no longer sufficient. 
The agreement function of an adversary may define  
different levels of agreement for different participating sets.
In iterated affine tasks, participation is captured by views of the
processes: $\Car(v,\s)$ is the participating set witnessed by process $\chi(v)$.

The naive approach would consists in varying the restriction on the 
size of contention simplices according to the carrier size. 
Such a restriction would indeed provide an affine task 
which is strong enough to solve the desired level of agreement, but 
it would be impossible to solve. Indeed, contention assumes that 
processes with the smallest $\mathit{View}^1$ go first. But 
when the agreement power is equal to $0$, processes must be 
ensured to obtain larger views and hence to let processes with 
larger $\mathit{View}^1$ go first. But letting processes with 
large $\mathit{View}^1$ go first inherently creates contention. 

The idea of the solution consists in switching between resilience 
and concurrency requirements. Indeed, as long as the agreement power 
is steady over the participation, we rely on restrictions made by limiting contention. 
But when the agreement power increases due to an increase of participation, 
we identify a ``witness'' of this new agreement power and require it to 
go first and be seen by other processes. This corresponds to changing the 
selection of the smallest $\mathit{View}^1$ by looking first on $\mathit{View}^1$
``witnessing'' a new agreement level and otherwise, by default, selecting the 
smallest $\mathit{View}^1$. These ``witnesses'' of participation is what we call 
\emph{critical simplices}.

\paragraph*{Critical simplices.}

\begin{figure}
\captionsetup[subfigure]{justification=centering}
  \begin{minipage}[b]{.49\linewidth}
    \centering
\includegraphics[scale=2.33]{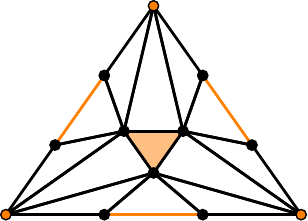}
   \subcaption{\begin{small}Critical simplices for the $\alpha$-model with $\alpha(P)=min(|P|,1)$ ($1$-obstruction-freedom)\end{small}}
  \end{minipage}
  \hfill
  \begin{minipage}[b]{.49\linewidth}
    \begin{center}
\includegraphics[scale=2.33]{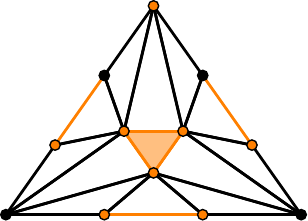}
    \subcaption{\begin{small}Critical simplices for the adversary 
    defined as $\{p_2\}$, $\{p_1,p_3\}$ plus all supersets.\end{small}}
    \end{center}
  \end{minipage}
  \caption{Critical simplices are displayed in orange (with $p_2$ the top vertex, $p_1$ the bottom left vertex and $p_3$ the bottom right vertex).}
  \label{fig:critical_examples}
\end{figure}

The goal here is to identify for each increase of participation a 
new $\mathit{View}^1$ witnessing~it. An easy requirement is that 
this $\mathit{View}^1$ should correspond to a participation level 
associated with the new level of agreement power. But two issues 
must be solved: (1) the provided $\mathit{View}^1$ may be irregular 
and there could be none for a given agreement power; and (2) 
distinct $\mathit{View}^1$ may share the same level of agreement 
power and the smallest one may be different depending on the 
executions.

The idea is to select $\mathit{View}^1$ which are minimal in the 
given execution for some level of agreement power. To do so, the value 
of $\mathit{View}^1$ is not sufficient on its own. But if we know 
that multiple processes all share the same $\mathit{View}^1$, we 
can deduce that all other processes with a strictly smaller view 
must have a $\mathit{View}^1$ corresponding to a lower level of 
agreement power. This solves the second issue, but indirectly 
also the first one. Indeed, if no $\mathit{View}^1$ exists 
for an agreement level, it implies that the smallest view 
for the next level is provided to sufficiently many processes 
to be able to deduce that no process with a smaller $\mathit{View}^1$ 
may obtain a $\mathit{View}^1$ corresponding to the ``missing'' 
level, hence this $\mathit{View}^1$ is a witness of both 
agreement levels.

A \emph{critical set} or \emph{critical simplex} is set of 
processes sharing the same $\mathit{View}^1$ which is 
sufficiently large to ensure that their $\mathit{View}^1$ 
is the smallest one for some level of agreement power.
Formally, a simplex $\sigma \in \Chr \s$ is a \emph{critical simplex} if:
(1)~all its vertices share the same carrier;
and (2)~the set consensus power associated to $\Car(\sigma,\s)$ is 
strictly greater than the set consensus power of $\chi(\Car(\sigma,\s))\setminus \chi(\sigma)$.

\begin{definition}{} $\forall \sigma\in \Chr \s, \mathit{Critical}_\alpha(\sigma) \equiv$
\[ (\forall v \in \sigma: \Car(v,\s)=\Car(\sigma,\s))\wedge
\left(\alpha(\chi(\Car(\sigma,\s))\setminus \chi(\sigma))<\alpha(\chi(\Car(\sigma,\s)))\right){}.
\]
\end{definition}

Examples of critical simplices for two $3$-processes fair models are depicted in Figure~\ref{fig:critical_examples}. The critical simplices are displayed in orange. As it 
can be observed, the set of critical simplices is not inclusion-closed, hence it 
does not define a simplicial complex.

Given a simplex $\sigma\in\Chr\s$, we denote as $\mathcal{CS}_\alpha(\sigma)$ 
the set of critical simplices in $\sigma$, that is~$\mathcal{CS}_\alpha(\sigma)=\{\sigma'\subseteq\sigma:\mathit{Critical}_\alpha(\sigma')\}$. 
Moreover, identifying the set of processes which compose some critical simplex will be 
useful. Thus, let~$\mathcal{CSM}_\alpha(\sigma)$ (critical simplices members) be 
the set of vertices of some $\sigma\in\Chr\s$ which belongs to some critical simplex in $\sigma$, 
formally $\mathcal{CSM}_\alpha(\sigma)=\{\sigma'\in\Cl(\mathcal{CS}_\alpha(\sigma)) : \dim(\sigma')=0\}$.
Note that critical simplices members can be seen also as a sub-complex of $\Chr\s$. Intuitively, 
processes with the smallest $\mathit{View}^2$ should belong to this set.
Similarly we also define the notion of the critical simplex view, $\mathcal{CSV}_\alpha(\sigma)$, 
which corresponds to the set of processes observed by a critical simplex in its~$\mathit{View}^1$. 
It can be simply obtained by taking the carrier in $\s$ of a critical simplex, that is~$\mathcal{CSV}_\alpha(\sigma)= \Car(\mathcal{CSM}_\alpha(\sigma),\s)$.

\paragraph*{Concurrency level.}

\begin{figure}
\captionsetup[subfigure]{justification=centering}
  \begin{minipage}[b]{.49\linewidth}
    \centering
\includegraphics[scale=2.33]{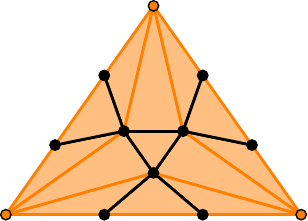}
   \subcaption{\begin{small}For the $\alpha$-model with $\alpha(P)=min(|P|,1)$ ($1$-obstruction-freedom)\end{small}}
  \end{minipage}
  \hfill
  \begin{minipage}[b]{.49\linewidth}
    \begin{center}
\includegraphics[scale=2.33]{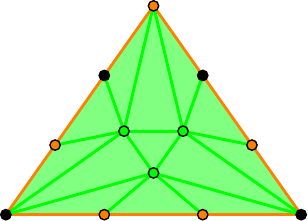}
    \subcaption{\begin{small}For the adversary defined as $\{p_2\}$, $\{p_1,p_3\}$ plus all supersets.\end{small}}
    \end{center}
  \end{minipage}
  \caption{Simplices in black, orange and green are mapped to concurrency levels of $0$, $1$ and~$2$ respectively (with $p_2$ the top vertex, $p_1$ the bottom left vertex and $p_3$ the bottom right vertex).}
  \label{fig:concurrency_map}
\end{figure}

Critical simplices provide a mechanism to select particular $\mathit{View}^1$. 
This can be used to solve agreement protocols with the desired $k$-set consensus 
for an observed participation. But unfortunately this works only when the 
set of processes trying to solve a set-consensus operation was observed 
by the critical simplices, i.e., when processes belong to~$\mathcal{CSV}_\alpha(\sigma)$. 
When this is not the case, processes should be able to solve 
set-consensus operations by themselves. This is where the limition 
on the size of contention simplices will come in. But this 
limitation should still be made according to the observed participation. 
This is done according to the agreement power associated with the observed 
critical simplices. We define this restriction using the following 
notion of \emph{concurrency level}:

\begin{definition}{[Concurrency map]} $\forall\sigma\in\Chr\s:$
\[\mathit{Conc}_\alpha(\sigma) = \max(0\cup\{\alpha(\chi(\Car(\tau,\s))), \tau\in\mathcal{CS}_\alpha(\sigma)\}){}.\]
\end{definition}

Note that we add $0$ to the set of agreement powers in case this set is empty.
The concurrency map is displayed in Figure~\ref{fig:concurrency_map} 
for examples of $3$-processes models. Each simplex of $\Chr\s$ is associated 
with a concurrency level. One can observe that the set of simplices with 
a concurrency level equal to $k$ corresponds to the simplices in the \emph{star}
of the critical simplices associated with an agreement power equal to~$k$ 
and which are not in the \emph{star} of a critical simplex associated with 
a greater agreement power.	
 
\paragraph*{Affine task $\R_\A$.} 
The affine task for a fair adversary $\R_\A\subseteq\Chr^2\s$ is defined as follows: 
\begin{definition}{[$\R_\A$] }\label{def:RA}$\R_\A= \Cl(\{\sigma\in\facets(\Chr^2 \s):\forall \theta \subseteq(\sigma),  P(\theta,\sigma)\}$ with $P$ such that
(with~$\tau= \Car(\theta,\Chr\s)$ and $\rho=\Car(\sigma,\Chr\s)$):
\[ 
P(\theta,\sigma)\equiv\ 
\theta\in\mathit{Cont}_2\wedge(\chi(\theta) \cap 
\chi(\mathcal{CSM}_\alpha(\rho))
\cap\chi(\mathcal{CSV}_\alpha(\tau)))
 = \emptyset
 \implies \mathit{dim}(\theta) < \mathit{Conc_\alpha(\tau)}{}.
\]
\end{definition}

Intuitively, a simplex $\sigma\in\Chr^2\s$ is in $\R_\A$ 
if  and only if any of its ``non-critical'' subsets that cannot ``rely''
on the critical simplices in achieving $\alpha$-adaptive set consensus
has a sufficiently low contention level to solve $\alpha$-adaptive set consensus on
its own.

Examples of affine tasks for $3$-processes $\alpha$-models 
are depicted in Figure~\ref{fig:affineTasks}.

\begin{figure}
\captionsetup[subfigure]{justification=centering}
  \begin{minipage}[b]{.49\linewidth}
    \centering
	\includegraphics[scale=0.7]{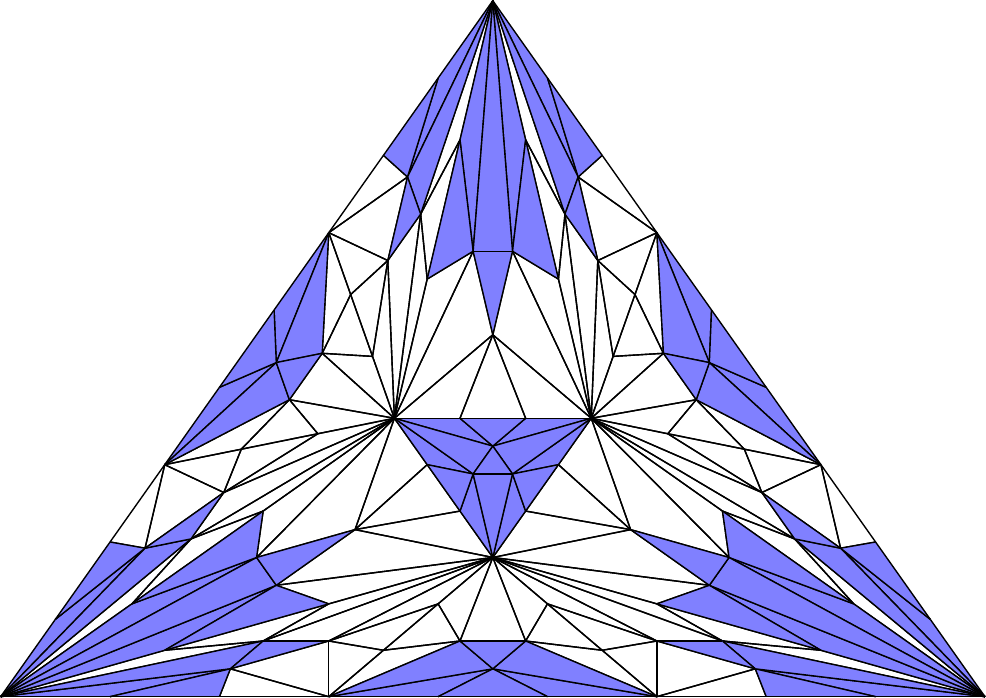}
    \subcaption{\begin{small}Affine task for the $\alpha$-model with $\alpha(P)=min(|P|,1)$. (1-obstruction-freedom) \label{fig:1-OF}\end{small}}
  \end{minipage}
  \hfill
  \begin{minipage}[b]{.49\linewidth}
    \centering
	\includegraphics[scale=0.7]{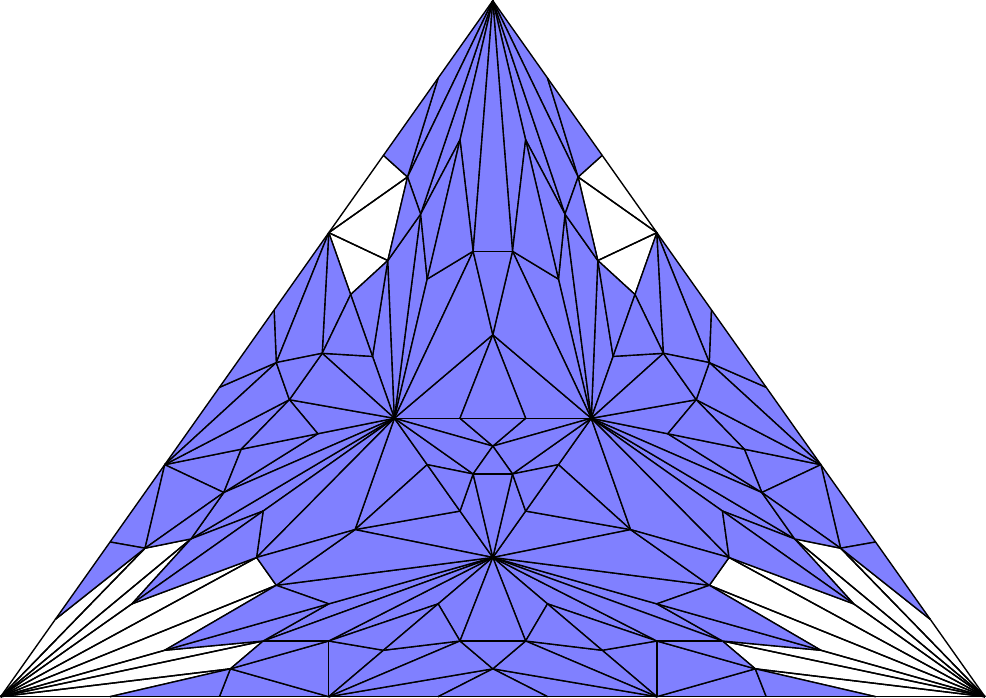}
    \subcaption{\begin{small}Affine task for the adversary 
    defined as $\{p_2\}$, $\{p_1,p_3\}$ plus all supersets.\end{small}}
  \end{minipage}
  \caption{Some examples of affine tasks $\R_\A$ in blue (with $p_2$ the top vertex, $p_1$ the bottom left vertex and $p_3$ the bottom right vertex).}
  \label{fig:affineTasks}
\end{figure}

%============================================
\section{From the $\alpha$-model to $\R_\A$} % simulation from adversaries}
\label{sec:algo}
%============================================

To show that any task $T$ solvable in $\R_\A^*$ is solvable in a fair $\A$-model, 
we present an algorithm solving~$\R_\A$ in the $\alpha$-model. 
By iterating this task, we obtain $\R_\A^*$ and can solve~$T$.
	
\subsection{Algorithm Description}
In our solution of $\R_\A$, presented in Algorithm~\ref{Alg:R_A_resolution},
every process accesses two immediate snapshot objects:
$\mathit{FirstIS}$ to which it proposes its initial state, and
$\mathit{SecondIS}$ to which it proposes the outcome of $\mathit{FirstIS}$.
Recall that outcomes of $\mathit{SecondIS}$ form a simplex in $\Chr^2\s$~\cite{Koz12}. 
To ensure that simplices are in~$\R_\A$, after finishing $\mathit{FirstIS}$,
processes wait for their turns to proceed to $\mathit{SecondIS}$.

In this \emph{waiting phase} (Lines~\ref{Alg:RA:FirstIS}--\ref{Alg:RA:WaitConditionEnd}), 
processes check a specific condition on the $\IS$ outcomes that they share with each
others in registers $\ISone[1,\ldots,n]$ and $\IStwo[1,\ldots,n]$.   
Each process $p_i$ periodically checks whether either 
(1) it belongs to a critical simplex by using the formula at Line~\ref{Alg:RA:IsCritical},
or (2)~if the number, computed at Line~\ref{Alg:RA:Rank}, of non-terminated processes 
($\IStwo[j]=\emptyset$) which may have a smaller $\mathit{FirstIS}$ output 
($j\in\ISone[i]$ and $\ISone[j]\neq\ISone[i]$) is smaller that some 
``level of concurrency''. This level of concurrency is computed at 
Line~\ref{Alg:RA:MaxConc} as the maximum between 
(1) the agreement power associated with the $\mathit{View}^1$ 
of the process itself ($\alpha(\ISone[i])$) or 
(2) with the concurrency levels shared using the $\mathit{Conc}$ registers 
by ``terminated'' critical simplices, i.e., a critical simplex with 
all its processes provided with $\mathit{secondIS}$ outputs (Line~\ref{Alg:RA:UpdateConc}).

\begin{algorithm}
\caption{Resolution of $R_\A$ in the $\alpha$-model for process $p_i$.\label{Alg:R_A_resolution}} 
\SetKwRepeat{algo}{$\R_\A(\mathit{input_i})$:}{End $\R_\A$}
\textbf{Immediate Snapshot Objects:} $\mathit{FirstIS}$,
$\mathit{SecondIS}$\;
\textbf{Shared Registers:} $\mathit{Conc}[1],\dots,\mathit{Conc}[n] \in \{0,\dots,n\}$,
$\mathbf{initially}$ $0$\;
$\ISone[1],\dots, \ISone[n] \in 2^{\Pi}$, $\mathbf{initially}$ $\emptyset$ and 
$\IStwo[1],\dots,\IStwo[n] \in 2^{2^{\Pi}}$,
$\mathbf{initially}$ $\emptyset$\;

\vspace{1em}

\algo{}{
	$\ISone[i] \leftarrow \mathit{FirstIS}(\mathbf{input_i})$\;\label{Alg:RA:FirstIS}
	\bf{wait until } $\mathit{crit}\vee(\mathit{rank}<\mathit{conc})$ \bf{with} \label{Alg:RA:WaitCondition}\\
	$\mathit{crit}=(\alpha(\ISone[i])>\alpha(\ISone[i]\setminus\{p_j\in\Pi:\ISone[j]=\ISone[i]\}))$\label{Alg:RA:IsCritical}\\
	\bf{and} $\mathit{rank}=|\{p_j\in\ISone[i]:\IStwo[j]=\emptyset\wedge
          \ISone[j]\neq\ISone[i] \}|$\label{Alg:RA:Rank}\\
    \bf{and} $\mathit{conc}=\max\left(\alpha(\ISone[i]),\max_{j\in \{1,\dots, n\}}(Conc[j])\right)$\label{Alg:RA:MaxConc}\;
	\label{Alg:RA:WaitConditionEnd}
\vspace{1em}
	$\IStwo[i] \leftarrow \mathit{SecondIS}(\ISone[i])$\;	\label{Alg:RA:SecondIS}	
	\If{$\left(\alpha(\ISone[i])>\alpha(\ISone[i]\setminus\left\lbrace p_j\in\Pi:(\ISone[j]=\ISone[i])\wedge(\IStwo[j]\neq\emptyset)\right\rbrace)\right)$}{
		$\mathit{Conc}[i] \leftarrow \alpha(\ISone[i])$\;
	}\label{Alg:RA:UpdateConc}
	$\mathbf{return}(\IStwo[i])$\;	\label{Alg:RA:return}	
}	
\end{algorithm}

Intuitively, the waiting phase is used to ensure that \emph{critical processes},
i.e., members of critical simplices, are prioritized to proceed 
with $\mathit{SecondIS}$ over non-critical ones. 
A process may proceed to its $\mathit{SecondIS}$ as soon as it knows 
that it belongs to some $\mathit{critical}$ simplex ($\mathit{crit}=\mathit{true}$).
A non-critical process is allowed to exit its waiting phase only 
when the number of potentially contending processes is smaller than 
the computed concurrency level ($\mathit{rank}<\mathit{conc}$).
The proof relies mostly on showing that there are enough critical 
simplices to prevent non-critical processes from being blocked in 
the waiting phase.

\subsection{Proof Sketch}

In order to show that Algorithm~\ref{Alg:R_A_resolution} solves $\R_\A$ 
in the $\alpha$-model corresponding to the fair adversary~$\A$, 
we need to show that (1) every correct process eventually outputs 
and that (2) the set of ouputs belongs to a simplex in $\R_\A$. 
Note that as processes execute two consecutive immediate snapshot protocols,
all outputs belong to some simplex in $\Chr^2\s$. Let us consider 
a run of the $\alpha$ model in which the participation is $P$, hence with $\alpha(P)>0$.

To show that outputs belong not only to $\Chr^2\s$ but to $\R_\A$ and 
that all correct processes terminate, we mostly rely on the  
distribution of critical simplices. We are interested in showing that the 
number of processes failures, required to prevent critical simplices from
either appearing in $\ISone$ or completing their $\IStwo$ computation,
scales with the agreement power of the participation. Moreover, 
we want to show that the less processes fail in such a way, 
the higher the maximal agreement power associated with 
a terminated critical simplices.

A process failure may prevent multiple critical simplices to 
terminate. Indeed, a process may be included in multiple critical 
simplices, and thus, its failure would prevent multiple critical 
simplices from terminating. This is why we are interested not in 
the distribution of critical processes or critical simplices, 
but instead, in the minimal hitting set size for 
the set of critical simplices. Let us recall that an hitting set 
of a set of sets $\mathcal{Q}$, is a set intersecting with all 
sets from $\mathcal{Q}$, and that $\HSS$ denotes the minimal hitting set size.
More precisely, we want to know the minimal hitting set size 
of (1) any subset of the participation and (2) of the set of 
critical simplices associated with an agreement 
power greater than or equal to some level $l$, i.e., 
$\{\theta\in\CS(\sigma), \alpha(\chi(\Car(\theta,\s)))\geq l\}$.

\subsection{Distribution of Critical Simplices}

Let us first look at the case in which no participating process
fails before updating its $\ISone$ output to the memory. In this 
case, the set of $\ISone$ views forms a simplex $\sigma\in\Chr\s$ such 
that~$\chi(\sigma)=\chi(\Car(\sigma,\s))$: The observed processes 
include all participating processes (inclusion property) but no others. 
In this setting we can show that the minimal hitting set size of 
the set of critical simplices associated with an agreement power 
greater than or equal to some level $l$, is greater than or equal to
the agreement power of the participation minus $l-1$, i.e., 
$\alpha(\chi(\sigma))-l+1$:

\begin{lemma}{[Distribution of critical simplices]:\label{lem:CSDistribution}}
$\forall\sigma\in\Chr\s,\forall l\in\mathbb{N}:$
\[\chi(\sigma)=\chi(\Car(\sigma,\s)) \implies 
\alpha(\chi(\sigma))-l+1\leq \HSS(\{\theta\in\CS(\sigma), \alpha(\chi(\Car(\theta,\s)))\geq l\}){}.\]
\end{lemma}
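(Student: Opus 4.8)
The plan is to exploit the rigid structure that the hypothesis $\chi(\sigma)=\chi(\Car(\sigma,\s))$ forces on $\sigma\in\Chr\s$. Writing $P=\chi(\sigma)$, the first-round views $\Car(v,\s)$, $v\in\sigma$, satisfy containment and immediacy, hence form a chain $\emptyset=V_0\subsetneq V_1\subsetneq\dots\subsetneq V_k$; the hypothesis forces $V_k=\Car(\sigma,\s)$ with $\chi(V_k)=P$ and, by immediacy, it forces the view of each vertex to be exactly the union of the colour classes of the ``earlier'' vertices. So $P$ is partitioned into blocks $B_j=\chi(V_j)\setminus\chi(V_{j-1})$, and a vertex of $\sigma$ coloured in $B_j$ has carrier $V_j$. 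The first step is then to unfold the definition of $\mathit{Critical}_\alpha$: by condition~(1), a critical simplex $\theta\subseteq\sigma$ has all its colours in a single block, $\chi(\theta)\subseteq B_j$, with $\Car(\theta,\s)=V_j$, and by condition~(2) it is critical iff $\alpha(\chi(V_j)\setminus\chi(\theta))<\alpha(\chi(V_j))$; moreover then $\alpha(\chi(\Car(\theta,\s)))=\alpha(\chi(V_j))$. Thus the family to be hit, $\mathcal{F}:=\{\theta\in\CS(\sigma):\alpha(\chi(\Car(\theta,\s)))\geq l\}$, decomposes as $\bigcup_{j:\alpha(\chi(V_j))\geq l}\mathcal{F}_j$ with $\mathcal{F}_j=\{C\subseteq B_j:\alpha(\chi(V_j)\setminus C)<\alpha(\chi(V_j))\}$ (identifying colours with vertices of $\sigma$), and since the $B_j$ are pairwise disjoint it suffices to bound each per-block hitting number from below and sum.

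The core estimate is a per-block ``duality'' step. Fix a block $j$ with $\alpha(\chi(V_j))\geq l$ and let $H$ be any hitting set of $\mathcal{F}$; we may assume $H\subseteq P$. Then $H\cap B_j$ hits $\mathcal{F}_j$, so the complementary set $C_0:=B_j\setminus H$, being disjoint from $H$, is not in $\mathcal{F}_j$; since $\alpha$ is monotone we have $\alpha(\chi(V_j)\setminus C_0)\leq\alpha(\chi(V_j))$ always, so ``not in $\mathcal{F}_j$'' means $\alpha(\chi(V_j)\setminus C_0)=\alpha(\chi(V_j))$. Now $\chi(V_j)\setminus C_0=\chi(V_{j-1})\cup(H\cap B_j)$, and bounded growth of $\alpha$ gives $\alpha(\chi(V_j))=\alpha(\chi(V_{j-1})\cup(H\cap B_j))\leq\alpha(\chi(V_{j-1}))+|H\cap B_j|$, whence $|H\cap B_j|\geq\alpha(\chi(V_j))-\alpha(\chi(V_{j-1}))$.

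Summing over all blocks with $\alpha(\chi(V_j))\geq l$ and using disjointness, $|H|\geq\sum_{j:\alpha(\chi(V_j))\geq l}\bigl(\alpha(\chi(V_j))-\alpha(\chi(V_{j-1}))\bigr)$. By monotonicity the set $\{j:\alpha(\chi(V_j))<l\}$ is an initial segment $\{1,\dots,j_0\}$, so this sum telescopes to $\alpha(\chi(V_k))-\alpha(\chi(V_{j_0}))=\alpha(\chi(\sigma))-\alpha(\chi(V_{j_0}))$; and since $\alpha(\chi(V_{j_0}))<l$ is an integer, $\alpha(\chi(V_{j_0}))\leq l-1$, giving $|H|\geq\alpha(\chi(\sigma))-l+1$. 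As $H$ was arbitrary, this is exactly $\HSS(\mathcal{F})\geq\alpha(\chi(\sigma))-l+1$ (and when $l>\alpha(\chi(\sigma))$ the right-hand side is non-positive, so there is nothing to prove).

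The mathematical heart of the argument is short — the per-block inequality $|H\cap B_j|\geq\alpha(\chi(V_j))-\alpha(\chi(V_{j-1}))$, which is a one-line consequence of bounded growth once one dualizes from ``$H$ must hit every subset whose removal drops $\alpha$'' to ``the complement of $H$ must not drop $\alpha$''. I expect the actual work to be the bookkeeping that precedes it: proving carefully that critical sub-simplices of $\sigma$ are exactly the $\alpha$-dropping subsets of the blocks (this is precisely where $\chi(\sigma)=\chi(\Car(\sigma,\s))$ is needed, so that every vertex's view exhausts a block-prefix of $P$ and no ``phantom'' colours appear), and justifying that an optimal hitting set of simplices of $\sigma$ may be taken inside $P$ so that the block decomposition of $H$ is well defined.
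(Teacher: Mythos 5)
Your proof is correct (modulo the same $l=0$ edge case that the paper quietly skips by opening with ``fix some integer $l>0$''), but it reaches the bound by a genuinely different route than the paper. The paper proceeds by induction on the pair $(\alpha(\chi(\sigma)),|\chi(\sigma)|)$: it splits $\sigma$ into the top block $\tau$ (vertices sharing the carrier $\Car(\sigma,\s)$) and the rest $\beta$, invokes the inductive hypothesis on $\beta$, and then \emph{exhibits} an explicit family of critical simplices inside $\tau$ (all subsets missing fewer than $m$ vertices, where $m=\alpha(\chi(\sigma))-\alpha(\chi(\beta))$) whose hitting number is exactly $m$, combining this with $\HSS(\CS(\sigma))\geq \HSS(\CS(\tau))+\HSS(\CS(\beta))$. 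You unroll the same block decomposition all at once (rather than peeling one layer per induction step) and replace the constructive per-block estimate by a dual one: given an arbitrary hitting set $H$, the complement $B_j\setminus H$ of its trace in a block cannot be $\alpha$-dropping, and bounded growth then forces $|H\cap B_j|\geq\alpha(\chi(V_j))-\alpha(\chi(V_{j-1}))$, after which a telescoping sum finishes. Both arguments rest on precisely the same two ingredients -- the chain/block structure of first-round views under the hypothesis $\chi(\sigma)=\chi(\Car(\sigma,\s))$, and monotonicity plus bounded growth of $\alpha$ -- so they are close in spirit, but your formulation avoids the induction scaffolding and the need to identify a concrete critical family, at the cost of having to justify carefully (as you do) that critical sub-simplices of $\sigma$ are exactly the $\alpha$-dropping subsets of the blocks.
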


\begin{proof} 
Let us fix some integer $l>0$. 
To show Lemma~\ref{lem:CSDistribution}, we proceed by an induction on $\sigma$ 
using the lexicographical order on $(\alpha(\chi(\sigma)),|\chi(\sigma)|)$.
For any simplex~$\sigma$, such that~$\alpha(\chi(\sigma))< l$, 
the result is trivial as for any (possibly empty) set $\mathcal{Q}$, we have 
$\HSS(\mathcal{Q})\geq 0$. Now consider a simplex~$\sigma\in\Chr\s$ such 
that~$\chi(\sigma)=\chi(\Car(\sigma))$ and $\alpha(\chi(\sigma))= k \geq l$. 
Let us assume by induction that for all~$\sigma'\in \Chr \s$, 
if~$(\alpha(\chi(\sigma')),|\chi(\sigma')|)<_{lex}(\alpha(\chi(\sigma')),|\chi(\sigma)|)$, 
then we have:
\[
\chi(\sigma)=\chi(\Car(\sigma',\s)) \implies
\alpha(\chi(\sigma'))-l+1 \leq \HSS(\{\theta\in\CS(\sigma'), \alpha(\chi(\Car(\theta,\s)))\geq l\}){}.
\]
Now consider the face $\tau$ of $\sigma$ consisting of all vertices of $\sigma$ with the
same carrier as $\sigma$, i.e., $\tau=\{v\in\sigma, \Car(v,\s)=\Car(\sigma,\s)\}$. 
Let $\beta$ be the complement of $\tau$, i.e., $\beta=\sigma\setminus\tau$.
Note that~$\tau\neq\emptyset$, due to the containment property, and that, 
$\chi(\Car(\beta,\s))=\chi(\Car(\sigma,\s))\setminus\chi(\tau)$, 
due to the immediacy property. 
Therefore, we obtain that $\chi(\Car(\beta,\s))=\chi(\sigma)\setminus\chi(\tau)$, 
and so that~$\chi(\Car(\beta,\s))=\chi(\beta)$. 
As $(\alpha(\chi(\beta)),|\chi(\beta)|)<_{lex}(\alpha(\chi(\sigma)),|\chi(\sigma)|)$, 
we obtain that:
\begin{equation}
\alpha(\chi(\beta))-l+1 \leq 
\HSS(\{\theta\in\CS(\beta), \alpha(\chi(\Car(\theta,\s)))\geq l\}){}. 
\label{eq:Induct}
\end{equation}
Two cases may arise:
\begin{enumerate}
\item If $\alpha(\chi(\beta))=\alpha(\chi(\sigma))$, then, as $\beta\subseteq\sigma$  
we get that $\CS(\beta)\subseteq\CS(\sigma)$, hence,
we can derive from Equation~\ref{eq:Induct} that:
\[
\alpha(\chi(\sigma))-l+1 \leq \HSS(\{\theta\in\CS(\sigma), \alpha(\chi(\Car(\theta,\s)))\geq l\}){}.
\]
\item If $\alpha(\chi(\beta))<\alpha(\chi(\sigma))$, 
then let $m=\alpha(\chi(\sigma))-\alpha(\chi(\beta))>0$ and 
let us consider any subset~$\tau'$ of $\tau$ such that $|\tau'|>|\tau|-m$.
By construction we have $\Car(\tau',\s)=\Car(\sigma,\s)$ 
and by assumption we have $\chi(\Car(\sigma,\s))=\chi(\sigma)$, 
and thus, we obtain that $\chi(\Car(\tau',\s))=\chi(\sigma)$. 
Let us recall that $\forall v\in\tau:\Car(v,\s)=\Car(\tau,\s)$,
and therefore $\mathit{Critical}_\alpha(\tau')$ if and only 
if~$\alpha(\chi(\sigma)\setminus\chi(\tau'))<\alpha(\chi(\sigma))$. 

Given a fair adversary, for any $Q\subseteq P$, 
we have $\alpha(P)\geq\alpha(P\setminus Q)\geq\alpha(P)-|Q|$.
Note that this property was shown to be true for any fair model 
in~\cite{KR17} (see Section~\ref{sec:adv}).
Note that this implies that $|\chi(\tau)|\geq m$.
By applying the formula for $P=\chi(\sigma)\setminus\chi(\tau')$ 
and for~$Q=\chi(\tau)\setminus\chi(\tau')$ we get that:
\[\alpha(\chi(\sigma)\setminus\chi(\tau'))\geq \alpha(\chi(\sigma)\setminus\chi(\tau))\geq\alpha(\chi(\sigma)\setminus\chi(\tau'))-|\chi(\tau)\setminus\chi(\tau')|{}.
\]
But by construction $\chi(\sigma)\setminus\chi(\tau)=\chi(\beta)$
and $|\chi(\tau)\setminus\chi(\tau')|<m$, thus we obtain that:
\[\alpha(\chi(\sigma)\setminus\chi(\tau))\geq\alpha(\chi(\sigma)\setminus\chi(\tau'))-|\chi(\tau)\setminus\chi(\tau')| \implies \alpha(\chi(\sigma)\setminus\chi(\tau'))<\alpha(\chi(\beta))+m.
\]
As $m=\alpha(\chi(\sigma))-\alpha(\chi(\beta))$, we obtain that $\alpha(\chi(\sigma)\setminus\chi(\tau'))<\alpha(\chi(\sigma))$, and hence, that~$\mathit{Critical}_\alpha(\tau')$.
Since by construction $\beta=\sigma\setminus\tau$, 
we have the following inequality: $\HSS(\CS(\sigma))\geq
\HSS(\CS(\tau))+\HSS(\CS(\beta))$. 
Moreover, as $\alpha(\chi(\sigma))\geq l$, we obtain:
\begin{multline}
\HSS(\{\theta\in\CS(\sigma), \alpha(\chi(\Car(\theta,\s)))\geq l\})\geq 
\\
\HSS(\{\theta\in\CS(\beta), \alpha(\chi(\Car(\theta,\s)))\geq l\})+\HSS(\CS(\tau))
\label{eq:Partition}
\end{multline}
But as any subset $\tau'$ of $\tau$, such that $|\tau'|>|\tau|-m$, is 
critical, we have:
\[
\HSS(\CS(\tau))\geq
\HSS(\{\tau'\subseteq\tau, |\chi(\tau')|> |\chi(\tau)|-m\}){}.
\]
Moreover, since $|\chi(\tau)|\geq m$, 
we have $\HSS(\{\tau'\subseteq\tau, |\chi(\tau')|> |\chi(\tau)|-m\})=m$, 
and hence, that $\HSS(\CS(\tau))\geq m$. 
With $m=\alpha(\chi(\sigma))-\alpha(\chi(\beta))$ and 
Equations~\ref{eq:Induct} and~\ref{eq:Partition}, we obtain:
\begin{eqnarray*}
\alpha(\chi(\sigma))-l+1 &= & (\alpha(\chi(\beta))-l+1)+m\\
&\leq &
\HSS(\{\theta\in\CS(\beta), \alpha(\chi(\Car(\theta,\s)))\geq l\})+\HSS(\CS(\tau))\\
&\leq & 
\HSS(\{\theta\in\CS(\sigma), \alpha(\chi(\Car(\theta,\s)))\geq l\})
\end{eqnarray*}
\end{enumerate}
\end{proof}
 
The result of Lemma~\ref{lem:CSDistribution} can be used to generalize it 
for cases in which not all participating processes shared their \ISone
outputs to the memory. If so, the minimal hitting set size 
decreases proportionally with the number of missing outputs:

\begin{corollary} For any $\sigma\in \Chr\s$, we have:\label{cor:CSDistributionWFailures}
\[\alpha(\chi(\Car(\sigma,\s)))-l-|\chi(\Car(\sigma,\s))\setminus\chi(\sigma)|+1\leq
\]\[
\HSS(\{\theta\in\CS(\sigma), \alpha(\chi(\Car(\theta,\s)))\geq l\}){}.
\]
\end{corollary}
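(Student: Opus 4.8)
The plan is to reduce to Lemma~\ref{lem:CSDistribution} by completing $\sigma$ to a full-colored simplex on its carrier. Write $F=\Car(\sigma,\s)$ and $d=|\chi(F)\setminus\chi(\sigma)|$; since $\chi(\sigma)\subseteq\chi(F)$, this $d$ equals $|\chi(\Car(\sigma,\s))\setminus\chi(\sigma)|$. Because $\Car(\sigma,\s)=F$, the simplex $\sigma$ belongs to $\Chr F$, the subcomplex of $\Chr\s$ consisting of the simplices whose carrier is contained in $F$. As $\Chr F$ is a pure chromatic complex of dimension $|F|-1$ (being a subdivision of the simplex $F$), $\sigma$ is a face of some facet $\hat\sigma$ of $\Chr F$. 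Such a facet has $|F|$ vertices of pairwise distinct colors, so $\chi(\hat\sigma)=\chi(F)$, and its carrier satisfies $F=\Car(\sigma,\s)\subseteq\Car(\hat\sigma,\s)\subseteq F$, hence $\Car(\hat\sigma,\s)=F$. Thus $\chi(\hat\sigma)=\chi(\Car(\hat\sigma,\s))$, so Lemma~\ref{lem:CSDistribution} applied to $\hat\sigma$ gives $\alpha(\chi(F))-l+1\leq \HSS(\{\theta\in\CS(\hat\sigma):\alpha(\chi(\Car(\theta,\s)))\geq l\})$.

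Next I would relate the critical simplices of $\sigma$ to those of $\hat\sigma$. Being a critical simplex depends only on the simplex itself, through $\Car(\cdot,\s)$ and $\chi(\cdot)$ (per the definition of $\mathit{Critical}_\alpha$), so $\CS(\sigma)$ is precisely the set of faces of $\hat\sigma$ that are critical and are contained in $\sigma$. Letting $W=\hat\sigma\setminus\sigma$, a face of $\hat\sigma$ is contained in $\sigma$ iff it is disjoint from $W$, and $W$ contains exactly one vertex for each of the $d$ colors in $\chi(F)\setminus\chi(\sigma)$, so $|W|=d$. Hence the family $\{\theta\in\CS(\sigma):\alpha(\chi(\Car(\theta,\s)))\geq l\}$ is obtained from its $\hat\sigma$-counterpart by keeping only the simplices disjoint from $W$.

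The last ingredient is an elementary fact about hitting sets: if a family $\mathcal{Q}'$ is obtained from a family $\mathcal{Q}$ by keeping only the members disjoint from a fixed set $W$, then $\HSS(\mathcal{Q}')\geq \HSS(\mathcal{Q})-|W|$, because if $H'$ is a minimum hitting set of $\mathcal{Q}'$ then $H'\cup W$ hits every member of $\mathcal{Q}$ (those meeting $W$ are hit by $W$, the rest lie in $\mathcal{Q}'$ and are hit by $H'$). Applying this with $|W|=d$ to the two families from the previous paragraph, and substituting $\chi(F)=\chi(\Car(\sigma,\s))$ and $d=|\chi(\Car(\sigma,\s))\setminus\chi(\sigma)|$ into the inequality from the first paragraph, yields the corollary.

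I do not expect a real obstacle here; the one point that needs a little care is the existence of the completion $\hat\sigma$ together with $\Car(\hat\sigma,\s)=F$, i.e., that the standard chromatic subdivision of a simplex is pure of full dimension, so that $\sigma$ lies inside a facet using every color of its carrier. This is a standard property of $\Chr$. Everything else is bookkeeping with carriers and colorings and the one-line hitting-set inequality.
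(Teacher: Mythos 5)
Your proof is correct and follows essentially the same route as the paper's: complete $\sigma$ to a full-dimension $\sigma'$ (your $\hat\sigma$) with $\chi(\sigma')=\chi(\Car(\sigma,\s))=\chi(\Car(\sigma',\s))$, apply Lemma~\ref{lem:CSDistribution} to $\sigma'$, and absorb the $d$ missing colors into a hitting set via the observation that adding the complementary vertices $W$ to a hitting set of $\CS(\sigma)$ yields a hitting set of $\CS(\sigma')$. Your version is, if anything, stated more carefully than the paper's (the paper's write-up contains a couple of $\sigma$/$\sigma'$ swaps in the hitting-set step), but the mathematical content and structure are the same.
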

\begin{proof}
Consider some $\sigma\in\Chr\s$. By construction, $\sigma$ is a sub-simplex of 
some simplex $\sigma'$ such that $\chi(\Car(\sigma,\s))=\chi(\Car(\sigma',\s))=\chi(\sigma')$. 
Hence, we can apply Lemma~\ref{lem:CSDistribution} on $\sigma'$ and obtain that:
\begin{equation}
\alpha(\chi(\Car(\sigma,\s)))-l+1\leq \HSS(\{\theta\in\CS(\sigma'), \alpha(\chi(\Car(\theta,\s)))\geq l\}){}.
\label{eq:Completed}
\end{equation}
But $\CS(\sigma)\subseteq\CS(\sigma')$ and thus 
given $H$ a minimal hitting set of $\CS(\sigma')$, 
$H\cup(\chi(\sigma)\setminus\chi(\sigma'))$ is an hitting set of 
$\CS(\sigma')$. Therefore 
$\HSS(\{\theta\in\CS(\sigma), \alpha(\chi(\Car(\theta,\s)))\geq l\})$ 
is greater than or equal to 
$\HSS(\{\theta\in\CS(\sigma'), \alpha(\chi(\Car(\theta,\s)))\geq l\})+|\chi(\sigma)\setminus\chi(\sigma')|$, and thus, is greater than or equal to 
$\HSS(\{\theta\in\CS(\sigma'), \alpha(\chi(\Car(\theta,\s)))\geq l\})+|\chi(\Car(\sigma,\s))\setminus\chi(\sigma)|$. 
Using this in Equation~\ref{eq:Completed} gives us the property of Corollary~\ref{cor:CSDistributionWFailures}.
\end{proof}

\subsection{Algorithm Liveness}

Corollary~\ref{cor:CSDistributionWFailures} is a generalization of 
Lemma~\ref{lem:CSDistribution} to account for a partial set of first immediate 
snapshot outputs. This can be used to show the liveness of the algorithm:

\begin{lemma}\label{lem:Ra_AlgoLiveness}
Algorithm~\ref{Alg:R_A_resolution} provides outputs to all correct processes
in any $\alpha$-model.
\end{lemma}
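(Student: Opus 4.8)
\noindent\emph{Proof plan.}
The plan is to reduce liveness to the waiting phase and then to contradict the assumption that some correct process waits forever, using the abundance of critical simplices guaranteed by Corollary~\ref{cor:CSDistributionWFailures}. First I would note that every line of Algorithm~\ref{Alg:R_A_resolution} other than the waiting loop (Lines~\ref{Alg:RA:WaitCondition}--\ref{Alg:RA:WaitConditionEnd}) terminates for a correct process in the atomic-snapshot $\alpha$-model: the immediate-snapshot objects $\mathit{FirstIS}$ and $\mathit{SecondIS}$ have wait-free implementations from atomic snapshots, while the conditional update of $\mathit{Conc}[i]$ (Line~\ref{Alg:RA:UpdateConc}) and the return (Line~\ref{Alg:RA:return}) are finite local computations over the fixed agreement function $\alpha$ and finitely many registers. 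So it suffices to prove that no correct process stays in the waiting loop forever.

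Next, suppose towards a contradiction that the set $B$ of correct processes that remain in the waiting loop forever is non-empty, and consider a suffix of the run in which the registers $\ISone[\cdot]$, $\IStwo[\cdot]$ and $\mathit{Conc}[\cdot]$ have stabilised. Let $\sigma\in\Chr\s$ be the simplex of final $\ISone$ outputs, so $\chi(\sigma)$ is the set of processes that record a $\ISone$ output, $B\subseteq\chi(\sigma)$, and $P:=\chi(\Car(\sigma,\s))$ is the participating set; hence $|F|\le\alpha(P)-1$ where $F$ is the faulty set, and $|\chi(\Car(\sigma,\s))\setminus\chi(\sigma)|$ counts the (faulty) processes crashing before recording $\ISone$. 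Let $S$ be the set of processes whose $\IStwo$ register stays empty forever, so $B\subseteq S$ and $S\setminus B\subseteq F$. Among $B$, pick a process $p_i$ whose view $V:=\ISone[i]$ has maximal agreement power $\alpha(V)$. Using self-inclusion, containment and immediacy of the two immediate snapshots, $\mathit{rank}_i$ (Line~\ref{Alg:RA:Rank}) counts exactly the processes $p_j\in V$ with $\ISone[j]\subsetneq V$ and $p_j\in S$; since $p_i$ is stuck, $\mathit{rank}_i\ge\mathit{conc}_i\ge\max(\alpha(V),\max_k\mathit{Conc}[k])$, so, writing $r:=\mathit{rank}_i$, we get $\alpha(V)\le r$, every value ever written into a $\mathit{Conc}$ register is $\le r$, every process of $B$ has a view of agreement power $\le r$ (by the choice of $p_i$), and every process counted in $\mathit{rank}_i$ has a view contained in $V$, hence of agreement power $\le r$.

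The core of the proof is to contradict ``$\max_k\mathit{Conc}[k]\le r$''. The plan is to exhibit a critical simplex $\theta$ of carrier $W$ with $\alpha(W)\ge r+1$ that is disjoint from $B\cup F$, and to argue that such a $\theta$ forces some $\mathit{Conc}$ register to equal $\alpha(W)\ge r+1$: all members of $\theta$ share the view $W$, all of them eventually write $\IStwo$, and the last process with view $W$ to reach Line~\ref{Alg:RA:UpdateConc} evaluates its test over a set of terminated processes of view $W$ that contains $\theta$; since criticality is preserved when vertices with the same carrier are added, that set is critical and the test passes. For the existence of $\theta$: a critical simplex of power $\ge r+1$ can contain neither a member of $B$ (those views have power $\le r$) nor one of the faulty processes counted in $\mathit{rank}_i$ (same reason), so among the at most $\alpha(P)-1$ failures only those with a view of power $\ge r+1$ can hit such a simplex, and these are disjoint both from the $\mathit{rank}_i$ ``low'' failures and from the early crashes. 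Confronting this with the lower bound of Corollary~\ref{cor:CSDistributionWFailures} applied to $\sigma$ at level $r+1$ (and combined across agreement levels using monotonicity and bounded growth of $\alpha$, together with the identity $\chi(\Car(\beta,\s))=\chi(\Car(\sigma,\s))\setminus\chi(\tau)$ from the proof of Lemma~\ref{lem:CSDistribution}) shows that the surviving ``high'' failures are too few to be a hitting set of the level-$(r+1)$ critical simplices, so the desired $\theta$ exists, some $\mathit{Conc}$ register exceeds $r$, and $p_i$ is not stuck after all; hence $B=\emptyset$.

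I expect the main obstacle to be precisely this last book-keeping: one must account for every one of the $\le\alpha(P)-1$ failures so that the ``high'' ones left over after discarding those explaining $\mathit{rank}_i$ and the early crashes are provably insufficient to hit all level-$(r+1)$ critical simplices, which is where the level-wise bounds of Corollary~\ref{cor:CSDistributionWFailures} must be invoked at several levels and where the choice of $p_i$ (maximising $\alpha(\ISone[i])$ over $B$) is essential to keep $B$ out of the high critical simplices. A secondary technical point is the ``reporter'' step for Line~\ref{Alg:RA:UpdateConc}: one must verify that whenever the terminated processes of a common view form a critical simplex, at least one of them reaches Line~\ref{Alg:RA:UpdateConc} while the set of already-terminated same-view processes is still critical, which follows from the stability of $\IStwo$ outputs and the upward closure of criticality within a fixed carrier.
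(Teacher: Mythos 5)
Your overall strategy mirrors the paper's---reduce to the waiting loop, invoke Corollary~\ref{cor:CSDistributionWFailures} to guarantee enough surviving critical simplices, then use the ``reporter'' update of $\mathit{Conc}$ on Line~\ref{Alg:RA:UpdateConc} to contradict the stuck condition---but the pivotal bookkeeping step is done differently and, as written, has a gap.

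The paper picks $p$ as the stuck correct process with the \emph{smallest} $\ISone$ view. This choice is exactly what guarantees that every process counted in $p$'s $\mathit{rank}$ (i.e.\ every $p_j\in\ISone[p]$ with $\IStwo[j]=\emptyset$ and $\ISone[j]\neq\ISone[p]$, hence with $\ISone[j]\subsetneq\ISone[p]$ by immediacy) is \emph{faulty}: no stuck correct process can have a strictly smaller view than $p$. You instead pick $p_i\in B$ maximizing $\alpha(\ISone[i])$. With that choice, $\mathit{rank}_i$ may count other stuck correct processes whose views are strict subsets of $V=\ISone[i]$, and these are not failures. This breaks the failure-budget inequality your plan needs: to show the ``high'' failures are too few to be a hitting set of the level-$(r+1)$ critical simplices, you must subtract the ``low'' failures and early crashes from the total budget $\alpha(P)-1$, and then you need roughly $r\le|\text{low failures}|$; but with your choice $r$ can exceed the number of low failures by $|\{q\in B: \ISone[q]\subsetneq V\}|$, which can be positive. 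Your stated rationale for the max-power choice---keeping $B$ out of high critical simplices---is not actually needed: since a critical simplex consists of processes all sharing the same carrier, and a stuck process would eventually pass the $\mathit{crit}$ test of Line~\ref{Alg:RA:IsCritical} once its co-members have written $\ISone$, no stuck correct process is a member of any critical simplex in $\sigma$ anyway. So the choice you made buys nothing and costs the essential property. To repair the argument, replace ``maximal $\alpha(\ISone[i])$'' by ``minimal $\ISone$ view among non-terminated correct processes,'' and redo the accounting using the paper's $m_1$ (early crashes outside $p$'s view) and $m_2$ (failures of high-power critical members between $\ISone$ and $\IStwo$) to choose the right level $l=\alpha(\chi(\Car(\sigma,\s)))-m_1-m_2$ at which to invoke Corollary~\ref{cor:CSDistributionWFailures}.
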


\begin{proof}
Let $P$ be the participating set and let us assume that there is 
a correct process which never terminates. 
Let $p$ be the correct processes which does not terminate 
with the smallest $\ISone$ view, let $v\in\Chr\s$ 
be the vertex corresponding to its $\ISone$ view, and 
let $\sigma\in\Chr\s$ be the simplex corresponding to the set of 
IS$^1$outputs when $\ISone$ has been updated for the last time.

Due to the immediacy property, processes in $\chi(\Car(v,\s))$ must be associated 
with a vertex~$v'$ such that $\Car(v',\s)\subseteq\Car(v,\s)$, and therefore, 
with $\alpha(\chi(\Car(v',\s)))\leq\alpha(\chi(\Car(v,\s)))$.
Hence, in any completion of $\sigma$ to a simplex~$\sigma'\in\Chr\s$ 
to include the processes which are in $\chi(\Car(v,\s))$ but not in $\chi(\sigma)$, the 
set of critical simplices associated with an agreement power strictly greater than 
$\alpha(\chi(\Car(v,\s)))$ does not change. 
Thus applying Corollary~\ref{cor:CSDistributionWFailures} to any such completion $\sigma'$ of~$\sigma$,
we obtain that, for any $l>\alpha(\chi(\Car(v,\s)))$:
\[
\alpha(\chi(\Car(\sigma,s)))-l-|\chi(\Car(\sigma,\s))\setminus(\chi(\sigma)\cup\chi(\Car(v,\s)))|+1\leq 
\]\[
\HSS(\{\theta\in\CS(\sigma), \alpha(\chi(\Car(\theta,\s)))\geq l\}){}.
\]
Moreover, any process in $P\setminus\chi(\Car(\sigma,s))$ must have failed. 
Thus, in $\chi(\Car(\sigma,s))$ at most $\alpha(P)-1-(|P\setminus\chi(\Car(\sigma,s))|)$ 
processes may fail. Let us recall from the proof of Lemma~\ref{lem:CSDistribution},
that for the agreement function of any fair adversary, and
for any $Q\subseteq P$, we have $\alpha(P)\geq\alpha(P\setminus Q)\geq\alpha(P)-|Q|$.
Thus we can derive, by using $Q=P\setminus\chi(\Car(\sigma,s))$, that
at most $\alpha(\chi(\Car(\sigma,s)))-1$ processes in $\chi(\Car(\sigma,s))$ may fail.

Let $m_1=|\chi(\Car(\sigma,\s))\setminus(\chi(\sigma)\cup\chi(\Car(v,\s)))|$, 
be the number of processes from $\chi(\Car(\sigma,\s))$ which (1) fail before 
updating their $\ISone$ to the memory and (2) are not included in the $\ISone$ view of $p$.
Let $m_2$ be the number of critical processes, associated with an agreement power 
strictly greater than $\alpha(\chi(\Car(v,\s)))$, which fail after updating 
their $\ISone$ but before updating their $\IStwo$.  

Let us now assume that $\alpha(\chi(\Car(\sigma,s)))-\alpha(\chi(\Car(v,s)))> m_1+m_2$, 
then by selecting $l=\alpha(\chi(\Car(\sigma,s)))-m_2-m_1$, we have $l>\alpha(\chi(\Car(v,s)))$), 
and hence, we obtain that:
\[
\HSS(\{\theta\in\CS(\sigma), \alpha(\chi(\Car(\theta,\s)))\geq \alpha(\chi(\Car(\sigma,s)))-m_2-m_1\})\geq m_2+1{}.
\]
If no critical simplex in 
$\{\theta\in\CS(\sigma), \alpha(\chi(\Car(\theta,\s)))\geq 
\alpha(\chi(\Car(\sigma,s)))-m_2-m_1\}$ terminates, 
one process from each of these critical simplices failed after updating 
its $\ISone$ but before updating its $\IStwo$, thus an hitting set failed. 
As only $m_2$ such processes may fail and as an hitting set must be greater than $m_2+1$, 
a critical simplex associated with an agreement power greater than or equal to 
$\alpha(\chi(\Car(\sigma,s)))-m_2-m_1$ terminates its $\IStwo$. 
Therefore eventually some process updates its $\mathit{Conc}$ register 
(on line~\ref{Alg:RA:UpdateConc})
to at least $\alpha(\chi(\Car(\sigma,s)))-m_2-m_1$. 

Now let us look back at $p$. It fails to terminate and thus 
never succeeds to pass the test on line~\ref{Alg:RA:WaitCondition}. 
Therefore we have that the number of processes seen by $p$
which do not terminate and do not have the same $\ISone$ view as $p$ 
are strictly more than the value of 
$\max(\alpha(\ISone[i]),\max_{j\in\{1,\dots,n\}}(\mathit{Conc}[j]))$,
with $\ISone[i]$ equal to $\chi(\Car(v,s))$. 
As~$p$ is the correct process with the smallest $\ISone$ view 
which does not terminate, it implies that there are strictly more 
than $\max(\alpha(\chi(\Car(v,s))),\max_{j\in\{1,\dots,n\}}(\mathit{Conc}[j]))$ 
failed processes with an $\ISone$ view 
strictly smaller than $p$. These failed processes are neither 
accounted in $m_1$ nor in $m_2$. Therefore, as at most 
$\alpha(\chi(\Car(\sigma,s)))-1$ processes in $\chi(\Car(\sigma,s))$ 
may fail, there are at most $\alpha(\chi(\Car(\sigma,s)))-1-m_1-m_2$ 
such processes which may fail. 
Thus $\alpha(\chi(\Car(\sigma,s)))-m_1-m_2-1\geq 
\max(\alpha(\chi(\Car(v,s))),\max_{j\in\{1,\dots,n\}}(\mathit{Conc}[j]))$.

Two cases may arise:
\begin{itemize}
\item If $\alpha(\chi(\Car(\sigma,s)))-\alpha(\chi(\Car(v,s)))> m_1+m_2$, 
then some process sets its $\mathit{Conc}$ register to a value greater than or equal to 
$\alpha(\chi(\Car(\sigma,s)))-m_2-m_1$ --- A contradiction.
\item Otherwise, 
$\alpha(\chi(\Car(\sigma,s)))-m_1-m_2-1\geq \alpha(\chi(\Car(v,s)))$
and so, we obtain a contradiction with the fact that
$\alpha(\chi(\Car(\sigma,s)))-\alpha(\chi(\Car(v,s)))\leq m_1+m_2$.
\end{itemize}
\end{proof}

\subsection{Algorithm Safety}

Showing the safety of Algorithm~\ref{Alg:R_A_resolution} bears some 
similarities with the liveness proof. In particular, it relies on the 
same Lemma~\ref{lem:CSDistribution} on the distribution of critical 
simplices.

\begin{lemma}\label{lem:Ra_AlgoSafety}
The set of outputs provided by Algorithm~\ref{Alg:R_A_resolution} forms 
a valid simplex in $\R_\A$.
\end{lemma}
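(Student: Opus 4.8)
The plan is first to note that, since every process runs two consecutive immediate snapshots ($\mathit{FirstIS}$ then $\mathit{SecondIS}$ in Algorithm~\ref{Alg:R_A_resolution}), the returned vectors automatically form a simplex $\sigma\in\Chr^2\s$; put $\rho=\Car(\sigma,\Chr\s)$. By Definition~\ref{def:RA} it then remains to verify the predicate $P(\theta,\sigma)$ for every face $\theta\subseteq\sigma$ and to complete $\sigma$ to a facet of $\Chr^2\s$ still respecting $P$. The completion only appends vertices with maximal $\mathit{View}^2$ (the crashed participants, given a trailing $\mathit{SecondIS}$ that observes the whole current memory), and such vertices are never the $\mathit{View}^1$-maximal vertex of a $2$-contention simplex, so they do not affect the argument; I will therefore fix a face $\theta\subseteq\sigma$ with $\theta\in\mathit{Cont}_2$ and $\chi(\theta)\cap\chi(\CSM(\rho))\cap\chi(\CSV(\tau))=\emptyset$, where $\tau=\Car(\theta,\Chr\s)$, and show $\dim(\theta)<\mathit{Conc}_\alpha(\tau)$.

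Next I would enumerate the vertices of $\theta$ as $v_0,\dots,v_m$ so that $\mathit{View}^1(v_0)\subsetneq\cdots\subsetneq\mathit{View}^1(v_m)$ (legitimate, since in a $2$-contention simplex any two $\mathit{View}^1$ are strictly nested), which by the contention condition forces $\mathit{View}^2(v_0)\supsetneq\cdots\supsetneq\mathit{View}^2(v_m)$ and hence $\tau=\Car(v_0,\Chr\s)$. Let $q=\chi(v_m)$, the process of $\theta$ that completed $\mathit{SecondIS}$ first. Using self-inclusion and immediacy of $\mathit{FirstIS}$, each $\chi(v_i)$ with $i<m$ lies in $\mathit{View}^1(v_m)=\ISone[q]$ and has $\ISone[\chi(v_i)]=\mathit{View}^1(v_i)\neq\ISone[q]$; using immediacy of $\mathit{SecondIS}$ together with $\mathit{View}^2(v_m)\subsetneq\mathit{View}^2(v_i)$, process $q$ did not observe $\chi(v_i)$ in $\mathit{SecondIS}$, so $\IStwo[\chi(v_i)]$ was still empty at the last moment $q$ evaluated the wait predicate on Line~\ref{Alg:RA:WaitCondition}. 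Hence at that moment $\mathit{rank}\geq m=\dim(\theta)$ (as computed on Line~\ref{Alg:RA:Rank}), so $q$ exited the wait either through $\mathit{crit}$ or through $\mathit{rank}<\mathit{conc}$, with $\mathit{conc}=\max(\alpha(\ISone[q]),\max_j\mathit{Conc}[j])$ as on Line~\ref{Alg:RA:MaxConc}. Since $\Car(v_m,\Chr\s)\subseteq\Car(v_0,\Chr\s)=\tau$ and $\mathit{Conc}_\alpha$ is monotone under inclusion, the lemma follows once I establish: (i) $q$ did not exit through $\mathit{crit}$, and (ii) $\mathit{conc}\leq\mathit{Conc}_\alpha(\Car(v_m,\Chr\s))$; for then $\dim(\theta)\leq\mathit{rank}<\mathit{conc}\leq\mathit{Conc}_\alpha(\Car(v_m,\Chr\s))\leq\mathit{Conc}_\alpha(\tau)$.

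For (i): if $q$ passed through $\mathit{crit}$ (Line~\ref{Alg:RA:IsCritical}), the processes sharing $q$'s first view form, at that instant, a critical simplex with carrier $\ISone[q]$ in $\s$; restricting it to the part that $q$ actually observes in $\mathit{SecondIS}$ and invoking the fairness/bounded-growth inequality $\alpha(P)\geq\alpha(P\setminus Q)\geq\alpha(P)-|Q|$ recalled in the proof of Lemma~\ref{lem:CSDistribution}, I would extract a critical simplex realised inside $\Car(v_m,\Chr\s)\subseteq\rho$ (hence inside $\tau$) whose members include a vertex of colour $q$ and whose $\mathit{View}^1$ contains $q$ --- which puts $q$ into $\chi(\theta)\cap\chi(\CSM(\rho))\cap\chi(\CSV(\tau))$, a contradiction. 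For (ii): if the maximum in $\mathit{conc}$ is attained by some $\mathit{Conc}[r]=k$, that register was written on Line~\ref{Alg:RA:UpdateConc} by a critical simplex all of whose members had already published their $\IStwo$ before $q$'s $\mathit{SecondIS}$ snapshot, so a write/snapshot-order argument places that critical simplex inside $\Car(v_m,\Chr\s)$, giving $\mathit{Conc}_\alpha(\Car(v_m,\Chr\s))\geq k$; if instead the maximum is $\alpha(\ISone[q])$, I would apply Corollary~\ref{cor:CSDistributionWFailures} to $\Car(v_m,\Chr\s)$ (whose carrier in $\s$ contains $\ISone[q]$) to guarantee a critical simplex of agreement power at least $\alpha(\ISone[q])$ inside it, the exclusion hypothesis again being used to discard the degenerate possibility that every such critical simplex has members or $\mathit{View}^1$ meeting $\chi(\theta)$.

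I expect step (ii) --- and, in the same spirit, step (i) --- to be the main obstacle: one must reconcile the runtime bookkeeping (the value of $\mathit{conc}$, the registers $\mathit{Conc}[\cdot]$, the test $\mathit{crit}$) with the purely combinatorial quantity $\mathit{Conc}_\alpha(\Car(v_m,\Chr\s))$, i.e.\ show that every witness responsible for $q$'s concurrency level or criticality corresponds to a critical simplex genuinely realised inside $q$'s second view. This is delicate because criticality is not inclusion-closed --- a subsimplex of a critical simplex need not be critical --- so a critical simplex cannot simply be truncated to the part $q$ observed, and because the $\mathit{Conc}$ registers may be written by critical simplices that $q$ only partially sees. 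The resolution mirrors the liveness proof of Lemma~\ref{lem:Ra_AlgoLiveness}: the distribution bounds of Lemma~\ref{lem:CSDistribution} and Corollary~\ref{cor:CSDistributionWFailures} supply enough realised critical simplices of the required agreement power (as a function of how many processes may fail before publishing $\ISone$ or $\IStwo$), while the exclusion hypothesis $\chi(\theta)\cap\chi(\CSM(\rho))\cap\chi(\CSV(\tau))=\emptyset$ is precisely what rules out the boundary cases in which the only available witnesses would overlap $\theta$ itself.
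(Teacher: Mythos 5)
Your overall strategy matches the paper's: form the simplex $\sigma\in\Chr^2\s$ from the two IS rounds, reduce WLOG to a facet, single out the vertex of $\theta$ with maximal $\mathit{View}^1$ and minimal $\mathit{View}^2$ (your $v_m$, the paper's $q_1$), freeze the run at the moment it passed the wait test, and split on $\mathit{crit}$ versus $\mathit{rank}<\mathit{conc}$. Your handling of the $\mathit{crit}$ case and of the sub-case where $\mathit{conc}$ is set by a $\mathit{Conc}[j]$ register also matches the paper's. The disagreement is in how you finish the sub-case $\mathit{conc}=\alpha(\ISone[q])$.

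The gap is your intermediate claim (ii), namely $\mathit{conc}\leq\mathit{Conc}_\alpha(\Car(v_m,\Chr\s))$. The paper never proves such an inequality, and I do not see how your sketch could yield it: Corollary~\ref{cor:CSDistributionWFailures} gives a \emph{lower bound on a hitting-set size}, i.e.\ it says that many critical simplices exist in a completed simplex of $\Chr\s$, but it says nothing about any one of them being fully realised inside $\Car(v_m,\Chr\s)$ (that requires all its members to have finished $\mathit{SecondIS}$ before $q$). Indeed, when $\mathit{conc}=\alpha(\ISone[q])$ and no high-power critical simplex has terminated, $\mathit{Conc}_\alpha(\Car(v_m,\Chr\s))$ can be strictly smaller than $\alpha(\ISone[q])$, so your chain $\mathit{rank}<\mathit{conc}\leq\mathit{Conc}_\alpha(\Car(v_m,\Chr\s))$ breaks. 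You correctly flag this reconciliation as the main obstacle, but (ii) is exactly the wrong target.

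What the paper does instead (and what you would need to do) is avoid (ii) altogether and bound $|\chi(\theta)|$ directly. Define $c$ as the largest agreement power of a critical simplex that terminated its $\mathit{SecondIS}$ before $q$; trivially $c\leq\mathit{Conc}_\alpha(\tau)$. Then apply Corollary~\ref{cor:CSDistributionWFailures} with $l=c+1$ to a completion $\lambda'$ of the $\ISone$-picture that $q$ witnessed (plus the vertices of $\theta$), obtaining a minimal hitting set $S_c$ for the critical simplices of power $\geq c+1$; since none of those terminated, $S_c$ consists of processes that had not published $\IStwo$. The crucial combinatorial step you are missing is a disjointness argument: $S_c$, the set $S_\emptyset$ of processes that had not yet published $\ISone$, and $\chi(\theta)$ are pairwise disjoint and (except for $q$ itself) all lie inside the rank set $S_T=\{p_j\in\ISone[q]:\IStwo[j]=\emptyset\wedge\ISone[j]\neq\ISone[q]\}$. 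Combining $|S_T|<\alpha(\ISone[q])$ with the hitting-set lower bound gives $|\chi(\theta)|\leq c$, hence $\dim(\theta)<c\leq\mathit{Conc}_\alpha(\tau)$ --- no statement of the form (ii) is ever needed. Without this counting step, the argument remains incomplete.
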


\begin{proof}
Consider any execution of Algorithm~\ref{Alg:R_A_resolution}. Except for the wait-phase, 
processes execute two rounds of an immediate snapshot protocol. 
Therefore the set of outputs forms a simplex in $\sigma\in\Chr^2\s$. 
Without loss of generality, we can assume that no process fails 
and thus that $\mathit{dim}(\sigma)=n-1$. Indeed, if $\sigma\not\in\R_\A$, then 
if failed processes were just slow and resumed their execution and terminate, 
it would produce~$\sigma'\not\in\R_\A$.
Let us assume by contradiction that $\sigma\not\in\R_\A$, this implies that there exists~$\theta\subseteq\sigma$ such that (for~$\tau=\Car(\theta,\Chr\s)$ and~$\rho=\Car(\sigma,\Chr\s)$):
\[
(\theta\in{\mathit{Cont}_2})\wedge ((\chi(\theta) \cap (\chi(\CSM(\rho))
\cup\chi(\CSV(\tau)))
 = \emptyset)\wedge(
\mathit{dim}(\theta) \geq \mathit{Conc}_\alpha(\tau)){}.
\] 
As $\theta\in{\mathit{Cont}_2}$, we can order the processes associated with 
vertices from $\theta$ according to their IS$^2$ view (or $\Car(v,\Chr\s)$). 
Let~$q_1,\dots,q_k$ be this ordered set of processes. 
As $q_1$ has the smallest IS$^2$ view, and as $\theta\in{\mathit{Cont}_2}$,
$q_1$ also has the largest IS$^1$ view. 

Consider the state of the execution at the time where $q_1$ successfully passes
the test on Line~\ref{Alg:RA:WaitCondition}. To pass this test, $q_1$ witnessed $\ISone$, 
$\mathit{Conc}$ and $\IStwo$ states such that (with $q_1=p_i$):
\[
(\alpha(\mathit{IS1}[i])>\alpha(\mathit{IS1}[i]\setminus\{p_j\in\Pi:\mathit{IS1}[j]=\mathit{IS1}[i]\}))\vee
\]\[
\left(|\{p_j\in\mathit{IS1}[i]: 	  \mathit{IS2}[j]=\emptyset\wedge
          \mathit{IS1}[j]\neq\mathit{IS1}[i] \}|< 
          \mathit{max}\left(\alpha(\mathit{IS1}[i]),\mathit{max}_{j\in \{1,\dots, n\}}(Conc[j])\right)
	\right)
\]
If $(\alpha(\mathit{IS1}[i])>\alpha(\mathit{IS1}[i]\setminus\{p_j\in\Pi:\mathit{IS1}[j]=\mathit{IS1}[i]\}))$, then it implies that $q_1$ belongs to a critical simplex. Indeed, it would belong to a set of processes sharing the same IS$^1$ view and such that, removing this set of processes 
from their IS$^1$ view would result in a set with a strictly smaller agreement power.
But this would contradict $\chi(\theta) \cap \chi(\CSM(\Car(\sigma,\Chr\s)))\neq\emptyset$
as it would include $q_1$. Therefore we have:
\[
|\{p_j\in\mathit{IS1}[i]: 	  \mathit{IS2}[j]=\emptyset\wedge
          \mathit{IS1}[j]\neq\mathit{IS1}[i] \}|< 
          \mathit{max}\left(\alpha(\mathit{IS1}[i]),\mathit{max}_{j\in \{1,\dots, n\}}(Conc[j])\right)
\]
Two cases may arise:
\begin{itemize}
\item $\mathit{max}\left(\alpha(\mathit{IS1}[i]),\mathit{max}_{j\in \{1,\dots, n\}}(Conc[j])\right)\neq\alpha(\ISone[i])$: In this case, 
a register in $\mathit{Conc}$ was set on 
Line~\ref{Alg:RA:UpdateConc} to a value greater than $\alpha(\ISone[i])$. 
It implies that a critical simplex associated with an agreement level 
strictly greater than $|\{p_j\in\mathit{IS1}[i]: \mathit{IS2}[j]=\emptyset\wedge\mathit{IS1}[j]\neq\mathit{IS1}[i] \}|$ terminated its computation and thus is included in $\Car(\theta,\Chr\s)$. But we can observe that $(\chi(\theta)\setminus\{q_1\})\subseteq\{p_j\in\mathit{IS1}[i]: \mathit{IS2}[j]=\emptyset\wedge\mathit{IS1}[j]\neq\mathit{IS1}[i] \}$, and hence, that $\dim(\theta)<\mathit{Conc}_\alpha(\tau)$ --- a contradiction with $\sigma\not\in\R_\A$.

\item $\mathit{max}\left(\alpha(\mathit{IS1}[i]),\mathit{max}_{j\in \{1,\dots, n\}}(Conc[j])\right)=\alpha(\ISone[i])$: Let $c$ be the highest
agreement power associated with a terminated critical simplex 
(with $c=0$ if there is no terminated critical simplex is terminated).
Therefore we have $\mathit{Conc}_\alpha(\tau)\geq c$.
Let $\lambda\in\Chr\s$ be the simplex corresponding to the 
set of IS$^1$ views of processes in $\ISone[i]$ 
which shared their IS$^1$ view at the time $q_1$ passed the test on Line~\ref{Alg:RA:WaitCondition}. 
Consider the simplex $\lambda'\in\Chr\s$ corresponding 
to the completion of $\lambda$ with the vertices corresponding to IS$^1$ view of 
the processes in $\chi(\theta)$ which may be missing from $\lambda$. Note 
that, since $q_1$ has the largest IS$^1$ view among processes from $\chi(\theta)$, 
$\chi(\Car(\lambda',\s))=\chi(\Car(\lambda,\s))=\mathit{IS1}[i]$. Moreover, 
since $\chi(\lambda)= \{p_j\in\ISone[i]:\ISone[j]\neq\bot\}$, we obtain that 
$\chi(\lambda')= \{p_j\in\ISone[i]:\ISone[j]\neq\bot\}\cup\chi(\theta)$.
According to Corollary~\ref{cor:CSDistributionWFailures}
applied to $\lambda'$ with $l=c+1$, we obtain that:
\[
\alpha(\ISone[i])-c-|(\chi(\Car(\lambda',c))\setminus\chi(\lambda'))|\leq
\HSS(\{\phi\in\CS(\lambda'): \alpha(\chi(\phi))\geq c+1\}){}.
\]Note that, since there is no terminated critical simplex with an agreement power 
greater than or equal to $c+1$, it implies that one process 
of each critical simplex identified in $\lambda'$ did not 
terminate its IS$^2$, hence a minimal hitting set. Let $S_c$ be this 
minimal hitting of size equal to $\HSS(\{\phi\in\CS(\lambda'): \alpha(\chi(\phi))\geq c+1\}$. Note that $S_c$ does not include any process in $\chi(\theta)$. 
Indeed, given a critical simplex with the same IS$^1$ view as~$q_i$, adding $q_i$ 
to the critical simplex would produce a critical simplex, but by assumption 
processes in $\chi(\theta)$ do not belong to any critical simplex. 
We also have that $S_c$ does not intersect $S_\emptyset = \{p_j\in\ISone[i]:\ISone[j]=\bot\}$. 
Hence, $|S_c| + |S_\emptyset\setminus\chi(\theta)| + |\chi(\theta)|= |S_c\cup S_\emptyset\cup\chi(\theta)|$. 
Therefore, as $|(\chi(\Car(\lambda',c))\setminus\chi(\lambda'))|=|S_\emptyset\setminus\chi(\theta)|$ 
we obtain that~$\alpha(\ISone[i])-c\leq |S_c\cup S_\emptyset\cup\chi(\theta)|-|\chi(\theta)|$.

Let us now check that $S_c\cup S_\emptyset\cup\chi(\theta)\subseteq \{q_1\}\cup S_T$, with 
$S_T = \{p_j\in\mathit{IS1}[i]: \mathit{IS2}[j]=\emptyset\wedge\mathit{IS1}[j]\neq\mathit{IS1}[i] \}$. All are clearly included in $\mathit{IS1}[i]$
by construction. For processes in $S_\emptyset$, since they have their register in $\mathit{IS1}$ equal to 
$\bot$, it is also the case for their register in $\mathit{IS2}$. For processes in $\chi(\theta)$, 
they have a strictly smaller IS$^1$ view by assumption. For the IS$^2$ view, they will have a strictly larger 
view than $q_1$. But since $q_1$ did not start its second immediate snapshot protocol, processes in 
$\chi(\theta)$ could not have terminated it. For processes in $S_c$, they do not share the same 
IS$^1$ view as any process in $\chi(\theta)$ since they are members of critical simplices, in particular, 
they thus have a distinct IS$^1$ view from $q_1$. 
By assumption, they did not terminate their second immediate 
snapshot protocol, and hence also have their $\IStwo$ register still equal to $\bot$. 

Therefore, we have $S_c\cup S_\emptyset\cup\chi(\theta)\subseteq \{q_1\}\cup S_T$, and hence, 
$|S_c\cup S_\emptyset\cup\chi(\theta)|\leq 1+|S_T|$. But since we also have $|S_T|<\alpha(\ISone[i])$ and $\alpha(\ISone[i])-c\leq |S_c\cup S_\emptyset\cup\chi(\theta)|-|\chi(\theta)|$, we obtain that:
\[
|S_T|< |S_c\cup S_\emptyset\cup\chi(\theta)|-|\chi(\theta)|+c\leq|S_T|+1-|\chi(\theta)|+c {}.
\]

Thus $|\chi(\theta)|\leq c$. But recall that $\mathit{Conc}_\alpha(\tau)\geq c$, and so, 
$|\chi(\theta)|\leq \mathit{Conc}_\alpha(\tau)$ --- a contradiction with $\sigma\not\in\R_\A$.

\end{itemize}
\end{proof}

Using Lemmata~\ref{lem:Ra_AlgoLiveness} and~\ref{lem:Ra_AlgoSafety}, we can 
directly derive the correctness of Algorithm~\ref{Alg:R_A_resolution}:

\begin{theorem}
Algorithm~\ref{Alg:R_A_resolution} solves task $\R_\A$ in 
the $\alpha$-model corresponding to the fair adversary~$\A$.
\label{Thm:RalphaSimulation}
\end{theorem}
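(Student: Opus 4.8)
The plan is to assemble the pieces already in hand: Lemma~\ref{lem:Ra_AlgoLiveness} for termination, Lemma~\ref{lem:Ra_AlgoSafety} for the output complex, and a one-line carrier argument. Recall that to solve the affine task $(\s,\R_\A,\Delta)$ in the $\alpha$-model means that in every run with participating set $P$ there is a finite prefix after which all correct processes have decided and the decided vertices form a simplex $\sigma$ with $\sigma\in\Delta(\pi_P)=\R_\A\cap\Chr^2(\pi_P)$, where $\pi_P$ denotes the face of $\s$ spanned by $P$.

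First I would dispose of the degenerate case: if $\alpha(P)=0$ the $\alpha$-model has no run with participating set $P$, so assume $\alpha(P)\geq 1$. Termination of every correct process is then exactly Lemma~\ref{lem:Ra_AlgoLiveness}; in particular the waiting phase of Lines~\ref{Alg:RA:FirstIS}--\ref{Alg:RA:WaitConditionEnd} always completes, so Line~\ref{Alg:RA:SecondIS} is reached and a value returned on Line~\ref{Alg:RA:return}. That the returned vertices span a simplex of $\R_\A$ is exactly Lemma~\ref{lem:Ra_AlgoSafety}. For carrier inclusion, I would observe that outside the waiting phase --- which only reads and writes the auxiliary registers $\ISone$, $\IStwo$, $\mathit{Conc}$ and does not affect the immediate-snapshot outcomes --- each process runs precisely two immediate-snapshot rounds, proposing its initial state in the first. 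Hence the output simplex $\sigma$ lies in $\Chr^2\s$ with $\chi(\Car(\sigma,\s))\subseteq P$, i.e.\ $\sigma\in\Chr^2(\pi_P)$; together with $\sigma\in\R_\A$ this gives $\sigma\in\Delta(\pi_P)$, as required.

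There is no genuine obstacle left: all the work sits in Lemmata~\ref{lem:CSDistribution}, \ref{lem:Ra_AlgoLiveness} and~\ref{lem:Ra_AlgoSafety}. The one point worth a remark is the affine-task subtlety that $\Delta(\pi_P)$ could in principle be empty, leaving no legal output for participating set $P$; but this cannot happen here, since Lemma~\ref{lem:Ra_AlgoLiveness} produces an actual terminating execution whose output, by the carrier argument above, already witnesses $\Delta(\pi_P)\neq\emptyset$. Combining the three ingredients yields Theorem~\ref{Thm:RalphaSimulation}.
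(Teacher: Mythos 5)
Your proposal is correct and follows essentially the same route as the paper, which derives Theorem~\ref{Thm:RalphaSimulation} directly from Lemma~\ref{lem:Ra_AlgoLiveness} (liveness) and Lemma~\ref{lem:Ra_AlgoSafety} (safety). You add the small but legitimate carrier-inclusion observation and the remark that $\Delta(\pi_P)\neq\emptyset$ is witnessed by the terminating run itself --- details the paper leaves implicit.
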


As for other solutions of affine task, we can iterate this solution in order 
to simulate a run of $\R_\A^*$. Using this simulation we can therefore solve 
any task which is solvable in $\R_\A^*$:

\begin{theorem}
Any task solvable in $\R_\A^*$ is solvable in the $\A$-model.
\label{Affine:Thm:RAtoA}
\end{theorem}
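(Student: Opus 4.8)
The plan is to reduce the statement to the $\alpha$-model and then to \emph{compose} a bounded number of copies of Algorithm~\ref{Alg:R_A_resolution}. By Theorem~\ref{th:adv:task}, for the fair adversary $\A$ with agreement function $\alpha$, a task is solvable in the $\A$-model if and only if it is solvable in the $\alpha$-model, so it suffices to show that any task solvable in $\R_\A^*$ is solvable in the $\alpha$-model. First I would fix a task $T=(\I,\O,\Delta)$ solvable in $\R_\A^*$ and use the fact that $\R_\A^*$ is compact: the König's lemma argument outlined in the Introduction then yields a number $\ell\in\Nat$ and a simplicial, chromatic map $\phi\colon\R_\A^{\ell}(\I)\to\O$ carried by $\Delta$. (If $\alpha(P)=0$ for the participating set $P$, the $\alpha$-model contains no run with that participating set and there is nothing to prove.)

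Next I would describe the simulation in the $\alpha$-model. Starting from an input configuration $I\in\I$, the processes run $\ell$ consecutive instances of Algorithm~\ref{Alg:R_A_resolution} under the usual full-information convention: in the first instance process $p$ uses the vertex of $\I$ encoding $\mathit{input}_p$, and in each subsequent instance it submits as its proposed value the output vertex produced by the previous instance. After the $\ell$-th instance, $p$ holds a vertex of $\Chr^{2\ell}\s$ and decides the value $\phi$ assigns to that vertex.

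Then I would establish correctness in two parts. For safety, Theorem~\ref{Thm:RalphaSimulation} (equivalently Lemma~\ref{lem:Ra_AlgoSafety}) guarantees that the outputs of each instance form a simplex of $\R_\A$ carried by that instance's inputs; hence, by the definition of iterated affine tasks, after $\ell$ instances the outputs form a simplex $\sigma\in\R_\A^{\ell}(I)\subseteq\R_\A^{\ell}(\I)$, and since $\phi$ is simplicial, chromatic and carried by $\Delta$, the decided values form an output configuration $O\in\O$ with $(I,O)\in\Delta$. For liveness I would argue, by induction on the instance index, that every correct process completes each instance. Let $P$ be the participating set, so $\alpha(P)\ge1$ and fewer than $\alpha(P)$ processes of $P$ are faulty, and let $C$ be the set of correct processes. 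The participating set $P_{j+1}$ of instance $j{+}1$ satisfies $C\subseteq P_{j+1}\subseteq P$; applying the fair-adversary inequality $\alpha(P)-|Q|\le\alpha(P\setminus Q)$ (recalled in Section~\ref{sec:adv} and used in the proof of Lemma~\ref{lem:CSDistribution}) with $Q=P\setminus P_{j+1}$, the number $|P_{j+1}\setminus C|=|P\setminus C|-|P\setminus P_{j+1}|$ of faulty processes among $P_{j+1}$ is at most $\alpha(P)-1-|P\setminus P_{j+1}|\le\alpha(P_{j+1})-1$. Thus the tail of the run at the start of instance $j{+}1$ is again a legitimate $\alpha$-model run with participating set $P_{j+1}$, so Lemma~\ref{lem:Ra_AlgoLiveness} applies and all correct processes obtain an output from that instance. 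Consequently every correct process completes all $\ell$ instances and decides, $T$ is solved in the $\alpha$-model, and by Theorem~\ref{th:adv:task} in the $\A$-model.

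I expect the inductive liveness step to be the main obstacle: one must check that discarding the processes that crashed during earlier instances still leaves a legitimate $\alpha$-model run, even though the participating set may shrink while $\alpha$ is only monotone. This is exactly where the fairness of $\A$ is essential, through the bounded-decrease inequality for $\alpha$; the remaining steps (setting up the full-information composition and the carried-map bookkeeping for $\phi$) are routine.
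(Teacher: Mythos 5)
Your proof is correct and follows the same route as the paper: reduce to the $\alpha$-model via Theorem~\ref{th:adv:task}, iterate Algorithm~\ref{Alg:R_A_resolution} to simulate $\R_\A^*$, and invoke compactness/K\"onig's lemma to get a bounded round count $\ell$ and a simplicial map $\phi$. The paper gives essentially no argument beyond the one-line remark that ``we can iterate this solution,'' so your verification that the tail run at the start of each instance is still a legitimate $\alpha$-model run --- using the bounded-growth inequality $\alpha(P\setminus Q)\geq\alpha(P)-|Q|$ to control how the participating set shrinks --- is a genuine and correct gap-fill that the paper leaves entirely implicit.
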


%===============================================
\section{From $\R_{\A}^*$ to the fair adversarial $\A$-model}
\label{sec:alpha}
%===============================================

In this section, we show that any task solvable in the 
fair adversarial $\A$-model can be solved in $\R_\A^*$.
This reduction is much more intricated than in the other direction. 
Indeed, to show that a model is as strong as an affine task based model,
it only suffices to show that any number of iterations of the affine task 
can be solved. In the general case, it is necessary to show that any task
solvable in the target model can be solved and thus that we can emulate an  
algorithm solving any given task. 

To simplify the simulation complexity, we are going to 
show that we can simulate an execution of a shared memory model 
in which the participation $P$ is such that $\alpha(P)>0$ and 
in which $\alpha$-adaptive set consensus can be solved. 
Using the results from~\cite{KR17} (Theorem~\ref{th:adv:task}), we are able to deduce from it 
that any task solvable in a fair adversarial model can be solved in $\R_\A^*$.

\subsection{Simulation Description.}

The main difficulty of the simulation comes from the 
combination of the failure-freedom and the iterative structure of $\R_\A^*$.
A process obtaining small outputs in all iterations, 
often denominated as a ``fast'' process, may never observe the values shared 
by other processes with larger views, comparatively denominated as a ``slow'' processes. 
But as there are no processes failures, eventually, 
all processes must obtain a task output. 
It requires that fast processes make progress with the simulation 
without waiting for slower processes. Slow processes must thus 
wait for faster processes to terminate their simulation before 
being able to make progress with modifying operations.

This first difficulty is resolved by making processes which obtained a task output
in the simulation to use the special value $\bot$ as input for all
further iterations of $\R_\A$. Slower processes are then aware that
processes using $\bot$ do not interfere anymore and that they 
no longer need to witness their modifications of the simulated system state.

Another difficulty relies in the fact that processes may shift
between making shared memory operations and accessing $\alpha$-adaptive
set consensus abstractions. Moreover, processes may be accessing 
distinct $\alpha$-adaptive set consensus abstractions and may 
access them in different orders. Fortunately, set consensus abstractions
are independant of each others and multiple instances can be simulated
in parallel. But memory operations interact with each others 
and a write operation can be safely terminated only once the write value 
is known to be observed by all other processes. Thus a fast process must ensure 
that slower processes are not able to complete write operations
as long as they did not terminate, even when they do not currently 
have a write operation to perform.

\myparagraph{Atomic-snapshot simulation.} 
To simulate the atomic-snapshot memory, we rely upon  
the algorithm proposed in~\cite{GR10-opodis} that simulates a
lock-free atomic-snapshot algorithm in the \emph{iterated} atomic-snapshot model.
%
%This algorithm, conveniently for us, uses iterated, but not necessarily immediate,
%snapshots.
%
We run the simulation using the \emph{global views} that the
processes obtain at the end of  $\R_\A^*$ iterations, i.e., $\Car(v,\s)$ for their vertices $v\in\R_\A$. 
Recall that these global views satisfy the properties of atomic
snapshots, but not necessarily the properties of immediate snapshots.

In the simulation, every new update performed by a process is assigned
a monotonically growing \emph{sequence number}.   
A terminated process simply stops incrementing its sequence number,
which allows active (non-terminated) processes to make progress.
Without loss of generality, we assume that in the simulated algorithm,
every active process always has a pending memory operation to perform
(intuitively, if there is nothing to write, the process rewrites its last written value).

\myparagraph{Simulating $\alpha$-adaptive set consensus in $\R_\A^*$.}
The $\alpha$-adaptive set consensus simulation in $\R_\A^*$ 
submits in all iterations input, 
a decision estimate for all known set consensus simulations.
For all pending and newly discovered set consensus simulations
for which processes are involved (i.e., for which they are allowed to participate), 
processes update their decision estimate after each iteration of $\R_\A$. 
Processes adopt a deterministically chosen estimate from, if available, 
an $\ISone$ view associated to a critical simplex, and otherwise, 
from the smallest $\ISone$ view they see. Note that only IS1 views including 
a process which may participate to the agreement are considered.
Most of the complexity of the $\alpha$-adaptive set consensus simulation 
lies in this selection of which $\ISone$ view to adopt from. 
This is described extensively in the next section.

A desicion value is committed only when all processes which are involved 
in the $\alpha$-adaptive set consensus abstraction and which are observed 
in a given iteration of $\R_\A^*$ posses a decision estimate. Once, 
the value is committed, the decision estimate will no longer change
and will eventually be returned as output for the $\alpha$-adaptive
set consensus, but processes need to check that the participation 
in the simulated run is high enough before returning the value.

In order to ensure a high enough participation, processes make sure 
that all processes that they witnessed in preceding iterations of $\R_\A^*$
have completed their first simulated write operation. If not, 
processes simulate this write operation themselves. The content 
of this first write operation simply consists of the process initial 
state. Therefore, any process $p$ may simulate this write operation 
(by using the shared memory simulation)
for any other process $q$ as soon as $p$ knows the initial state of $q$.
Once all processes for which the initial state is know are 
participating in the simulated run, processes can safely 
terminate their $\alpha$-adaptive set consensus with the committed value. 

%The validity of the simulated $\alpha$-adaptive set consensus protocol
%relies mainly on the affine task structure which ensure that the set of 
%known input states evolves with the number of distinct values which 
%may have been adopted.
%The liveness follows from the fact that once a proposal is seen by all 
%involved processes, then, in the following iteration, 
%all processes will have a decision estimate and thus commit a value. 
%Note that the liveness indirectly relies on the liveness of 
%the atomic-snapshot simulation.

\subsection{$\alpha$-adaptive leader election in $\R_\A$: the $\mu_Q$ map}
 
Let us consider some $\alpha$-adaptive set consensus and 
let $Q$ be the set of processes which 
(1) may participate in the agreement protocol, 
and, (2) did not terminate yet the main simulation. 
Using the structure of $\R_\A$, we construct a map $\mu_Q$ which 
returns to each vertex $v\in\R_\A$, corresponding to 
a process from $Q$ (i.e., with $\chi(v)\in Q$), 
a leader selected among $Q$ for the given iteration of $\R_\A$. 
The map $\mu_Q$ is constructed in two stages. 
The first stage consists in selecting an $\ISone$ view 
which includes a process from $Q$. Two cases may 
happen depending on whether the process observes in~$\R_\A$ a critical simplex
associated with an $\ISone$ view including a process from $Q$ or not:

If the process observes such a critical simplex 
(i.e., $\chi(\CSV(\Car(v,\Chr\s)))\cap Q\neq \emptyset$), it then simply 
returns the smallest $\ISone$ view
of a critical simplex which includes a process from $Q$, using the map $\delta_Q$:
\[\delta_Q=\chi(\min(\{\Car(\sigma',\s): (\sigma'\in CS_\alpha(\Car(v,\Chr\s)): \chi(\Car(\sigma',\s))\cap Q\neq\emptyset)\}){}.\]
Otherwise (if $\chi(\CSV(\Car(v,\Chr\s)))\cap Q= \emptyset$), the process 
returns the smallest observed $\ISone$ view 
which includes a process from $Q$, using the map $\gamma_Q$:
\[\gamma_Q = \chi(\min(\{\Car(v',\s): 
(v'\in \Car(v,\Chr\s))\wedge(\mathit{dim}(v')=0)\wedge(\Car(v',\s)\cap Q\neq\emptyset)\}){}.\]
The second stage then simply consists in selecting, from the selected
$\ISone$ view, the process from $Q$ associated with the smallest identifier, 
let $\min_Q(V)= \min\{p\in V\cap Q\}$ be this map. 
The map $\mu_Q$ is therefore defined as follows:
\[\mu_Q(v) = \mathbf{\ if\ } (\chi(\CSV(\Car(v,\Chr\s)))\cap Q\neq \emptyset)
\mathbf{\ then\ } \textrm{min}_Q\circ\delta_Q
\mathbf{\ else\ } \textrm{min}_Q\circ\gamma_Q{}. \]
Let us first show that, for any vertex $v\in\R_\A$ corresponding to a process in $Q$, 
the map $\mu_Q$ returns a process from $Q$ observed in $\R_\A$ 
(i.e., a process in $\chi(\Car(v,\s))$) :

\begin{property}{[Validity of $\mu_Q$]\label{prop:muQ_validity}}
$\forall v\in\R_\A, {\mathit{dim}}(v)=0, \chi(v)\in Q : $
\[\mu_Q(v)\in\chi(\Car(v,\s))\wedge\mu_Q(v)\in Q{}.\]
\end{property}

\begin{proof}
Let us fix some vertex $v\in\R_\A$ such that $\chi(v)\in Q$.

Let us assume that $\chi(\CSV(\Car(v,\Chr\s)))\cap Q\neq \emptyset$, 
and hence, $\mu_Q(v)=\textrm{min}_Q \circ \delta_Q(v)$.
Let us recall that, given $\sigma\in\Chr\s$, $\CSV(\sigma)$ is equal to
$\Car(\cup_{\sigma'\in \CS(\sigma)}\sigma',\s)$.
But due to carriers inclusion, the carrier of a simplex is equal 
to the carrier of one of its vertices, and so, of any sub-simplex which includes this vertex. 
Thus, as~$\chi(\CSV(\Car(v,\Chr\s)))\cap Q\neq \emptyset$, we have:
\[
\exists \sigma'\in CS_\alpha(\Car(v,\Chr\s)): \chi(\Car(\sigma',\s))\cap Q\neq\emptyset{}.
\]
This implies that $\delta_Q$ has a valid choice for $v$ and can return the minimal 
one, and so that:
\[\exists \sigma'\in CS_\alpha(\Car(v,\Chr\s)): (\delta_Q(v)=\chi(\Car(\sigma',\s)))
\wedge(\chi(\Car(\sigma',\s))\cap Q\neq\emptyset){}.\]
Since $CS_\alpha(\Car(v,\Chr\s))\subseteq\{\sigma\in\Chr\s;\sigma\subseteq\Car(v,\Chr\s)\}$, 
and as $\mu_Q(v)=\textrm{min}_Q \circ \delta_Q(v)$, we obtain that:
\[
\exists \sigma'\subseteq\Car(v,\Chr\s): (\mu_Q(v)=\textrm{min}_Q \circ \chi(\Car(\sigma',\s)))
\wedge(\mu_Q(v)\in Q){}.
\]
As for any simplex $\sigma\in\Chr^2\s$, we have $\Car(\Car(v,\Chr\s),\s)=\Car(v,\s)$, thus Property~\ref{prop:muQ_validity} is verified if~$\chi(\CSV(\Car(v,\Chr\s)))\cap Q\neq \emptyset$.  

Now let us assume that $\chi(\CSV(\Car(v,\Chr\s)))\cap Q= \emptyset$. 
Due to the self-inclusion property, $\exists v'\in \Car(v,\Chr\s))$
such that $\chi(v')=\chi(v)$. The self-inclusion property again implies 
that $\exists v''\in \Car(v',\s)$ such that $\chi(v'')=\chi(v')=\chi(v)$. 
Hence, as $\chi(v)\in Q$, $\exists v'\in \Car(v,\Chr\s)$ 
such that $\chi(\Car(v',\s))\cap Q \neq\emptyset$. 
Thus $\gamma_Q$ has a valid choice for $v$ and can return the minimal one. 
As before, by the transitivity of carriers inclusion, 
the set returned by $\gamma_Q$, and so the process returned by $\mu_Q$, 
is a subset of $\chi(\Car(v,\s))$ which intersects with $Q$.
\end{proof}

Now that we have checked that $\mu_Q$ is well defined, let us 
show that $\mu_Q$ returns a number of distinct leaders (processes)
limited by the agreement power associated with 
processes views in $\R_\A$:

\begin{property}{[Agreement of $\mu_Q$]\label{prop:muQ_agreement}}
$\forall Q\subseteq\Pi, (\forall \sigma\in\R_\A:\mathit{dim}(\sigma)=n-1),(\forall\theta\subseteq\sigma:\chi(\theta)\subseteq Q):$
\[|\{\mu_Q(v):v\in\theta\}|\leq \alpha(\chi(\Car(\theta,\s))){}.\]
\end{property}

Let us first check the following observation stating that 
for any simplex $\sigma\in\Chr\s$, if two critical simplices in $\sigma$ 
are associated with the same agreement power, then they share the same $\ISone$ view:

\begin{lemma}\label{lem:SingleCriticalCarPerAgreementLevel}
$\forall\sigma\in\Chr\s$, $\forall\theta_1,\theta_2\in\CS(\sigma)$: 
\[\alpha(\chi(\Car(\theta_1,\s)))=\alpha(\chi(\Car(\theta_2,\s)))
\implies \Car(\theta_1,\s)=\Car(\theta_2,\s){}.\]
\end{lemma}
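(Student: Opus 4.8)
The plan is to reduce the statement to the combinatorial structure of a single $\Chr$-subdivision simplex, using the containment and immediacy properties of $\Chr\s$ together with the monotonicity of $\alpha$ recalled in Section~\ref{sec:adv}. Write $C_1 = \Car(\theta_1,\s)$ and $C_2 = \Car(\theta_2,\s)$. Since $\theta_1,\theta_2 \subseteq \sigma \in \Chr\s$, any vertex of $\theta_1$ and any vertex of $\theta_2$ have comparable carriers by the containment property, so $C_1$ and $C_2$ are comparable; without loss of generality assume $C_1 \subseteq C_2$, and suppose towards a contradiction that the inclusion is strict.

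The key step --- and the one I expect to be the only real obstacle --- is to show that $\chi(\theta_2) \cap \chi(C_1) = \emptyset$. Suppose some vertex $v \in \theta_2$ has color $c = \chi(v) \in \chi(C_1)$. Since $\mathit{Critical}_\alpha(\theta_2)$, all vertices of $\theta_2$, and $v$ in particular, have carrier exactly $C_2$. Pick any $v_1 \in \theta_1$; since $\mathit{Critical}_\alpha(\theta_1)$, we have $\Car(v_1,\s) = C_1$. Now $c \in \chi(C_1) = \chi(\Car(v_1,\s))$, so the immediacy property of $\Chr\s$ applied to the pair $v, v_1$ of vertices of $\sigma$ forces $\Car(v,\s) \subseteq \Car(v_1,\s)$, i.e. $C_2 \subseteq C_1$, contradicting $C_1 \subsetneq C_2$. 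Hence $\chi(\theta_2) \cap \chi(C_1) = \emptyset$, and therefore $\chi(C_1) \subseteq \chi(C_2) \setminus \chi(\theta_2)$.

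It then remains to derive the contradiction from the criticality of $\theta_2$. By monotonicity of $\alpha$ and the inclusion just obtained, $\alpha(\chi(C_2)\setminus\chi(\theta_2)) \ge \alpha(\chi(C_1))$, and by hypothesis $\alpha(\chi(C_1)) = \alpha(\chi(C_2))$, so $\alpha(\chi(C_2)\setminus\chi(\theta_2)) \ge \alpha(\chi(C_2))$. On the other hand, $\mathit{Critical}_\alpha(\theta_2)$ together with $\Car(\theta_2,\s) = C_2$ gives $\alpha(\chi(C_2)\setminus\chi(\theta_2)) = \alpha(\chi(\Car(\theta_2,\s))\setminus\chi(\theta_2)) < \alpha(\chi(\Car(\theta_2,\s))) = \alpha(\chi(C_2))$, a contradiction. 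Thus the inclusion $C_1 \subseteq C_2$ cannot be strict, i.e. $\Car(\theta_1,\s) = \Car(\theta_2,\s)$, which is the claim. Note that $\theta_1$ enters the argument only through the fact that its carrier $C_1$ is attained by one of its vertices with that exact carrier; all the work is carried by the criticality of the simplex with the larger carrier.
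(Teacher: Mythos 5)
Your proof is correct and follows essentially the same route as the paper's: WLOG $\Car(\theta_1,\s)\subseteq\Car(\theta_2,\s)$ by containment, then use immediacy to show $\chi(\theta_2)\cap\chi(\Car(\theta_1,\s))=\emptyset$ if the inclusion is strict, and finally contradict the criticality of $\theta_2$ via monotonicity of $\alpha$. The only difference is that you spell out the immediacy step (picking $v\in\theta_2$, $v_1\in\theta_1$ and applying immediacy to force $\Car(v,\s)\subseteq\Car(v_1,\s)$) which the paper compresses into one sentence.
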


\begin{proof}
Let us consider some simplex $\sigma\in\Chr\s$ and some critical simplices 
$\theta_1,\theta_2\in\CS(\sigma)$ such that $\alpha(\chi(\Car(\theta_1,\s)))=\alpha(\chi(\Car(\theta_2,\s)))$.
The inclusion property implies, w.l.o.g., $\Car(\theta_1,\s)\subseteq\Car(\theta_2,\s)$. 
The immediacy property implies either that $\Car(\theta_1,\s)=\Car(\theta_2,\s)$ 
(and thus Lemma~\ref{lem:SingleCriticalCarPerAgreementLevel} is verified) 
or else that~$\chi(\theta_2)\cap\chi(\Car(\theta_1,\s))=\emptyset$. 

Let us now assume that $\chi(\theta_2)\cap\chi(\Car(\theta_1,\s))=\emptyset$. 
Together with $\Car(\theta_1,\s)\subseteq\Car(\theta_2,\s)$, it implies 
that $\Car(\theta_1,\s)\subseteq\Car(\theta_2,\s)\setminus\theta_2$. 
Since agreement functions are regular 
(i.e., the agreement power can only grow with a participation increase), we obtain that
$\alpha(\chi(\Car(\theta_1,\s))\leq\alpha(\chi(\Car(\theta_2,\s)\setminus\theta_2)$. 
But as $\theta_2$ is a critical simplex $\alpha(\chi(\Car(\theta_2,\s)\setminus\theta_2))<
\alpha(\chi(\Car(\theta_2,\s)))$, and we obtain a contradiction:
\[\alpha(\chi(\Car(\theta_1,\s))\leq\alpha(\chi(\Car(\theta_2,\s)\setminus\theta_2)<
\alpha(\chi(\Car(\theta_2,\s)))=\alpha(\chi(\Car(\theta_1,\s))){}.\]
\end{proof}
Let us now prove Property~\ref{prop:muQ_agreement}:
\begin{proof}
Let $\sigma$ be a maximal simplex of $\R_\A$, i.e., $\mathit{dim}(\sigma)=n-1$, 
and let $\theta\subseteq\sigma$ such that $\chi(\theta)\subseteq Q$.

Note that for both $\gamma_Q$ and $\delta_Q$, processes returns the 
$\ISone$ view of a vertex of $\Car(\theta,\Chr\s)$. 
Assume that $\gamma_Q$ and $\delta_Q$ return, for vertices in $\theta$, 
$k\geq 0$ distinct $\ISone$ views which are not the $\ISone$ views of 
some critical simplex in $\Car(\sigma,\Chr\s)$.  
As $\delta_Q$ only returns $\ISone$ views associated with a critical simplex, 
they have been returned by $\gamma_Q$. Let $\beta$ be the subset of $\theta$ 
including all vertices for which~$\gamma_Q$ returns such $\ISone$ views.
As they are returned by $\gamma_Q$, we have $\CSV(\Car(\beta,\Chr\s))\cap Q = \emptyset$.

Consider any two processes $p_1$ and $p_2$ which obtained two distinct such
$\ISone$ views, $V_1$ and $V_2$ respectively (w.l.o.g., let $V_1\subsetneq V_2$).
As $\gamma_Q$ returns the minimal $\ISone$ view intersecting with $Q$, 
a vertex from $\beta$ sees $V_2$ but not $V_1$, and thus, 
$p_2$ has a smaller $\IStwo$ view than $p_1$. 
Therefore $p_1$ and~$p_2$ satisfy the condition to be part of a contention simplex, 
and so, any $k$ processes carrying these $k$ distinct returned~$\ISone$ views 
form a contention simplex. Let $\tau$ be this contention simplex in~$\sigma$. 

As a vertex in $\beta$ saw all these $k$ distinct $\ISone$ views,
we have $\Car(\tau,\Chr\s)\subseteq\Car(\beta,\Chr\s)$.
But since $\CSV(\Car(\beta,\Chr\s))\cap Q = \emptyset$ is satisfied, 
we obtain that $\CSV(\Car(\tau,\Chr\s))\cap Q = \emptyset$. By assumption, 
these $k$ processes are not critical simplices members 
($\chi(\tau)\cap\CSM(\sigma)=\emptyset$). Therefore, the definition of $\R_\A$ 
implies that we have $\mathit{Conc}_\alpha(\Car(\tau,\Chr\s))\geq k$, 
and hence, we obtain that~$\mathit{Conc}_\alpha(\Car(\beta,\Chr\s))\geq k$.

Having $\mathit{Conc}_\alpha(\Car(\beta,\Chr\s))\geq k$, it implies that 
we have $\exists\sigma_c\in \CS(\Car(\beta,\Chr\s))$ such 
that $\alpha(\chi(\Car(\sigma_c,\s)))\geq k$. 
But, since $\chi(\Car(\sigma_c,\s))\subseteq\CSV(\Car(\beta,\Chr\s))$, 
we obtain that~$\chi(\Car(\sigma_c,\s))\cap Q=\emptyset$. As the inclusion 
property implies that any $\ISone$ view must be strictly larger 
to intersect with $Q$, and as there are at most one $\ISone$ view 
associated with a critical simplex by agreement level 
(Lemma~\ref{lem:SingleCriticalCarPerAgreementLevel}), all $\ISone$ 
views corresponding to some critical simplex in~$\Car(\sigma,\Chr\s)$ 
are associated with an agreement power strictly greater than~$k$.

Let $l\geq 0$ be the number of distinct $\ISone$ views corresponding to 
some critical simplex in $\Car(\sigma,\Chr\s)$ which are returned 
by $\delta_Q$ or $\gamma_Q$ for vertices in $\theta$. 
Lemma~\ref{lem:SingleCriticalCarPerAgreementLevel} implies that 
they must be associated with~$l$ distinct agreement powers.
As they must also be associated with agreement powers strictly 
greater than~$k$, one of the returned $\ISone$ views is 
associated with an agreement power greater than or equal to~$k+l$.
Therefore, we have $\alpha(\chi(\Car(\theta,\s)))\geq k+l$.
As the number of distinct $\ISone$ views returned by $\delta_Q$ 
or~$\gamma_Q$ is equal to $k+l$, and as the deterministic 
selection made by $\textrm{min}_Q$ could only reduce the number 
of distinct returned values, we finally obtain that~$|\{\mu_Q(v):v\in\theta\}|\leq \alpha(\chi(\Car(\theta,\s)))$.
\end{proof}

Last, let us also observe that knowing which processes terminated the 
main simulation is not required to compute $\mu_Q$, i.e., that the knowledge 
of which processes belong to $Q$ among the processes observed
in the current iteration of $\R_\A$ is sufficient: 

\begin{property}{[Robustness of $\mu_Q$]}\label{prop:muQ_robustness}
$\forall v\in\R_\A, {\mathit{dim}}(v)=0, \forall Q\subseteq \Pi:
\mu_Q(v) = \mu_{\Car(v,\s)\cap Q}(v)$.
\end{property}

\begin{proof}
This is a direct corollary of the definition of $\delta_Q$ and $\gamma_Q$, 
that for a given vertex $v\in\R_\A$, to compute $\mu_Q(v)$, the knowledge 
of $Q\cap(\Car(v,\s))$ is sufficient. Indeed, $Q$ is only used 
to compute intersections with either $\CSV(\chi(\Car(v,\Chr\s)))$, 
a subset of $\Car(v,\s)$, or with~$\Car(v',\s)$ for a vertex 
$v'\in\Car(v,\Chr\s)$, also a subset of $\Car(v,\s)$.
\end{proof}

\subsection{Correctness of the simulation}

Let us first show that all simulated operations are safe.
Since the composable shared memory simulation is safe, we only need 
to show that simulated $\alpha$-adaptive set consensus operations 
satisfy the validity property (decision values are proposal values) 
and the $\alpha$-agreement property (if $k$ distinct values 
have been returned, then the current participation $P$ is such that $\alpha(P)\geq k$).

\begin{lemma}\label{lem:Validity}
The shared memory and $\alpha$-adaptive set consensus simulation in $\R_\A^*$ is safe.  
\end{lemma}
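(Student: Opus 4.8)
The plan is to reduce safety of the whole simulation to safety of its two components, invoking the already-established correctness of the composable shared-memory simulation of \cite{GR10-opodis} and then proving the two remaining properties of $\alpha$-adaptive set consensus by hand. Since the shared-memory part is imported as a black box, the work lies entirely in the $\alpha$-adaptive set consensus simulation: I must show (i) \emph{validity} --- every value a process commits/returns was proposed by some process that is allowed to participate --- and (ii) \emph{$\alpha$-agreement} --- if $k$ distinct values are returned, then the participating set $P$ of the simulated run satisfies $\alpha(P)\ge k$.

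First I would handle validity. Recall that in each iteration of $\R_\A$ every involved process carries, as input for each pending set-consensus instance, a decision \emph{estimate}, and updates it by adopting the estimate attached to the $\ISone$ view selected by the $\mu_Q$ map (the leader's estimate). Initially each process's estimate is its own proposal, hence a legal proposal. By Property~\ref{prop:muQ_validity}, $\mu_Q(v)$ is always a process in $\chi(\Car(v,\s))\cap Q$, i.e.\ a process that is both observed in the current iteration and still allowed to participate; consequently a process only ever adopts the estimate of a genuinely participating process. A straightforward induction on the number of iterations then shows that every estimate held by every process at every iteration equals the proposal of some participating process; a committed value, being one such estimate, is therefore a valid proposal, and since the returned value equals the committed value, validity follows.

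Next I would handle $\alpha$-agreement, which is the main obstacle. The key is that, within a single iteration of $\R_\A$, the number of \emph{distinct leaders} elected among the set $Q$ of still-active participants allowed in the instance is bounded by the agreement power of the view the leaders collectively occupy: this is exactly Property~\ref{prop:muQ_agreement}, applied to a facet $\sigma\in\R_\A$ and the face $\theta$ spanned by the $Q$-processes. By Property~\ref{prop:muQ_robustness}, this bound does not depend on global knowledge of $Q$ --- only on $Q\cap\Car(v,\s)$ --- so it is computable locally and is consistent across processes. To lift the per-iteration bound to the whole run I would argue that the set of distinct estimates alive among active $Q$-processes is non-increasing across iterations (each process adopts one of the leader estimates, and the leaders' estimates are a subset of the previous estimates), and that the set of processes \emph{observed} in iteration $m$ is contained in the participating set $P$ of the simulated run; monotonicity and bounded growth of $\alpha$ (stated in Section~\ref{sec:adv}) then give $\alpha$ of the observed view $\le \alpha(P)$. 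Hence at every iteration the number of distinct estimates is at most $\alpha(P)$, so at most $\alpha(P)$ distinct values can ever be committed, and a committed value becomes an output only after the termination check on participation --- at which point the simulated participating set is at least as large as the view witnessing the commit. Combining: whenever $k$ values are returned, $k\le\alpha(P)$, which is $\alpha$-agreement. I expect the delicate point to be bookkeeping the interaction between the ``involved/allowed-to-participate'' set $Q$ shrinking over iterations (as processes terminate the main simulation and switch to $\bot$) and the monotone growth of the simulated participating set $P$; one must check that a value committed while $Q$ was small is still legitimate once $P$ has grown, which is precisely where the participation check before returning, together with Property~\ref{prop:muQ_robustness}, does the work. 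With validity and $\alpha$-agreement established for the set-consensus component, and safety of the shared-memory component imported from \cite{GR10-opodis}, the lemma follows.
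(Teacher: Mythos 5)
Your proposal is correct and follows essentially the same decomposition as the paper: import shared-memory safety from~\cite{GR10-opodis} as a black box, derive validity from Properties~\ref{prop:muQ_validity} and~\ref{prop:muQ_robustness}, derive $\alpha$-agreement from Property~\ref{prop:muQ_agreement} applied to the (locally computed) $\mu_Q$ leaders, and close the gap between the commit-time view and the final participating set via the simulated first-write check before returning. The only difference is cosmetic: you make explicit the monotone non-increase of distinct estimates across iterations and the $\alpha$-monotonicity step linking observed views to $P$, whereas the paper collapses this by fixing the earliest commit iteration~$R$ and applying Property~\ref{prop:muQ_agreement} there directly.
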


\begin{proof}
For $\alpha$-adaptive set consensus operations, Property~\ref{prop:muQ_robustness} 
ensures that $\mu_{\Car(v,\s)\cap Q}(v)$ can be used as if it was $\mu_Q(v)$ and thus 
that processes can indeed use $\mu_Q$ to elect a leader in any iteration of $\R_\A$. 
Moreover, Property~\ref{prop:muQ_validity} ensures that a decision estimate 
is either the process proposal value or is adopted from another process with a 
proposal value and thus that the validity property of $\alpha$-adaptive set consensus
is verified. 

At the earliest iteration $R$ of $\R_\A^*$ at which a process commits a decision estimate 
for an $\alpha$-adaptive set consensus, since a committing process only observed 
processes from $Q$ with decision estimates, all processes in $Q$ adopt a decision estimate.
Moreover, Property~\ref{prop:muQ_agreement} states that among any $k$ processes 
adopting $k$ distinct decision estimates at this iteration $R$, one must have observed 
a set of processes associated with an agreement level greater or equal to $k$. 
%But, given any $k$ processes with distinct committed decision estimates, 
%one of them, $p$, must have adopted the value from a process which had observed in $R$ 
%a set of processes associated with an agreement level greater or equal to $k$. 
%Thus by transitivity of proceses views, $p$ has also observed a set of processes 
%associated with an agreement level greater or equal to $k$. 

Before completing an $\alpha$-adaptive set consensus operation, 
processes make sure that all processes they observed are participating in the simulated run 
(by simulating for them a write operation of their initial states). 
Therefore, at the time a $k^{th}$ distinct value is returned for some 
$\alpha$-adaptive set consensus, the participation in the simulated run is 
associated with an agreement power greater than or equal to $k$, hence, 
the $\alpha$-agreement property is verified.
\end{proof}

As we have shown that the simulation is safe, let us also show that it is live, i.e., 
that it provides outputs to all processes. For this, we only need to show that 
a process obtaining the smallest IS view among non-terminated processes eventually 
completes its $\alpha$-adaptive set consensus operation.

\begin{lemma}\label{lem:Progress}
In the shared memory and $\alpha$-adaptive set consensus simulation in $\R_\A^*$, 
all processes eventually terminate.
\end{lemma}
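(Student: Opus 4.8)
The plan is to argue by contradiction, following the reduction already announced before the statement: it suffices to show that a non-terminated process obtaining the smallest IS view eventually completes its pending $\alpha$-adaptive set consensus operations, because its pending memory operations complete on their own by the lock-freedom of the atomic-snapshot simulation of~\cite{GR10-opodis} (terminated processes stop incrementing their sequence numbers, so a non-terminated process with the minimal sequence number always makes progress). So suppose the set $D$ of processes that never terminate the main simulation is non-empty. Since every process in $\Pi\setminus D$ eventually decides and thereafter submits $\bot$ in all iterations of $\R_\A$, there is an iteration $R_0$ after which processes of $\Pi\setminus D$ no longer interfere and can be ignored by processes of $D$. Using the iterated structure of $\R_\A^*$, one singles out a process $p\in D$ that, in infinitely many iterations after $R_0$, obtains a view $\Car(v,\s)$ that is inclusion-minimal among the views of processes in $D$ in that iteration; in such an iteration $p$ observes only processes of $D$ whose view equals its own.

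Fix one of the finitely many $\alpha$-adaptive set consensus instances that $p$ itself invokes before attempting to decide, and let $Q$ be the associated set of not-yet-terminated processes allowed to participate; note $p\in Q$ and, past $R_0$, $Q\subseteq D$. A committed value is returned by $p$ only once (i)~in some iteration $p$ observes a decision estimate for every process of $Q$ it sees, and (ii)~every process $p$ ever observed has completed its first simulated write. For (ii): $p$ knows the initial state of every process in its accumulated view, hence, via the lock-free memory simulation, $p$ can itself simulate the missing first writes; once done, the participating set of the simulated run contains all these processes, which, combined with Property~\ref{prop:muQ_agreement} and the committing rule, makes the simulated participation high enough for the return to be legal -- this is exactly the safety bookkeeping of Lemma~\ref{lem:Validity}, read from the liveness side. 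For (i): every process of $Q$ acquires an estimate for the instance as soon as it discovers the instance, which happens within finitely many iterations since the invoker is eventually observed; and after finitely many iterations both $p$'s accumulated view and the set of instances $p$ knows about stabilize, so in a later minimal-view iteration $p$ sees only processes of $Q$ that already carry an estimate, and $p$ commits. Iterating this over all instances pending at $p$ and interleaving with the completion of $p$'s memory operations, $p$ eventually terminates the main simulation -- contradicting $p\in D$. Property~\ref{prop:muQ_validity} and Property~\ref{prop:muQ_robustness} are used throughout so that $p$ can legally evaluate the leader map $\mu_Q$ using only the processes it actually sees.

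The delicate point -- and the step I expect to be the main obstacle -- is (i): one must guarantee that the iteration in which $p$ finally observes ``all of $Q$ with estimates'' genuinely occurs, despite the fact that new set consensus instances and the discovery thereof keep happening, and despite the failure-free, strictly iterated nature of $\R_\A^*$, in which $p$ cannot wait, inside a single iteration, for a slow process to catch up. The argument must combine several finiteness facts -- there are finitely many processes; $p$'s per-iteration view takes finitely many values and its union over iterations stabilizes; each instance relevant to $p$ is discovered by each relevant process within a bounded number of iterations; the simulated run (in the $\alpha$-set-consensus model, which solves the task) has a finite deciding prefix, so $p$ invokes only finitely many instances -- together with the fact that terminated processes submit $\bot$ and so drop out of every $Q$. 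Once these are assembled, a K\"onig-type argument on the finitely branching tree of $\R_\A^*$-prefixes yields the required iteration, and the remainder is the routine accounting sketched above.
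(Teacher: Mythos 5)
Your proposal follows the same high-level strategy as the paper: isolate the non-terminated process $p$ whose per-iteration IS view is inclusion-minimal, argue that $p$ completes its pending $\alpha$-adaptive set-consensus operation, and delegate the progress of the memory operations to the lock-freedom of the Gafni--Rajsbaum simulation. You also correctly note the role of $\bot$ inputs in letting fast (terminated) processes drop out of every $Q$, and the robustness property of $\mu_Q$ that lets $p$ evaluate the leader map using only what it sees. In fact you go further than the paper in one place: you spell out the participation-checking step (point~(ii), ensuring all observed processes have completed their first simulated write) and observe that $p$ can do this work on its own, which the paper's one-paragraph proof elides entirely.

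Where you diverge is precisely at the step you flag as ``the main obstacle,'' and there you leave a genuine gap rather than closing the argument. You defer to accumulated finiteness facts and a ``K\"onig-type argument on the finitely branching tree of $\R_\A^*$-prefixes''; this is both unnecessary and unconvincing, because the statement is not a compactness claim about prefixes but a simple containment observation. The paper's resolution is direct: in any iteration $r$ where $p$ has the smallest IS view, the containment property forces \emph{every} non-terminated process to observe $p$ in that same iteration (their views all contain $p$'s view), hence every competitor adopts a decision estimate for $p$'s instance by the start of iteration $r+1$. In $p$'s next minimal-view iteration, every non-$\bot$ process $p$ observes already carries an estimate, so the commit rule fires immediately. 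No stabilization of ``accumulated views,'' no K\"onig, and no tree argument is needed --- just self-inclusion and containment within a single round, applied twice. You should replace the placeholder appeal to K\"onig's lemma with this two-round argument; with that substitution your proof would align with the paper's and would actually be more complete than the paper's, since it also addresses the first-write bookkeeping.
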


\begin{proof}
as soon as they observe a process with a decision estimate, 
processes adopt a decision estimate for any $\alpha$-adaptive set 
consensus operation they may participate to. Hence, if a process has a 
pending $\alpha$-adaptive set consensus operation and has the 
smallest IS view among non-terminated processes, all competitors 
will have a decision in the next iteration of the affine task. 
But since a process commits a decision estimate when all non-terminated 
processes which participate in an operation share a decision estimate 
during some iteration, a process obtaining the smallest IS view among 
non-terminated processes eventually commits its operation in the 
following affine task iteration and hence resume its shared-memory 
simulation.
\end{proof}

By Lemmata~\ref{lem:Validity} and~\ref{lem:Progress}, 
the simulation that we provide can be used to solve in $\R_\A^*$ any task solvable 
in a shared memory model with access to $\alpha$-adaptive set consensus. 
But it was shown in Section~\ref{sec:adv} that 
the $\alpha$-adaptive set consensus model, the $\alpha$-model and 
a corresponging fair adversarial model are all equivalent in term 
of task solvability. Hence, since we have shown in Theorem~\ref{Affine:Thm:RAtoA}
that all task solvable in~$\R_\A^*$ are solvable in the $\A$-model, 
we obtain the following equivalence result:

\begin{theorem}\label{ref:main}
A task is solvable in the adversarial $\A$-model if and only if it is solvable in $\R_{\A}^*$.
\end{theorem}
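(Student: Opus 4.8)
The plan is to prove the two implications separately; both are assembled from material already developed. The direction ``solvable in $\R_\A^*$ $\implies$ solvable in the $\A$-model'' requires nothing new, as it is precisely Theorem~\ref{Affine:Thm:RAtoA}: Algorithm~\ref{Alg:R_A_resolution} solves $\R_\A$ in the $\alpha$-model (Theorem~\ref{Thm:RalphaSimulation}), iterating it simulates an arbitrary run of $\R_\A^*$ in the $\alpha$-model, and Theorem~\ref{th:adv:task} ensures that for a fair adversary $\A$ the $\alpha$-model and the $\A$-model solve the same tasks; hence any protocol solving a task in $\R_\A^*$ can be replayed in the $\A$-model.

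For the converse, I would use the simulation of Section~\ref{sec:alpha}, which implements inside $\R_\A^*$ a run of a shared-memory model equipped with $\alpha$-adaptive set-consensus objects and participating set $P$ with $\alpha(P)\geq 1$. First I would invoke Theorems~\ref{th:adv:task} and~\ref{read/writeAndConsensus} to reduce the goal: a task is solvable in the fair $\A$-model iff in the $\alpha$-model iff in the $\alpha$-set-consensus model, so it suffices to run any protocol for the latter on top of the simulation. The atomic-snapshot component is obtained from the lock-free-to-iterated reduction of~\cite{GR10-opodis} applied to the global views $\Car(v,\s)$ delivered at the end of each $\R_\A$ iteration, with monotone sequence numbers so that processes that have already produced a task output (and henceforth feed $\bot$ to $\R_\A$) cannot block the active ones. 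Each set-consensus object is simulated in parallel by the leader-election map $\mu_Q$ on the set $Q$ of still-active potential participants: Property~\ref{prop:muQ_validity} gives validity, Property~\ref{prop:muQ_robustness} lets a process evaluate $\mu_Q$ from its local view of $Q$ alone, and Property~\ref{prop:muQ_agreement}, combined with the rule that a process forces every process it has observed to complete its first simulated write before it returns, yields $\alpha$-agreement against the simulated participation. Lemma~\ref{lem:Validity} (safety) and Lemma~\ref{lem:Progress} (liveness --- the active process with the smallest IS view always makes progress) then certify that the simulation faithfully implements the $\alpha$-set-consensus model in $\R_\A^*$.

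Chaining the reductions finishes the proof: a task solvable in the $\A$-model is solvable in the $\alpha$-model (Theorem~\ref{th:adv:task}), hence in the $\alpha$-set-consensus model (Theorem~\ref{read/writeAndConsensus}), hence in $\R_\A^*$ via the simulation; with the first direction this gives the equivalence. The real difficulty lies entirely in the converse, and within it the delicate point is Property~\ref{prop:muQ_agreement}: showing that $\mu_Q$ never produces more distinct leaders than the agreement power of the observed carrier allows is exactly where the construction of $\R_\A$ --- the trade-off between forbidding large contention simplices and forcing critical simplices to ``go first,'' encoded by the concurrency map --- must be exploited, and one must also be careful that the simulated participation is driven high enough before any decision is committed so that the agreement guarantee is sound for the simulated execution rather than for a single $\R_\A$ iteration.
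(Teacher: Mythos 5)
Your proposal is correct and mirrors the paper's own argument: one direction is exactly Theorem~\ref{Affine:Thm:RAtoA} (built on Theorem~\ref{Thm:RalphaSimulation} and Theorem~\ref{th:adv:task}), and the other chains Theorems~\ref{th:adv:task} and~\ref{read/writeAndConsensus} with the $\R_\A^*$ simulation of Section~\ref{sec:alpha}, certified by Lemmata~\ref{lem:Validity} and~\ref{lem:Progress} via the $\mu_Q$ properties. Nothing is missing; your closing remark that the weight falls on Property~\ref{prop:muQ_agreement} and on forcing high enough simulated participation before committing decisions accurately reflects where the substance of the proof lies.
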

We thus obtain the following generalization of the ACT~\cite{HS99}:

\begin{theorem}{[Fair Asynchronous Computability Theorem [FACT]]}\label{ref:fact}
A task $T=(\I,\O,\Delta)$ is solvable in the fair adversarial $\A$-model
if and only if there exists a natural number $\ell$ and a simplicial 
map~$\phi: \R_{\A}^{\ell}(\I) \rightarrow \O$ carried by $\Delta$.
\end{theorem}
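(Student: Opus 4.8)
The plan is to derive the Fair Asynchronous Computability Theorem (Theorem~\ref{ref:fact}) as an essentially formal consequence of the equivalence already established in Theorem~\ref{ref:main}, namely that a task is solvable in the adversarial $\A$-model if and only if it is solvable in $\R_\A^*$. The only remaining work is to translate ``solvable in $\R_\A^*$'' into the combinatorial statement about the existence of a simplicial map $\phi: \R_\A^\ell(\I) \to \O$ carried by $\Delta$. This translation is the standard bridge between operational solvability in an iterated affine model and its topological characterization, and it mirrors exactly the argument that turns the operational ACT into its combinatorial form.

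First I would spell out the ``only if'' direction. Suppose $T=(\I,\O,\Delta)$ is solvable in the fair adversarial $\A$-model. By Theorem~\ref{ref:main} it is solvable in $\R_\A^*$. A protocol solving $T$ in $\R_\A^*$ assigns, to each finite IIS run that is a legal prefix of $\R_\A^*$ and that starts from an input simplex of $\I$, an output value for each process that has decided. By K\"onig's lemma (as noted in the introduction, the affine model $L^*$ is compact, so bounded termination holds), there is a uniform bound $\ell$ such that in every run all processes decide within $\ell$ iterations of $\R_\A$. Restricting the protocol to exactly $\ell$ iterations, the decision function becomes a color-preserving vertex map $\phi$ from the vertices of $\R_\A^\ell(\I)$ — i.e.\ $\ell$ iterations of the affine task applied to the input complex — into the vertices of $\O$. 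That $\phi$ sends simplices to simplices follows because the decided values in any single run form an output simplex of $\O$, and every simplex of $\R_\A^\ell(\I)$ is realized by some run; that $\phi$ is carried by $\Delta$ follows from the task specification: in a run with input simplex $\sigma$, the decided simplex lies in $\Delta(\sigma)$, and $\Car$ of an output vertex in $\R_\A^\ell(\sigma)$ is contained in $\sigma$.

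For the ``if'' direction, suppose such an $\ell$ and $\phi: \R_\A^\ell(\I)\to\O$ carried by $\Delta$ exist. Then the following protocol solves $T$ in $\R_\A^*$: the processes run $\ell$ iterations of the affine task $\R_\A$ starting from their inputs, obtaining a vertex $v$ of $\R_\A^\ell(\I)$ each, and output $\phi(v)$. By the definition of the simplex-agreement/affine task, the vertices obtained in any run form a simplex of $\R_\A^\ell(\I)$ whose carrier in $\I$ is contained in the participating input simplex; applying $\phi$, which is simplicial and carried by $\Delta$, yields an output simplex of $\O$ compatible with the input via $\Delta$. Hence $T$ is solvable in $\R_\A^*$, and by Theorem~\ref{ref:main} it is solvable in the $\A$-model. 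Combining the two directions gives the theorem.

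I expect the main obstacle to be the precise bookkeeping around partial participation and the role of $\bot$: the affine task $\Delta(\sigma) = L\cap\Chr^\ell(\sigma)$ may be empty for small $\sigma$, so one must argue that the simplicial map is only required on the subcomplex of $\R_\A^\ell(\I)$ actually reachable, and that the ``participating set must grow'' clause does not obstruct termination — this is exactly where compactness of $\R_\A^*$ together with the liveness of the simulation (Lemma~\ref{lem:Progress}, inherited through Theorem~\ref{ref:main}) is invoked. A secondary subtlety is confirming that the uniform round bound $\ell$ extracted from K\"onig's lemma is independent of the input simplex; since $\I$ is finite this is immediate, but it should be stated. None of this requires new machinery beyond what is already assembled, so the proof is short.
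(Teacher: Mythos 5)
Your proof is correct and takes essentially the same route the paper intends: the paper presents Theorem~\ref{ref:fact} as an immediate consequence of Theorem~\ref{ref:main} together with the standard translation between solvability in a compact iterated model and the existence of a simplicial decision map, which is exactly what you spell out via K\"onig's lemma and the compactness of $\R_\A^*$ noted in the introduction. Your remarks about the uniform round bound $\ell$ (using finiteness of $\I$) and about the reachable subcomplex when participation is partial are the right subtleties, and they are handled the same way implicitly in the paper.
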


%======================
\section{Related work}
\label{sec:related}
%======================

Inspired by the model of \emph{dependent failures} proposed by 
Junqueira and Marzullo~\cite{JM07-cores}, Delporte et al.~\cite{DFGT11} 
suggested the notion of adversaries and showed that 
adversaries having the same set consensus power agree 
on the set of colorless tasks they solve.

Herlihy and Shavit~\cite{HS99} proposed a characterization 
of wait-free task computability through the existence of 
a simplicial map from a subdivision of the input complex of 
a task $\I$ to its output complex~$\O$. (The reader is referred 
to~\cite{HKR14} for a thorough discussion of the use of 
combinatorial topology in distributed computability.)
Herlihy and Rajsbaum~\cite{HR12} studied colorless task computability
in the special case of \emph{superset-c7-TRlosed} adversaries.
They show that the protocol complex of a superset-closed adversary 
with \emph{minimal core size} $c$ is $(c-2)$-connected.
This result, obtained via an iterative application of the Nerve lemma, 
gives a combinatorial characterization of superset-closed adversaries.
The characterization only applies to colorless tasks, 
and it does not allow us to express the adversary in an affine way.

Gafni et al.~\cite{GKM14} introduced the notion of an affine task
and characterized task computability in \emph{iterated} adversarial models 
via infinite subdivisions of input complexes, assuming a limited notion of 
solvability that only guarantees outputs to ``fast'' 
processes~\cite{Gaf98,BGK14} (i.e., ``seen'' by every other process infinitely often).
The liveness property defined in this paper for iterated models
guarantees outputs for \emph{every} process, which allowed us to establish 
a task-computability equivalence with conventional non-iterated models. 

Saraph et al.~\cite{SHG16} gave a compact combinatorial characterization of 
$t$-resilient task computability. Note that $\A_{t-\mathit{res}}$ is a superset-closed 
(and thus fair) adversary. Our solution of the affine task $\R_{\A}$ 
in the $\alpha$-model is inspired by the $t$-resilient solution of 
$\R_{t-\mathit{res}}$ in~\cite{SHG16}.   
Gafni et al.~\cite{GHKR16} presented affine tasks for the model of 
$k$-set consensus and, thus, $k$-concurrency and $k$-obstruction-freedom, 
which can be expressed as a symmetric and thus fair adversary.

The notions of agreement functions and a fair adversaries 
were introduced by the first two authors in~\cite{KR17}.
One can determine the agreement function of any given adversary
using the formula suggested earlier for the set consensus power~\cite{GK10}. 
It has been shown in~\cite{KR17} that agreement functions 
encode enough information to characterize the task computability 
of any fair adversary. 

A short version of this paper appeared as a conference 
brief announcement~\cite{KRH17-BA}, and as an extended 
version without formal proofs and shorten explanation~\cite{KRH18}.

\section{Concluding remarks}
\label{sec:disc}
This paper generalizes all existing topological characterizations of distributed 
computing models~\cite{HS99,HR12,GKM14,GHKR16,SHG16}.
It applies to all tasks (not necessarily colorless) and all {\fair}
adversarial models (not necessarily $t$-resilience or $k$-obstruction-freedom).
Just as the wait-free characterization~\cite{HS99} implies that 
the $\IS$ task captures the wait-free model, our characterization 
equates any fair adversary with a (compact) affine task 
embedded in the second degree of the standard chromatic subdivision.  

Interestingly, unlike~\cite{SHG16}, we cannot rely on the
\emph{shellability}~\cite{HKR14} (and, thus, link-connectivity) of the affine task.
Link-connectivity of a simplicial complex $\C$ allows us to work 
in the \emph{point set} of its geometrical embedding $|\C|$ and 
use continuous maps (as opposed to simplicial maps that maintain the simplicial structure).
For example, the existence of a continuous map from $|\R_{\A_{t-\mathit{res}}}|$ to
any $|\R_{\A_{t-\mathit{res}}}^k|$ implies that $\R_{\A_{t-\mathit{res}}}$ indeed captures the
general task computability of $\A_{t-\mathit{res}}$~\cite{SHG16}.
In general, however, the existence of a continuous map onto $\C$ only allows us to
converge on a \emph{single} vertex~\cite{HKR14}.
If $\C$ is not link-connected, converging on one vertex allows
us to compute an output only for a single process, and not more. 
Unfortunately, only very special adversaries, such as $\A_{t-\mathit{res}}$, 
have link-connected counterparts (see, e.g., the affine task corresponding 
to $1$-obstruction-freedom in Figure~\ref{fig:1-OF}). 
Instead of relying on link-connectivity, this paper takes an explicit 
algorithmic way of showing that iterations of $\R_\A$ simulate~$\A$.   
An interesting question is to which extent point-set topology
and continuous maps can be applied in affine characterizations.

Given that some models out of this class cannot be grasped by agreement functions 
(see~\cite{KR17} for examples), going beyond fair adversarial models is an important challenge.
In particular, we should be able to account for models in which \emph{coalitions} of participants 
can achieve better levels of set consensus than the whole set.  
Nailed down, this may allow us to compactly capture all ``natural'' models~\cite{GHKR16}, 
such as, e.g., generic adversarial models or the \emph{set consensus collections} models~\cite{DFGK16} 
for which only special cases of $k$-set consensus~\cite{GHKR16} and 
$k$-test-and-set~\cite{KR18-TR} have been, in this sense, understood so far.

\bibliographystyle{abbrv}
\def\noopsort#1{} \def\No{\kern-.25em\lower.2ex\hbox{\char'27}}
  \def\no#1{\relax} \def\http#1{{\\{\small\tt
  http://www-litp.ibp.fr:80/{$\sim$}#1}}}

\newpage

\appendix 

%=======================================
\section{Simplicial complexes}
\label{app:topprimer}
%=======================================

We recall now several notions from combinatorial topology. For more detailed coverage of the topic please refer
to~\cite{Spanier,HKR14}.

A {\em simplicial complex} is a set $V$, together with a collection $C$ of finite non-empty subsets of $V$ such
that:
\begin{enumerate}
\item For any $v \in V$, the one-element set $\{v\}$ is in $C$;
\item If $\sigma \in C$ and $\sigma' \subseteq \sigma$, then $\sigma' \in C$.
\end{enumerate}

The elements of $V$ are called {\em vertices}, and the elements of $C$ are called a {\em simplices}. We usually
drop $V$ from the notation, and refer to the simplicial complex as $C$.

%A simplicial complex $C$ is called {\em finite} if the collection $C$ is finite. A weaker notion is {\em
%locally finite}: $C$ is said to be locally finite if every vertex of $C$ belongs to only finitely many simplices
%in $C$. For simplicity, we will assume that our complexes are locally finite.

A subset of a simplex is called a {\em face} of that simplex.

A {\em sub-complex} of $C$ is a subset of $C$ that is also a simplicial complex.

The {\em dimension} of a simplex $\sigma \in C$ is its cardinality minus one. The $k$-skeleton of a complex $C$,
denoted $\Skel^k C$, is the sub-complex formed of all simplices of $C$ of dimension $k$ or less.

A simplicial complex $C$ is called {\em pure} of dimension $n$ if $C$ has no simplices of dimension $> n$, and
every $k$-dimensional simplex of $C$ (for $k < n$) is a face of an $n$-dimensional simplex of $C$.

%Given a simplex $\sigma \in C$, we denote by $\st \sigma$ the {\em open star} of $\sigma$, that is, the set of
%all simplices in $C$ that have $\sigma$ as a face. The {\em closed star} of $\sigma$, denoted $\St \sigma$, is
%the smallest simplicial complex that contains $\st \sigma$. The difference $(\St \sigma) \setminus (\st \sigma)$
%is called the {\em link} of $\sigma$.

Let $A$ and $B$ be simplicial complexes. A map $f: A \to B$ is called {\em simplicial} if it is induced by a map
on vertices; that is, $f$ maps vertices to vertices, and for any $\sigma \in A$, we have $$ f(\sigma) =
\bigcup_{v \in \sigma} f(\{v\}).$$ A simplicial map $f$ is called {\em non-collapsing} (or {\em
dimension-preserving}) if $\dim f(\sigma) = \dim \sigma$ for all
$\sigma \in A$.

A map $\Phi: A \rightarrow 2^B$ (mapping simplices of $A$ to
sub-complexes of $B$) is called \emph{carrier} if for all $\tau,\sigma
\in A$, we have
$\Phi(\tau\cap\sigma)\subseteq\Phi(\tau)\cap\Phi(\sigma)$.
A simplicial map $\phi: A\rightarrow B$ is said to be \emph{carried by
  a carrier map $\Phi:A\to2^B$} if for all $\sigma\in A$,
$\phi(\sigma) \subset \Phi(\sigma)$.

Any simplicial complex $C$ has an associated {\em geometric realization} $|C|$, defined as follows. Let $V$ be
the set of vertices in $C$. As a set, we let $C$ be the subset of $[0,1]^V = \{\alpha : V \to [0,1]\}$
consisting of all functions $\alpha$ such that $\{ v \in V \mid \alpha(v) > 0 \} \in C$ and $\sum_{v \in V}
\alpha(v) = 1$.
For each $\sigma \in C$, we set $|\sigma| = \{ \alpha \in |C| \mid \alpha(v) \neq 0 \Rightarrow v \in \sigma
\}.$ Each $|\sigma|$ is in one-to-one correspondence to a subset of $\R^n$ of the form $\{(x_1, \dots, x_n) \in
[0,1]^n \mid \sum x_i = 1\}.$ We
put a metric on $|C|$ by $d(\alpha, \beta) = \sum_{v \in V} |\alpha(v) - \beta(v)|.$ 

%Given a simplicial map $f: A \to B$, there is an associated continuous, piecewise linear map $|f|: |A| \to
%|B|$, defined by the formula
%$$ |f|(\alpha)(v') = \sum_{f(v) = v'} \alpha(v).$$

A non-empty complex $C$ is called {\em $k$-connected} if, for each $m\leq k$, any continuous map of the
$m$-sphere into $|C|$ can be extended to a continuous map over the $(m+1)$-disk.

A {\em subdivision} of a simplicial complex $C$ is a simplicial complex $C'$ such that:
\begin{enumerate}
\item The vertices of $C'$ are points of $|C|$.

\item For any $\sigma' \in C'$, there exists $\sigma \in C$ such that $\sigma' \subset |\sigma|$.

\item The piecewise linear map $|C'| \to |C|$ mapping each vertex of $C'$ to the corresponding point of $C$ is a
homeomorphism.
\end{enumerate}

%In particular, every complex $C$ admits a {\em barycentric
%  subdivision} $\Bary(C)$, defined as follows. The vertices of
%$\Bary(K)$ are the barycenters of the simplices of $C$ (in the
%geometric realization). The simplices of $\Bary(K)$ correspond to
%ordered sequences $(\sigma_0, \dots, \sigma_m)$ of simplices of $C$,
%where $\sigma_i$ is a face of $\sigma_{i+1}$; the barycenters of
%$\sigma_i$ are then the vertices of the  corresponding simplex in $\Bary(K)$.

%By iterating this construction $k$ times we obtain the $k$th barycentric subdivision, $\Bary^k(C)$.

\myparagraph{Chromatic complexes.}
We now turn to the chromatic complexes used in distributed computing, and recall some notions from \cite{HS99}.

Fix $n \geq 0$. The {\em standard $n$-simplex} $\s$ has $n+1$ vertices, in one-to-one correspondence with $n+1$
{\em colors} $0, 1, \dots, n$. A face $\t$ of $\s$ is specified by a collection of vertices from $\{0, \dots,
n\}$. We view $\s$ as a complex, with its simplices being all possible faces $\t$.
%Note that the open star of a
%face $\t$ is $\st \t = \{\t' \mid \t \subseteq \t' \},$ while the
%closed star of any face is the whole simplex $\s$.

A {\em chromatic complex} is a simplicial complex $C$ together with a non-collapsing simplicial map $\chi: C \to
\s$. Note that $C$ can have dimension at most $n$. We usually drop $\chi$ from the notation. We write $\chi(C)$
for the union of $\chi(v)$ over all vertices $v \in C$. Note that if $C' \subseteq C$ is a sub-complex of a
chromatic complex, it inherits a chromatic structure by restriction.

In particular, the standard $n$-simplex $\s$ is a chromatic complex, with $\chi$ being the identity.

Every chromatic complex $C$ has a {\em standard chromatic subdivision} $\Chr C$. Let us first define $\Chr \s$
for the standard simplex $\s$. The vertices of $\Chr \s$ are pairs $(i, \t)$, where $i \in \{0,1 ,\dots, n\}$
and $\t$ is a face of $\s$ containing $i$. We let $\chi(i, \t) = i$. Further, $\Chr s$ is characterized by its
$n$-simplices; these are the $(n+1)$-tuples $((0,\t_0), \dots, (n, \t_n))$ such that:
\begin{enumerate}[(a)]
\item For all $\t_i$ and $\t_j$, one is a face of the other;
\item If $j \in \t_i$, then $\t_j \subseteq \t_i$. 
\end{enumerate} 
The geometric realization of $\s$ can be taken to be the set $\{\x=(x_0, \dots, x_n) \in [0,1]^{n+1} \mid \sum
x_i = 1\},$ where the vertex $i$ corresponding to the point $\x^i$ with $i$ coordinate $1$ and all others
coordinate $0$. Then, we can identify a vertex $(i, \t)$ of $\Chr \s$ with the point

\begin{small}\[
\frac{1}{2k-1} \x_i + \frac{2}{2k-1} \Bigl( \sum_{\{j \in \t \mid j \neq i\}} \x_j \Bigr) \ \in |\s| \subset
\R^{n+1},
\]\end{small}

where $k$ is the cardinality of $\t$. 
Thus, $\Chr \s$ becomes a subdivision of $\s$ and the geometric realizations are identical: $|\s|=|\Chr \s|$. The standard chromatic subdivision, $\Chr\s$, is illustrated for a 3-process system in Figure~\ref{Fig:scs}(a).

Next, given a chromatic complex $C$, we let $\Chr C$ be the subdivision of $C$ obtained by replacing each
simplex in $C$ with its chromatic subdivision. Thus, the vertices of $\Chr C$ are pairs $(p, \sigma)$, where $p$
is a vertex of $C$ and $\sigma$ is a simplex of $C$ containing $p$. If we iterate this process  $m$ times we
obtain the $m\th$ chromatic subdivision, $\Chr^m C$.

Let $A$ and $B$ be chromatic complexes. A simplicial map $f: A \to B$ is called a {\em chromatic map} if for all
vertices $v \in A$, we have $\chi(v) = \chi(f(v))$. Note that a chromatic map is automatically non-collapsing. A
chromatic map has chromatic subdivisions $\Chr^m f: \Chr^m A \to \Chr^m B$. Under the identifications of
topological spaces $|A| \cong |\Chr^m A|, |B| \cong |\Chr^m B|,$ the continuous maps $|f|$ and $|\Chr^m f|$ are identical.

A simplicial map $\phi$ is carried by the carrier
map $\Delta$ if $\phi(\sigma) \subset \Delta(\sigma)$ for every simplex $\sigma$ in their
domain.

\end{document}